\newtheorem{Theorem}{Theorem}[section]
\newtheorem{lemma}[Theorem]{Lemma}
\newcommand{\hz}{\frac{1}{H_0}}
\def\qed{\hfill $\Box$} 
\title{\huge Mass Renormalization in the Nelson Model }
\author{\textbf{Susumu Osawa}\\
Department of Mathematics, Hokkaido University\\
Sapporo, Hokkaido 060-0810, Japan}
 \author{Fumio Hiroshima\thanks{
Faculty of Mathematics, Kyushu Univesity,  
Fukuoka 819-0385, Japan, 
e-mail:hiroshima@math.kyushu-u.ac.jp}
and 
Susumu Osawa\thanks{
Faculty of Science, Department of Mathematics, Hokkaido University,
Sapporo, Hokkaido 060-0810, Japan
e-mail:
osawa@math.sci.hokudai.ac.jp}}
\date{}
\begin{document}
\maketitle
\setlength{\baselineskip}{12pt}

\begin{abstract}
The asymptotic behavior of the effective mass $m_{\rm eff}(\Lambda)$  of the so-called 
Nelson model in  quantum field theory is considered, where $\Lambda$ is an ultraviolet cutoff  parameter of the model. 
Let $m$ be the bare mass of the model.
It is shown that for sufficiently small coupling constant $|\alpha|$ of the model, 
$m_{{\rm eff}}(\Lambda)/m$ can be expanded as 
$m_{{\rm eff}}(\Lambda)/m= 1+\sum_{n=1}^\infty a_n(\Lambda) \alpha^{2n}$.  
A physical folklore  is  that 
$a_n(\Lambda)=O( [\log \Lambda]^{(n-1)})$ as $\Lambda\to \infty$. 
It is rigorously shown that 
$$0<\lim_{\Lambda\to\infty}a_1(\Lambda)<C,\quad   C_1\leq \lim_{\Lambda\to\infty}a_2(\Lambda)/\log\Lambda\leq C_2$$
with some  constants $C$, $C_1,$ and $C_2$.
\end{abstract}


\section{Introduction and main results}
The model considered in this paper is the so-called Nelson model~\cite{Nelson}, which describes a nonrelativistic nucleon with bare mass $m>0$ interacting with a quantized scalar field with mass $\nu>0$.  
The nucleon is governed by a Schr\"odinger operator. 
Let us first define the Nelson Hamiltonian. We use relativistic unit and employ the total momentum representation. Then the Hilbert space of states is the boson Fock space over $L^2(\mathbb{R}^3)$ which is given by 
$$  \mathcal{F} = \oplus_{n=0}^{\infty}[\otimes_s^n L^2(\mathbb{R}^3)], $$ 
where $\otimes_s^n$ denotes the $n$-fold symmetric tensor product and $\otimes_s^0 L^2(\mathbb{R}^3) = \mathbb{C}.$ Then $\Phi\in\mathcal{F}$ can be written as $\Phi=\{ \Phi^{(0)},  \Phi^{(1)}, \Phi^{(2)}, \dots \}$, where $\Phi^{(n)}\in\otimes_s^n L^2(\mathbb{R}^3)$.
The Fock vacuum $\Omega \in \mathcal{F}$ 
is defined by $\Omega = \{1, 0, 0, \dots \}$. Let $a(f),\; f \in L^2(\mathbb{R}^3)$, be   the annihilation operator and $a(f)^{\ast},\; f \in L^2(\mathbb{R}^3)$, the creation operator on $\mathcal{F}$, which are defined by 
\begin{align*}
& D(a(f)^{\ast})=\{ \Psi \in \mathcal{F}| \sum_{n=0}^{\infty}(n+1)\| S_{n+1}(f \otimes \Psi^{(n)})\| ^2_{\otimes^n L^2(\mathbb{R}^3)}<\infty \}, \\
& (a(f)^{\ast}\Psi)^{(0)} =0, \\
& (a(f)^{\ast}\Psi)^{(n+1)} = \sqrt{n+1}S_{n+1}(f \otimes \Psi^{(n)}), 
\end{align*} 
and $a(f)=(a(f)^{\ast})^{\ast}$, where $S_n$ is the symmetrizer, $D(X)$ the domain of operator $X$, and $\| \cdot\| _{\mathcal{K}}$ the norm on $\mathcal{K}$. They satisfy canonical commutation relations as follows:
$$  [ a(f), a(g)^{\ast}]=(f, g), \quad [a(f), a(g)]=0, \quad   [a(f)^{\ast}, a(g)^{\ast}]=0 $$ 
on a suitable dense domain, where $[X, Y]=XY-YX$ and $(\cdot , \cdot )$ is the inner product on $\mathcal{K}$ (linear in the second variable). Let $T$ be a self-adjoint operator on  $L^2(\mathbb{R}^3)$. Then we define the self-adjoint operator ${\rm d}\Gamma(T)$ on $\mathcal{F}$ by
$ {\rm d}\Gamma(T)= \oplus_{n=0}^{\infty}T^{(n)}$, 
where 
\begin{align*}
T^{(n)}=\overline{\left(\sum_{j=1}^{n}I\otimes\cdots\otimes I\otimes \stackrel{j\rm{th}}{\breve{T}}\otimes I\otimes \cdots \otimes I\right)\lceil{\stackrel{n}{\hat\otimes} D(T)}}
\qquad (n \ge 1)
\end{align*}
with 
$T^{(0)}=0$.  
Here, for a closable operator $T,$ $\overline{T}$ denotes the closure of $T.$ The operator ${\rm d}\Gamma(T)$ is called the second quantization of $T$. The free energy of the scalar field is given by $H_{\rm f}= {\rm d}\Gamma(\omega)$, where $\omega(k)=\sqrt{|k|^2+\nu ^2}$ $(k=(k_1, k_2, k_3) \in \mathbb{R}^3 , \;  \nu > 0)$ is considered as a multiplication operator on $L^2(\mathbb{R}^3)$. Similarly the momentum of the scalar field is given by $P_{{\rm f}\mu}= {\rm d}\Gamma (k_{\mu})\; (\mu=1, 2, 3)$. The coupling of the nucleon and a scalar field is mediated through the Segal field operator $\Phi_s(g)$ defined by
 $$  \Phi_s(g)=\frac{1}{\sqrt{2}}(a(g)+a(g)^{\ast}), $$ 
where $g$ is a cutoff function given by 
$
g(k)=\frac{\hat{\phi}(k)}{\sqrt{\omega(k)}}$. 
Here $\hat{\phi}$ is the form factor with infrared cutoff $\kappa>0$ and ultraviolet cutoff $\Lambda>0$, which are defined by
\begin{align*}
  \hat{\phi}(k)= \left\{ \begin{array}{ll}
    0   &  |k| <   \kappa,  \\
    (2\pi)^{-\frac{3}{2}} & \kappa \le |k| \le \Lambda, \\
    0   & |k| >   \Lambda.
  \end{array} \right. 
\end{align*}
The Nelson Hamiltonian with total momentum $p\in\mathbb{R}^3$  is given by a self-adjoint operator on $\mathcal{F}$ as follows:
$$ H(p)=\frac{1}{2m}(p-P_{\rm f})^2+H_{\rm f}+\alpha \Phi_s(g), $$ 
where $\alpha \in \mathbb{R}$ is a coupling constant. Let $E(p, \alpha)$ be the energy-momentum relation (the infimum of the spectrum  $\sigma (H(p))$) defined by
\[ E(p, \alpha)= \inf\sigma(H(p)).  \]
Then the effective mass $m_{\rm eff}=m_{\rm eff}(\Lambda)$ is defined by 
$$\frac{1}{m_{\rm eff}}=\frac{1}{3}\Delta_{p}E(p, \alpha)\upharpoonright_{p=0} .$$ 
Here $\Delta_{p}$ denotes the three dimensional Laplacian in the variable $p.$
We are concerned with the asymptotic behavior of $m_{\rm eff}$ as the ultraviolet cutoff goes to infinity. It is however a subtle problem. Removal of the ultraviolet cutoff $\Lambda$ through mass renormalization means finding sequences $\{m\}$ and $\{\Lambda \}$ such that $m \to 0$, $ \Lambda\to\infty$, and $m_{\rm eff}$ converges.
Since we can see that $m_{\rm eff}/m$ is a function of $\Lambda/m$, 
to achieve this, we want to find constants $0<\gamma <1$ and $0<b_0<\infty$ such that 
\begin{align}
\lim_{\Lambda\to\infty}\frac{m_{\rm eff}/m}{(\Lambda/m)^{\gamma}}=b_0. \label{eq:asymptoticmeff}
\end{align}
If we succeed in finding constants $\gamma$ and $b_0$ such as in (\ref{eq:asymptoticmeff}), scaling the bare mass $m$ as 
$$m=\frac{1}{\Lambda^{\gamma/(1-\gamma)}}M,$$ 
where $M=(\frac{m^{\ast}}{b_0})^{1/(1-\gamma)}$ with an arbitrary positive  constant $m^\ast$, 
we have 
\begin{align*}
\lim_{\Lambda\to\infty}m_{\rm eff}(\Lambda)=m^{\ast}.
\end{align*}
The mass renormalization is, however, a subtle problem, and unfortunately, we cannot yet find constants $\gamma$ and $b_0$ such as in (\ref{eq:asymptoticmeff}). For that reason we turn to perturbartive renormalization, by which we try to guess the proper value of $\gamma.$
Main results obtained in this paper are summarized as follows.

\noindent
{\bf Theorem \ref{1}}\ 
Let $\kappa>0$. Then $m_{\rm eff}$ is an analytic function of $\alpha^2$ and can be expanded in the following power series for sufficiently small $|\alpha|$:
$$  \frac{m_{\rm eff}}{m}=1+\sum_{n=1}^{\infty}a_n(\Lambda)\alpha ^{2n}. $$  

\noindent
{\bf Theorem \ref{2}}\
There exists a strictly positive constant $C$ such that
$$ \lim_{\Lambda\to\infty}a_1(\Lambda)=C.$$

\noindent
{\bf Theorem \ref{3}}\
There exist some constants $C_1$ and $C_2$ such that
$$ 
C_1 \le \lim_{\Lambda\to\infty}\frac{a_2(\Lambda)}{\log\Lambda}\le C_2. 
$$ 

From Theorems \ref{2} and \ref{3}, if 
$\displaystyle D= \lim_{\Lambda\to\infty}a_2(\Lambda)/\log\Lambda>0,$ it is suggested that $\gamma=D\alpha^2/C.$ So, the mass of  the Nelson model is renormalizeble  for sufficiently small $|\alpha|$.

The effective mass and energy-momentum relation have been studied mainly in nonrelativistic electrodynamics.  Spohn~\cite{Spohn1} investigates the upper and lower bound of the effective mass of the polaron model from a functional integral point of view. Hiroshima and Spohn~\cite{HiroshimaSpohn} study a perturbative mass renormalization including fourth order in the coupling constant in the case of a spinless electron. Hiroshima and Ito~\cite{HiroshimaIto1, HiroshimaIto2} study it in the case of an electron with spin $1/2$.  Bach, Chen, Fr\"ohlich, and Sigal~\cite{Bach} show that the energy-momentum relation is equal to the infimum of the essential spectrum of the Hamiltonian for $\kappa \ge 0.$ Fr\"ohlich and Pizzo~\cite{FrohlichPizzo} investigate energy-momentum relation when infrared cutoff goes to~0.

\section{Analytic properties}
In order to investigate the effective mass in a perturbation theory we have to 
check the analytic properties of $E(p,\alpha)$. 
\subsection{Analytic family in the sense of Kato}
\begin{lemma}\label{Katofamily}
$H(p)$ is an analytic family in the sense of Kato.
\end{lemma}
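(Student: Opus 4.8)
The plan is to regard the total momentum $p$ as a complex variable $z\in\mathbb{C}^3$ and to show that $z\mapsto H(z)$ is an analytic family of type (A): a family of closed operators on a single $z$-independent domain $\mathcal D$ such that $z\mapsto H(z)\Psi$ is a vector-valued analytic function of $z$ for each $\Psi\in\mathcal D$; such a family is in particular analytic in the sense of Kato. Expanding the kinetic term I would write
\begin{align*}
H(z)=K-\frac1m\,z\cdot P_{\rm f}+\frac{z^2}{2m},\qquad K:=\frac1{2m}P_{\rm f}^2+H_{\rm f}+\alpha\Phi_s(g),
\end{align*}
with $z\cdot P_{\rm f}=\sum_{\mu=1}^3 z_\mu P_{{\rm f}\mu}$ and $z^2=\sum_{\mu=1}^3 z_\mu^2$. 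First I would fix the reference operator $K=H(0)$ and its domain. Since $P_{{\rm f}\mu}={\rm d}\Gamma(k_\mu)$ and $H_{\rm f}={\rm d}\Gamma(\omega)$ are second quantizations of commuting multiplication operators they commute, so $\tfrac1{2m}P_{\rm f}^2+H_{\rm f}$ is nonnegative and self-adjoint on $\mathcal D:=D(P_{\rm f}^2)\cap D(H_{\rm f})$; adding the infinitesimally small symmetric perturbation $\alpha\Phi_s(g)$ keeps $K$ self-adjoint and bounded below on $\mathcal D$ by the Kato--Rellich theorem.

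The core of the argument is a pair of relative bounds. Because $g=\hat{\phi}/\sqrt\omega$ and $\hat{\phi}/\omega$ both lie in $L^2(\mathbb{R}^3)$ (the form factor is bounded with compact support and $\omega\ge\nu>0$), the standard creation/annihilation estimates give
\begin{align*}
\|\Phi_s(g)\Psi\|\le \tfrac1{\sqrt2}\big(2\,\|\hat{\phi}/\omega\|_2\,\|H_{\rm f}^{1/2}\Psi\|+\|\hat{\phi}/\sqrt\omega\|_2\,\|\Psi\|\big),
\end{align*}
so $\Phi_s(g)$ is $H_{\rm f}$-bounded with relative bound $0$; this justifies the self-adjointness of $K$ and shows $\Phi_s(g)$ is also $K$-bounded. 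Second, and this is the step needing care, I would control the momentum perturbation. Using $H_{\rm f}\ge0$ and then $\tfrac1{2m}P_{\rm f}^2+H_{\rm f}\le c_1(K+c_2)$ (a consequence of the infinitesimal bound on $\alpha\Phi_s(g)$) gives
\begin{align*}
\|z\cdot P_{\rm f}\Psi\|\le|z|\,\|P_{\rm f}\Psi\|=|z|\,\langle\Psi,P_{\rm f}^2\Psi\rangle^{1/2}\le |z|\,(2mc_1)^{1/2}\,\|(K+c_2)^{1/2}\Psi\|.
\end{align*}
Since $P_{\rm f}$ is here dominated only by $K^{1/2}$, I would upgrade this to relative bound $0$ by the interpolation $\|(K+c_2)^{1/2}\Psi\|^2\le\|\Psi\|\,\|(K+c_2)\Psi\|\le\tfrac\delta2\|(K+c_2)\Psi\|^2+\tfrac1{2\delta}\|\Psi\|^2$, obtaining $\|z\cdot P_{\rm f}\Psi\|\le \theta\|K\Psi\|+C_\theta\|\Psi\|$ with $\theta$ as small as desired; the term $z^2/(2m)$ is bounded.

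It then follows that $W(z):=-\tfrac1m z\cdot P_{\rm f}+\tfrac{z^2}{2m}$ is $K$-bounded with $K$-bound $<1$ for every $z\in\mathbb{C}^3$, so by the Kato--Rellich theorem $H(z)=K+W(z)$ is closed on the fixed domain $\mathcal D$ for all $z$, and self-adjoint for real $z$. Finally, for each fixed $\Psi\in\mathcal D$ the map
\begin{align*}
z\longmapsto H(z)\Psi=K\Psi-\frac1m\sum_{\mu}z_\mu P_{{\rm f}\mu}\Psi+\frac1{2m}\Big(\sum_{\mu}z_\mu^2\Big)\Psi
\end{align*}
is an $\mathcal F$-valued polynomial of degree two, hence entire. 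A common $z$-independent domain together with this vector-valued analyticity is exactly the definition of an analytic family of type (A), proving the lemma. The only genuine obstacle is the second relative bound: because $P_{\rm f}$ is dominated by $K^{1/2}$ rather than $K$ one cannot read off relative bound $0$ directly and must pass through the square-root interpolation above. I note that these same estimates, with $\alpha$ instead of $p$ promoted to a complex variable, show equally that $H(p)$ is an analytic family in the coupling constant, the form used for the expansion in Theorem \ref{1}.
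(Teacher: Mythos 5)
Your proof is correct, and at its core it is the same type-(A) argument the paper uses: fix a $p$-independent domain, establish relative bounds for the terms linear in the parameters, and note that $H(z)\Psi$ is a vector-valued polynomial in $z$. Two differences are worth recording. First, you take the reference operator to be $K=H(0)=\frac{1}{2m}P_{\rm f}^2+H_{\rm f}+\alpha\Phi_s(g)$ and perturb only in $p$, whereas the paper perturbs around $H_0=\frac{1}{2m}P_{\rm f}^2+H_{\rm f}$ and treats the coefficients $p_\mu$ and $\alpha$ simultaneously; the paper's bookkeeping gives joint analyticity in $(p,\alpha)$ near the origin in one stroke, which is what the subsequent lemma on $E(p,\alpha)$ and $\psi_g(p)$ and the double expansion in $\alpha$ actually use, while your version needs the closing remark about promoting $\alpha$ to a complex variable to recover this. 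Second, you interpolate $\|(K+c_2)^{1/2}\Psi\|^2\le\|\Psi\|\,\|(K+c_2)\Psi\|$ to push the relative bound of $z\cdot P_{\rm f}$ to zero, so your family is entire in $z\in\mathbb{C}^3$ for each fixed real $\alpha$; the paper is content with the fixed relative bound $\sqrt{2/m}$ obtained from $\|P_{{\rm f}\mu}\Psi\|^2\le 2m\|H_0^{1/2}\Psi\|^2$ and $\|H_0^{1/2}\Psi\|\le\|(H_0+1)\Psi\|$, which yields type (A) only for $|p|$ and $|\alpha|$ small---sufficient for everything that follows, since only derivatives at $p=0$, $\alpha=0$ are ever taken. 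Your creation/annihilation estimates for $\Phi_s(g)$ and your use of the strong commutativity of the nonnegative operators $P_{\rm f}^2$ and $H_{\rm f}$ to identify the domain and compare $H_{\rm f}$ with $H_0$ match the paper's.
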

\begin{proof}
We prove $H(p)$ is an analytic family of type (A).
We see that 
\[
H(p)=H_0+\sum_{\mu=1}^3p_{\mu}\frac{1}{2m}(p_{\mu}-2P_{{\rm f}\mu})+\alpha H_{\rm I},
\]
where 
$ H_0=\frac{1}{2m}P_{\rm f}^2+H_{\rm f}$ and $H_{\rm I}= \Phi_s(g)$. 
Hence all we have to do is to prove the following facts.
\begin{enumerate}
\item[(a)]
$D(H_0)\subset \cap_{\mu=1}^3D(P_{{\rm f}\mu})\cap D(H_{\rm I}).$
\item[(b)]
There exist real constants $a_{\mu},$ $b_{\mu}$ $(\mu=1, 2, 3),$ $c,$ and $d$ such that for any $\Psi \in D(H_0)$
\begin{align*}
&\| \frac{1}{2m}(p_{\mu}-2P_{{\rm f}\mu})\Psi\| _{\mathcal{F}}\le a_{\mu}\| H_0\Psi\| _{\mathcal{F}}+b_{\mu}\| \Psi\| _{\mathcal{F}} \quad (\mu=1, 2, 3),\\
&\| H_{\rm I}\Psi\| _{\mathcal{F}}\le c\| H_0\Psi\| _{\mathcal{F}}+d\| \Psi\| _{\mathcal{F}}. \end{align*}
\end{enumerate}
We prove (a) at first. 
Since $\cap_{\mu=1}^3D(P_{{\rm f}\mu}^2)\subset D(P_{{\rm f}\mu})$, we have 
$D(H_0)=\cap_{\mu=1}^3D(P_{{\rm f}\mu}^2)\cap D(H_{\rm f})\subset\cap_{\mu=1}^3D(P_{{\rm f}\mu})$. 
Additionally, since
$ \| \omega^{-1/2}g\| _{L^2(\mathbb{R}^3)}
<\infty$, 
we have $g\in D(\omega^{-1/2}).$ Furthermore, since $\omega$ is a nonnegative and injective self-adjoint operator on $L^2({\mathbb R}^3)$, it follows that 
$D({\rm d}\Gamma(\omega)^{1/2})\subset D(a(g))\cap D(a(g)^{\ast})=D(H_{\rm I})$.  
Hence we have
$
D(H_0)\subset D({\rm d}\Gamma(\omega))\subset D({\rm d}\Gamma(\omega)^{1/2})$. 
Together with them, (a) is proven . Next we prove  (b).
Let $\Psi$ be an arbitrary vector in $D(H_0).$ Then we have
$$ \| \frac{1}{2m}(p_{\mu}-2P_{{\rm f}\mu})\Psi\| _{\mathcal{F}}\le \frac{|p_{\mu}|}{2m}\| \Psi\| _{\mathcal{F}}+\frac{1}{m}\| P_{{\rm f}\mu}\Psi\| _{\mathcal{F}}. $$ 
Since 
$
 \| P_{{\rm f}\mu}\Psi\| _{\mathcal{F}}^2\le2m\| H_0^{1/2}\Psi\| _{\mathcal{F}}^2$, 
we have 
$ \| H_0^{1/2}\Psi\| _{\mathcal{F}}^2\le\| (H_0+1)\Psi\| _{\mathcal{F}}^2$. 
Hence
\begin{align}
\| \frac{1}{2m}(p_{\mu}-2P_{{\rm f}\mu})\Psi\| _{\mathcal{F}}&\le \sqrt{\frac{2}{m}}\| H_0\Psi\| _{\mathcal{F}}+(\frac{|p_{\mu}|}{2m}+\sqrt{\frac{2}{m}})\| \Psi\| _{\mathcal{F}}.   \label{typeA1}
\end{align}
Since $D(H_0)\subset D({\rm d}\Gamma(\omega)^{1/2}),$ 
\begin{align*}
&\| a(g)\Psi\| _{\mathcal{F}}\le\| \omega^{-1/2}g\| _{L^2(\mathbb{R}^3)}
\| H_{\rm f}^{1/2}\Psi\| _{\mathcal{F}},\\
&
\| a(g)^{\ast}\Psi\| _{\mathcal{F}}\le\| \omega^{-1/2}g\| _{L^2(\mathbb{R}^3)}\| H_{\rm f}^{1/2}\Psi\| _{\mathcal{F}}+\| g\| _{L^2(\mathbb{R}^3)}\| \Psi\| _{\mathcal{F}} 
\end{align*}
hold. Hence
\begin{align*}
 \| H_{\rm I}\Psi\| _{\mathcal{F}}
 \le\sqrt{2}\| \omega^{-1/2}g\| _{L^2(\mathbb{R}^3)}\| H_{\rm f}^{1/2}\Psi\| _{\mathcal{F}}+\frac{1}{\sqrt{2}}\| g\| _{L^2(\mathbb{R}^3)}\| \Psi\| _{\mathcal{F}}.
\end{align*}
From 
triangle inequality, we have
$  \| H_{\rm f}^{1/2}\Psi\| _{\mathcal{F}}\le\| H_{\rm f}\Psi\| _{\mathcal{F}}+\| \Psi\| _{\mathcal{F}}$. 
In addition, 
\[
\| H_0\Psi\| _{\mathcal{F}}^2-\| H_{\rm f}\Psi\| _{\mathcal{F}}^2=\| \frac{1}{2m}P_{\rm f}^2\Psi\| _{\mathcal{F}}^2+\frac{1}{m}\Re(P_{\rm f}^2\Psi, H_{\rm f}\Psi). 
\] 
Since $P_{\rm f}^2$ and $H_{\rm  f}$ are strongly commutative and nonnegative self-adjoint operators on $\mathcal{F}$, $(P_{\rm f}^2\Psi, H_{\rm f}\Psi)\ge 0$ holds. Hence $\| H_{\rm f}\Psi\| _{\mathcal{F}}\le\| H_0\Psi\| _{\mathcal{F}}.$
Then we have
\begin{align}
\| H_{\rm I}\Psi\| _{\mathcal{F}}&\le\sqrt{2}\| \omega^{-1/2}g\| _{L^2(\mathbb{R}^3)}\| H_0\Psi\| _{\mathcal{F}}+(\sqrt{2}\| \omega^{-1/2}g\| _{L^2(\mathbb{R}^3)}+\frac{1}{\sqrt{2}}\| g\| _{L^2(\mathbb{R}^3)})\| \Psi\| _{\mathcal{F}}. \label{typeA2}
\end{align}
From (\ref{typeA1}) and (\ref{typeA2}), (b) is proven . Hence $H(p)$ is an analytic family of  type (A).  Since every analytic family of type (A) is an analytic family of in the sense of Kato, it is an analytic family in the sense of Kato.
\end{proof}
We denote the ground state of $H(p)$ by $\psi_g(p).$

\begin{lemma}
(1) 
$E(p,\alpha)$ is analytic in $p$ and $\alpha$ if $|p|$ and $|\alpha|$ are sufficiently small.
(2)
$\psi_g(p)$ is strongly analytic in $p$ and $\alpha$ if $|p|$ and $|\alpha|$ are sufficiently small.
\end{lemma}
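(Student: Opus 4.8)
The plan is to invoke the Kato--Rellich theory of analytic perturbations, exploiting Lemma~\ref{Katofamily}, which provides the type~(A) analytic-family property, now regarded as joint in the parameters \((p,\alpha)\). The natural base point is \((p,\alpha)=(0,0)\), where \(H(p)\) reduces to \(H_0=\frac{1}{2m}P_{\rm f}^2+H_{\rm f}\). First I would analyze the unperturbed spectrum. Since \(P_{\rm f}^2\) and \(H_{\rm f}\) are nonnegative and annihilate the Fock vacuum, \(H_0\Omega=0\), so \(0\in\sigma(H_0)\) with eigenvector \(\Omega\). Decomposing \(\mathcal F\) into \(n\)-boson sectors, \(H_0\) acts in the one-boson sector as multiplication by \(\frac{1}{2m}|k|^2+\omega(k)\), whose range is \([\nu,\infty)\), while the sectors with \(n\ge 2\) contribute spectrum \(\ge 2\nu\). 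Hence \(\sigma(H_0)=\{0\}\cup[\nu,\infty)\), so \(0\) is a \emph{simple, isolated} eigenvalue separated from the remainder of the spectrum by a gap of size \(\nu\).

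Next I would fix the circle \(\Gamma=\{z:\ |z|=\nu/2\}\), which lies in the resolvent set of \(H(0)|_{\alpha=0}\) and encloses only the eigenvalue \(0\), and form the Riesz projection
\[
P(p,\alpha)=\frac{1}{2\pi i}\oint_{\Gamma}\bigl(z-H(p)\bigr)^{-1}\,dz .
\]
Because \(H(p)\) is a type~(A) analytic family, the resolvent \((z-H(p))^{-1}\) is jointly analytic in \((p,\alpha)\) and \(z\) as long as \(z\) stays in the resolvent set; by continuity \(\Gamma\) remains in the resolvent set for \(|p|\) and \(|\alpha|\) small, so \(P(p,\alpha)\) is operator-norm analytic in \((p,\alpha)\). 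Since the rank of a norm-analytic family of projections is locally constant and \(\dim\operatorname{Ran}P(0,0)=1\), the projection \(P(p,\alpha)\) is rank one for small parameters; thus \(H(p)\) has exactly one eigenvalue inside \(\Gamma\), and it is simple.

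It then remains to identify this enclosed eigenvalue with the ground state energy \(E(p,\alpha)=\inf\sigma(H(p))\). Using the relative bounds of part~(b) of Lemma~\ref{Katofamily} together with the smallness of the extra kinetic terms \(\sum_{\mu}p_{\mu}\frac{1}{2m}(p_{\mu}-2P_{{\rm f}\mu})\), one obtains a uniform lower bound \(H(p)\ge-\varepsilon(p,\alpha)\) with \(\varepsilon(p,\alpha)\to 0\) as \((p,\alpha)\to(0,0)\); in particular \(\inf\sigma(H(p))>-\nu/2\) for small parameters. As \(H(p)\) is self-adjoint its spectrum is real, and the disk bounded by \(\Gamma\) captures exactly the spectrum in \((-\nu/2,\nu/2)\); since the bottom of the spectrum lies in this interval, it must coincide with the unique simple eigenvalue enclosed by \(\Gamma\). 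This proves (1): the reduction of \(H(p)\) to the one-dimensional range of \(P(p,\alpha)\) is an analytic scalar, namely \(E(p,\alpha)\), which one may write concretely as \(E(p,\alpha)=\langle\Omega,H(p)\psi_g\rangle/\langle\Omega,\psi_g\rangle\). For (2), set \(\psi_g=P(p,\alpha)\Omega\); this is strongly analytic because \(P(p,\alpha)\) is norm-analytic and \(\Omega\) is fixed, and \(\langle\Omega,\psi_g\rangle\to 1\) as \((p,\alpha)\to(0,0)\), so \(\psi_g\neq 0\) nearby.

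The main obstacle is precisely the third step: guaranteeing that the eigenvalue continued from \(0\) really is the infimum of the spectrum, i.e.\ that the gap does not close and no spectrum slips below \(\Gamma\) from above. This is where the uniform lower bound \(H(p)\ge-\varepsilon(p,\alpha)\) and the fixed unperturbed gap \(\nu\) are essential; by contrast, the existence and analyticity of \emph{some} eigenvalue branch is automatic from the analytic-family property. A secondary technical point concerns normalization: to keep \(\psi_g\) strongly \emph{analytic} rather than merely real-analytic, I would impose the intermediate normalization \(\langle\Omega,\psi_g\rangle=1\) instead of the \(L^2\)-normalization, since the latter involves a complex conjugation that would obstruct analytic continuation to complex \(p\) and \(\alpha\).
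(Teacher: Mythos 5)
Your proposal is correct and follows essentially the same route as the paper, which simply invokes Theorems XII.8 and XII.9 of Reed--Simon (regular analytic perturbation theory for the type~(A) family established in Lemma~\ref{Katofamily}); what you have written out is, in effect, the proof of those theorems in this concrete setting. The details you supply beyond the paper's bare citation --- that $\sigma(H_0)=\{0\}\cup[\nu,\infty)$, so $0$ is a simple isolated eigenvalue with gap $\nu>0$, and that the analytically continued eigenvalue really is $\inf\sigma(H(p))$ because $H(p)\ge-\varepsilon(p,\alpha)$ with $\varepsilon\to 0$ --- are precisely the hypotheses and the identification that the citation leaves implicit, and you verify them correctly.
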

\begin{proof}
From \cite[Theoren XII.9]{ReedSimon}, (1) follows, and 
from \cite[Theoren XII.8]{ReedSimon}, (2) follows.
\end{proof}

\subsection{Formula}
In this section we expand $m/m_{\rm eff}$ with respect to 
$\alpha$. 
\begin{lemma} \label{formula}
The ratio $m/m_{\rm eff}$ can be expressed as
\begin{align*}
\frac{m}{m_{\rm eff}}=1-\frac{2}{3}\sum_{\mu=1}^3\frac{(P_{{\rm f}\mu}\psi_g(0), {\psi_g}_{\mu}'(0))}{(\psi_g(0), \psi_g(0))},  
\end{align*}
where ${\psi_g}_{\mu}'(0)=s-{\partial p_{\mu}}\psi_g(p)\upharpoonright_{p=0}.$
\end{lemma}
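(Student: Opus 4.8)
The plan is to reduce the statement to a twofold differentiation of the energy--momentum relation and then to exploit the reflection symmetry of $H(0)$ to discard the terms that do not appear in the asserted formula. By the definition of $m_{\rm eff}$ we have $\frac{1}{m_{\rm eff}}=\frac13\sum_{\mu=1}^3\partial_{p_\mu}^2E(p,\alpha)\upharpoonright_{p=0}$, so it suffices to compute $\partial_{p_\mu}^2E(0,\alpha)$ and multiply by $m$. The analyticity of $E(p,\alpha)$ and the strong analyticity of $\psi_g(p)$ from the preceding lemma legitimize differentiating the eigenvalue equation $H(p)\psi_g(p)=E(p,\alpha)\psi_g(p)$ in each $p_\mu$. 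First I would perform a Hellmann--Feynman computation: writing $N(p)=(\psi_g(p),\psi_g(p))$ and differentiating $E(p,\alpha)N(p)=(\psi_g(p),H(p)\psi_g(p))$ once, self-adjointness of $H(p)$ together with $H(p)\psi_g(p)=E\psi_g(p)$ makes all terms containing $\partial_{p_\mu}\psi_g$ cancel, leaving
$$\partial_{p_\mu}E(p,\alpha)=\frac{1}{N(p)}(\psi_g(p),(\partial_{p_\mu}H(p))\psi_g(p)),\qquad \partial_{p_\mu}H(p)=\frac{1}{m}(p_\mu-P_{{\rm f}\mu}).$$

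Setting $G_\mu(p)=(\psi_g(p),(p_\mu-P_{{\rm f}\mu})\psi_g(p))$, so that $\partial_{p_\mu}E=\tfrac1m G_\mu/N$, a second differentiation with the quotient rule gives $\partial_{p_\mu}^2E=\tfrac1m[(\partial_{p_\mu}G_\mu)N-G_\mu\,\partial_{p_\mu}N]/N^2$. The key step is then to evaluate this at $p=0$. Here I would invoke the parity symmetry $k\mapsto-k$: since $\omega$ and the form factor $g$ are even functions, the induced unitary $U$ on $\mathcal F$ satisfies $UH(0)U^{-1}=H(0)$ and $UP_{{\rm f}\mu}U^{-1}=-P_{{\rm f}\mu}$, so the nondegenerate ground state obeys $U\psi_g(0)=\psi_g(0)$, whence $(\psi_g(0),P_{{\rm f}\mu}\psi_g(0))=0$. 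Consequently $G_\mu(0)=-(\psi_g(0),P_{{\rm f}\mu}\psi_g(0))=0$, which annihilates the $G_\mu\,\partial_{p_\mu}N$ term and removes any need to compute $\partial_{p_\mu}N$.

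Expanding $\partial_{p_\mu}G_\mu$ by the product rule, the explicit $\partial_{p_\mu}p_\mu=1$ contributes $N(0)$, and the two remaining terms $-({\psi_g}_{\mu}'(0),P_{{\rm f}\mu}\psi_g(0))-(\psi_g(0),P_{{\rm f}\mu}{\psi_g}_{\mu}'(0))$ combine, via self-adjointness of $P_{{\rm f}\mu}$, into $-2(P_{{\rm f}\mu}\psi_g(0),{\psi_g}_{\mu}'(0))$. Hence $\partial_{p_\mu}^2E(0,\alpha)=\frac1m\big(1-2(P_{{\rm f}\mu}\psi_g(0),{\psi_g}_{\mu}'(0))/N(0)\big)$; summing over $\mu$, multiplying by $m/3$, and recalling $\frac{1}{m_{\rm eff}}=\frac13\sum_\mu\partial_{p_\mu}^2E(0,\alpha)$ produces exactly the claimed identity.

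The step I expect to be most delicate is the symmetry analysis at $p=0$. One must verify carefully that the reflection is correctly implemented as a unitary on the Fock space $\mathcal F$, that it fixes $H(0)$ while reversing the sign of $P_{{\rm f}\mu}$, and that $\psi_g(0)$ is a genuine, unique eigenvector so that it may be chosen $U$-invariant; this is precisely what forces $(\psi_g(0),P_{{\rm f}\mu}\psi_g(0))=0$ and yields the clean cancellation. A secondary but related point is to confirm that the surviving inner product $(P_{{\rm f}\mu}\psi_g(0),{\psi_g}_{\mu}'(0))$ is real, so that the two cross terms genuinely add; this I would obtain from the antiunitary complex-conjugation symmetry $J$ in the $k$-representation, under which $g$ is real and $P_{{\rm f}\mu}$ is invariant, so that $\psi_g(0)$ and ${\psi_g}_{\mu}'(0)$ can be taken $J$-invariant and the inner product equals its own conjugate.
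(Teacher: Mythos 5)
Your argument is correct and reaches the lemma by the same underlying mechanism as the paper --- second\--order differentiation of the eigenvalue relation in $p_\mu$ at $p=0$, with a parity argument killing the first\--order term --- but the implementation differs in two respects that are worth recording. First, the paper differentiates the \emph{weak} eigenvalue equation $(H(p)\Psi,\psi_g(p))=E(p,\alpha)(\Psi,\psi_g(p))$ twice with a \emph{fixed} test vector $\Psi$ and only afterwards sets $\Psi=\psi_g(0)$; you instead differentiate the Hellmann--Feynman identity $(\psi_g(p),H(p)\psi_g(p))=E(p,\alpha)N(p)$. Your symmetric form produces the pair of cross terms $-(\psi_{g\mu}'(0),P_{{\rm f}\mu}\psi_g(0))-(\psi_g(0),P_{{\rm f}\mu}\psi_{g\mu}'(0))=-2\,\Re\,(P_{{\rm f}\mu}\psi_g(0),\psi_{g\mu}'(0))$, so you genuinely need the reality of this inner product to match the stated formula; you correctly notice this and supply it via the complex\--conjugation symmetry $J$. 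In the paper's asymmetric (fixed\--$\Psi$) form the cross term appears once with coefficient $2$, and its reality comes for free because every other term in the resulting identity is manifestly real --- a small but real economy. Second, you derive the vanishing of the first\--order term from $(\psi_g(0),P_{{\rm f}\mu}\psi_g(0))=0$ using the parity unitary $U$ on $\mathcal F$ together with nondegeneracy of the ground state, whereas the paper uses the equivalent statement $E(p,\alpha)=E(-p,\alpha)\Rightarrow \partial_{p_\mu}E(p,\alpha)\upharpoonright_{p=0}=0$; these express the same symmetry, but the paper's version avoids having to choose a $U$\--invariant eigenvector. Both routes require the differentiability in $p$ of $\psi_g(p)$ and $H(p)\psi_g(p)$ guaranteed by the analyticity lemma, and the weak formulation sidesteps any question of whether $\psi_{g\mu}''(0)$ lies in $D(H(0))$, which your expectation\--value computation would otherwise have to address (or circumvent, as you indicate, by substituting $H(p)\psi_g(p)=E(p,\alpha)\psi_g(p)$ before differentiating).
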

\begin{proof}
Since $E(p, \alpha)$ is symmetry, $E(p, \alpha)=E(-p, \alpha),$ we have
\begin{align}
\partial p_{\mu}E(p, \alpha)\upharpoonright_{p=0}=0, \quad \mu=1, 2, 3. \label{eq:Ederivative}
\end{align}
Since 
$   H(p)\psi_{g}(p)=E(p,\alpha)\psi_{g}(p)$, 
for any $\Psi\in D(H(p)),$
\begin{align*}
(H(p)\Psi, \psi_{g}(p))=E(p,\alpha)(\Psi,\psi_{g}(p))
\end{align*}
holds. Taking a derivative with respect to $p_{\mu}$  on the both sides above, we have
\begin{align}
&(H_{\mu}'(p)\Psi, \psi_g(p))+(H(p)\Psi, \psi_{g_{\mu}}'(p))=E'_{\mu}(p, \alpha)(\Psi, \psi_g(p))+E(p, \alpha)(\Psi, \psi_{g_{\mu}}'(p)) \label{pmuderivative1},\\
&(H_{\mu}^{\prime\prime}(p)\Psi,\psi_{g}(p))
+2(H_{\mu}'(p)\Psi,\psi_{g_{\mu}}'(p))
+(H(p)\Psi,\psi_{g_{\mu}}^{\prime\prime}(p)) \notag \nonumber\\
&=E_{\mu}^{\prime\prime}(p,\alpha)(\Psi, \psi_g(p))+2E_{\mu}'(p,\alpha)(\Psi,\psi_{g_{\mu}}'(p))
+E(p,\alpha)(\Psi,\psi_{g_{\mu}}\prime\prime(p)).\nonumber
\end{align}
Here ${\ }'$ denotes the derivative or strong derivative with respect to $p_{\mu}$, and  $H_{\mu}'(p)=\frac{1}{m}(p_{\mu}-P_{{\rm f}\mu}),$ $H_{\mu}^{\prime\prime}(p)=\frac{1}{m}.$
Setting $\Psi=\psi_g(0)$ and $p=0,$ we have
$$ 
E_{\mu}^{\prime\prime}(0,\alpha)=\frac{1}{m}\frac{(\psi_g(0), \psi_g(0))-2(P_{{\rm f}\mu}\psi_g(0),\psi_{g{\mu}}\prime(0))}{(\psi_g(0), \psi_g(0))}.
$$ 
This expression and the definition of the effective mass prove the lemma.

\end{proof}

\subsection{Perturbative expansions}
We define operators $A^{+}$ and $A^{-}$ by
$A^{+} = \frac{1}{\sqrt{2}}a(g)^{\ast}$ and $A^{-}=\frac{1}{\sqrt{2}}a(g)$. 
Then
$ H_{\rm I}=A^{+}+A^{-}$. 
Moreover, let 
$ 
{\mathcal F}^{(n)}= \otimes_s^n L^2(\mathbb{R}^3)$ 
and  
\begin{align}
\psi_g(0)=\sum_{n=0}^{\infty}\frac{\alpha ^n}{n!}\varphi_n. \label{eq:tenkai1}
\end{align}
Since $E(p, \alpha)$ is symmetry $E(p, -\alpha)=E(p, \alpha)$, we have
\begin{align} 
E(0,\alpha)=\sum_{n=0}^{\infty}\frac{\alpha^{2n}}{(2n)!}E_{2n}. \label{eq:tenkai2}
\end{align}
Since $\ker H_0\neq \{0\}$, $H_0$ is not injective. 
However, we define the operator $\hz $(for notational simplicity we write $\hz$ for $H_0^{-1}$ in what follows) on ${\mathcal F}$ as follows.
\begin{align*}
&D(\hz )=\left\{ 
\Psi=\oplus_{n=0}^\infty\Psi^{(n)}\in{\mathcal F} |\  \sum_{n=1}^{\infty}\left\| 
\beta^n 
\Psi^{(n)}\right\| ^2<\infty \right\}, \\
&(\hz \Psi)^{(0)}=0, \\   
&(\hz \Psi)^{(n)}(k_1,\cdots ,k_n)=\beta^n(k_1,\cdots,k_n)
\Psi^{(n)}(k_1,\cdots ,k_n)  \quad (n\ge1).
\end{align*}
Here 
$$\beta^n=\beta^n(k_1,\cdots,k_n)=
\frac{1}{\frac{1}{2m}|k_1+\cdots + k_n|^2+\sum_{i=1}^n\omega(k_i)}.
$$
We define the subspace ${\mathcal F}_{\rm fin}$ of ${\mathcal F}$ as
${\mathcal F}_{\rm fin}=\{ \{\Psi^{(n)}\}_{n=0}^{\infty}\in {\mathcal F}  |\Psi^{(l) }=0\; \text{for}\: l\ge q \;\text{with some}\; q  \}$. 

\begin{lemma} \label{FfinDH0inverse}
It holds that ${\mathcal F}_{\rm fin}\subset D(\hz )$.
\end{lemma}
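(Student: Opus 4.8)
The plan is to verify directly that every $\Psi\in\mathcal{F}_{\rm fin}$ satisfies the summability condition defining $D(\hz)$, namely $\sum_{n=1}^{\infty}\|\beta^n\Psi^{(n)}\|^2<\infty$. The two ingredients I would use are the finite support of vectors in $\mathcal{F}_{\rm fin}$ and a uniform pointwise bound on the multipliers $\beta^n$ coming from the positive boson mass $\nu>0$.

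First I would record the key estimate on $\beta^n$. Since $\omega(k)=\sqrt{|k|^2+\nu^2}\ge\nu$ for every $k\in\mathbb{R}^3$, the denominator of $\beta^n$ satisfies $\frac{1}{2m}|k_1+\cdots+k_n|^2+\sum_{i=1}^{n}\omega(k_i)\ge\sum_{i=1}^{n}\omega(k_i)\ge n\nu\ge\nu$ for all $n\ge1$. Hence $0<\beta^n(k_1,\cdots,k_n)\le\frac{1}{\nu}$ pointwise, uniformly in $n\ge1$ and in the momenta. This is exactly the spectral-gap information carried by the free Hamiltonian $H_0$ on the orthogonal complement of its kernel.

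Next I would invoke the definition of $\mathcal{F}_{\rm fin}$: for $\Psi\in\mathcal{F}_{\rm fin}$ there is some $q$ with $\Psi^{(l)}=0$ for all $l\ge q$, so the series $\sum_{n=1}^{\infty}\|\beta^n\Psi^{(n)}\|^2$ reduces to a finite sum. Combining this with the pointwise bound gives $\|\beta^n\Psi^{(n)}\|^2\le\frac{1}{\nu^2}\|\Psi^{(n)}\|^2$, and therefore $\sum_{n=1}^{\infty}\|\beta^n\Psi^{(n)}\|^2\le\frac{1}{\nu^2}\sum_{n=1}^{\infty}\|\Psi^{(n)}\|^2\le\frac{1}{\nu^2}\|\Psi\|^2<\infty$, which places $\Psi$ in $D(\hz)$ and proves the lemma.

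I do not anticipate a genuine obstacle here: the statement is essentially a domain-bookkeeping lemma, and its whole content is the observation that $\nu>0$ makes each $\beta^n$ a bounded multiplier. The only point to be slightly careful about is the $n=0$ term, which is harmless because $(\hz\Psi)^{(0)}$ is defined to be $0$ and the summability condition starts at $n=1$. In fact the same argument shows the stronger statement that $\hz$ is bounded on all of $\mathcal{F}$, so restricting to $\mathcal{F}_{\rm fin}$ is not strictly necessary; but since the later arguments only require the inclusion on the finite-particle subspace, I would state it in the given form.
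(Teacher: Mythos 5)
Your proof is correct and follows essentially the same route as the paper's (one-line) argument: the sum defining membership in $D(\hz)$ reduces to finitely many terms for $\Psi\in{\mathcal F}_{\rm fin}$, and each term is finite. You supply the detail the paper leaves implicit, namely the uniform bound $0<\beta^n\le 1/\nu$ coming from $\omega\ge\nu>0$, which as you note actually shows $\hz$ is bounded on all of ${\mathcal F}$.
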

\begin{proof}
Let
$\Psi\in{\mathcal F}_{\rm fin}.$ Then
$
 \| \hz \Psi\| ^2=\sum_{n=1}^{\infty}\| (\hz \Psi)^{(n)}\| ^2 <\infty$.  
 Hence the lemma follows.

\end{proof}

\begin{lemma} \label{lemma;recurrence1}
Let $\psi_g(0)=\sum_{n=0}^{\infty}(\alpha^n/n!)\varphi_n$. Then
\label{lemma:varphi}
$\varphi_0=\Omega,$ $\varphi_1=-\hz  H_{\rm I} \Omega$ and the recurrence formulas
\begin{align}
\varphi_{2l}&=\hz \{-2lH_{\rm I} \varphi _{2l-1} 
+\sum_{j=1}^{l}{2l \choose 2j} E_{2j} \varphi _{2l-2j}\} \quad(l \ge 1), \label{phi2l}\\
\varphi_{2l+1}&=\hz \{-(2l+1)H_{\rm I} \varphi _{2l} +\sum_{j=1}^{l} {2l+1 \choose 2j}E_{2j} \varphi _{2l+1-2j}\} \quad (l \ge 0) \label{phi2lplus1}
\end{align}
follow, with
\begin{align}
&\varphi_{2l}\in {\mathcal F}^{(2)} \oplus {\mathcal F}^{(4)} \oplus \cdots \oplus {\mathcal F}^{(2l)} \quad (l \ge 1), \label{sumphi2l}  \\
&\varphi_{2l+1}\in {\mathcal F}^{(1)} \oplus {\mathcal F}^{(3)} \oplus \cdots \oplus {\mathcal F}^{(2l+1)} \quad (l \ge 0),
\label{sumphi2l+1}
\end{align}
and $E_{2l}$ is given by 
\begin{align}
E_{2l}=2l(\Omega, H_{\rm I} \varphi_{2l-1}) \quad (l \ge 1). \label{E2l}
\end{align}
\end{lemma}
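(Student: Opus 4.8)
The plan is to substitute the expansions (\ref{eq:tenkai1}) and (\ref{eq:tenkai2}) into the ground-state eigenvalue equation $(H_0+\alpha H_{\rm I})\psi_g(0)=E(0,\alpha)\psi_g(0)$ and compare coefficients of each power of $\alpha$. Writing both sides as Cauchy products and using $E_0=E(0,0)=\inf\sigma(H_0)=0$, the coefficient of $\alpha^N$, after multiplication by $N!$, gives for $N\ge 1$ the identity
\[
H_0\varphi_N+N H_{\rm I}\varphi_{N-1}=\sum_{j\ge 1,\;2j\le N}\binom{N}{2j}E_{2j}\varphi_{N-2j},
\]
while the $\alpha^0$ term reads $H_0\varphi_0=0$, so $\varphi_0\in\ker H_0=\mathbb{C}\Omega$. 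Adopting the intermediate normalization $(\Omega,\psi_g(0))=1$ (legitimate for small $|\alpha|$ since $\psi_g(0)\to\Omega$) forces $\varphi_0=\Omega$ and $(\Omega,\varphi_N)=0$ for all $N\ge 1$. The two recurrences (\ref{phi2l}) and (\ref{phi2lplus1}) are the cases $N=2l$ and $N=2l+1$ of the boxed identity once $H_0$ has been inverted, and the case $N=1$ collapses to $\varphi_1=-\hz H_{\rm I}\Omega$.

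The second step is to extract $E_{2l}$ as the solvability (Fredholm) condition. Since $H_0$ is symmetric and $H_0\Omega=0$, pairing the $N=2l$ identity with $\Omega$ annihilates $(\Omega,H_0\varphi_{2l})$; by intermediate normalization $(\Omega,\varphi_{2l-2j})=0$ for $1\le j\le l-1$, so only the $j=l$ term survives on the right, yielding $0=-2l(\Omega,H_{\rm I}\varphi_{2l-1})+E_{2l}$, which is exactly (\ref{E2l}). This computation is precisely the requirement that the right-hand side of the $N=2l$ identity be orthogonal to $\ker H_0=\mathbb{C}\Omega$; for odd $N$ the right-hand side is automatically orthogonal to $\Omega$ by the parity structure proved below.

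The third step is the inversion. Having arranged the right-hand side to be orthogonal to $\Omega$ and (as shown next) to lie in $\mathcal{F}_{\rm fin}\subset D(\hz)$ by Lemma \ref{FfinDH0inverse}, I apply $\hz$ to both sides. Because $\hz$ kills the vacuum component and inverts $H_0$ on its orthogonal complement, $\hz H_0\varphi_N=\varphi_N-(\Omega,\varphi_N)\Omega=\varphi_N$ for $N\ge 1$, converting the identity into the stated recurrences. The sector inclusions (\ref{sumphi2l}) and (\ref{sumphi2l+1}) are then proved by induction on $N$, using that $A^+$ raises and $A^-$ lowers the particle number by one, that each $E_{2j}$ is a scalar, and that $\hz$ preserves every $\mathcal{F}^{(n)}$. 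In the inductive step, if $\varphi_{N-1}$ and the $\varphi_{N-2j}$ lie in the asserted sectors, then $H_{\rm I}\varphi_{N-1}$ spreads into the sectors of parity opposite to $\varphi_{N-1}$ (sectors $0,2,\dots,N$ when $N$ is even and $1,3,\dots,N$ when $N$ is odd, the $\varphi_{N-2j}$ contributing only to strictly lower sectors), and applying $\hz$ removes the vacuum sector and confines the result to $\mathcal{F}^{(2)}\oplus\cdots\oplus\mathcal{F}^{(N)}$ or $\mathcal{F}^{(1)}\oplus\cdots\oplus\mathcal{F}^{(N)}$ accordingly; the base case $\varphi_1=-\hz H_{\rm I}\Omega\in\mathcal{F}^{(1)}$ holds because $A^-\Omega=0$.

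The main obstacle is the non-injectivity of $H_0$: since $\ker H_0=\mathbb{C}\Omega$, one cannot simply write $\varphi_N=\hz(\cdots)$ without first verifying that the right-hand side avoids the kernel and lies in the domain of $\hz$. The delicate point is therefore the interlocking of three ingredients — the intermediate normalization $(\Omega,\varphi_N)=0$, the choice of $E_{2l}$ via the solvability condition, and the parity/sector structure — which together guarantee that $\hz$ is applied only to vectors orthogonal to $\Omega$ and lying in $\mathcal{F}_{\rm fin}$, and that the inversion reproduces $\varphi_N$ exactly. For this reason the induction establishing the sector inclusions is not mere bookkeeping but a logical prerequisite for the recurrences themselves, since it is what certifies, at each stage, that Lemma \ref{FfinDH0inverse} applies to the vector being inverted.
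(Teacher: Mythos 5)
Your proof is correct and follows essentially the same route as the paper: the order-$N$ hierarchy $H_0\varphi_N+NH_{\rm I}\varphi_{N-1}=\sum_{1\le j,\,2j\le N}\binom{N}{2j}E_{2j}\varphi_{N-2j}$, the intermediate normalization $(\Omega,\varphi_N)=\delta_{0N}$, the solvability condition giving $E_{2l}$, and a sector induction certifying that the right-hand side lies in ${\mathcal F}_{\rm fin}$ before $H_0$ is inverted. The only cosmetic difference is that the paper obtains the hierarchy by repeatedly differentiating the weak identity $(H\Psi,\psi_g)=E(\Psi,\psi_g)$, which simultaneously establishes $\varphi_N\in D(H_0)$ by induction --- a domain point that your direct coefficient comparison leaves implicit.
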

\begin{proof}
We have $E(0, 0)=E_0$ by substituting $\alpha=0$ in (\ref{eq:tenkai2}).  Since $E(0, 0)$ is the ground state energy of $H_0$, $E(0, 0)=0.$ Hence $E_0=0.$ Since $\varphi_0$ is the ground state of $H_0$, $\varphi_0$ can be $\Omega$. 
We can find that 
$
(\varphi_n ,\Omega)=\delta_{0 n}$ for $n=0,1, \cdots$
holds in the same way as~\cite{HiroshimaSpohn}. From now we set $H=H(p),$ $\psi_g=\psi_g(0),$ $E=E(p,\alpha)$ and $\prime$ means (strong)derivative with respect to $\alpha$.
\begin{align}
(H\Psi, \psi_g)=E(\Psi, \psi_g) \label{naiseki2}
\end{align}
holds for $\Psi\in D(H)$. Differentiating (\ref{naiseki2}) with respect to $\alpha$, we have
$$  (H_{\rm I} \Psi, \psi_g)+ (H\Psi, \psi _g ')=E'(\Psi, \psi_g)+E(\Psi, \psi_g'). $$ 
Hence  $\psi _g '\in D(H)$ and we have
\begin{align}
 H_{\rm I}\psi_g+ H\psi_g' = E'\psi_g+E\psi_g'. \label{alphaderibative}
\end{align}
Substituting $p=0$ and $\alpha=0$ into (\ref{alphaderibative}) and taking into account     $(\varphi_n ,\Omega)=\delta_{0 n}$, we have $\varphi_1=-\hz  H_{\rm I} \Omega$. Differentiating (\ref{naiseki2}) $n$ times with respect to $\alpha$, we also have
$$  (H\Psi, \psi_g^{(n)})+n(\Psi, H_{\rm I}\psi_g^{(n-1)})=\sum_{j=1}^n {n \choose j}E^{(j)}(\Psi, \psi_g^{(n-j)}). $$ 
By the induction on $n$, we have $\psi_g^{(n)} \in D(H)$ and
$$  H\psi_g^{(n)}+nH_{\rm I}\psi_g^{(n-1)}=\sum_{j=1}^n {n \choose j}E^{(j)}\psi_g^{(n-j)}. $$ 
Substituting $p=0$ and $\alpha=0$ into both sides above, we have
\begin{align*}
&H_0 \varphi_{2l}+ 2l H_{\rm I} \varphi_{2l-1} =\sum_{j=1}^l{2l \choose 2j}E_{2j}\varphi_{2l-2j} \quad (l \ge 1), \\ 
&H_0 \varphi_{2l+1}+ (2l+1) H_{\rm I} \varphi_{2l} =\sum_{j=1}^l{2l+1 \choose 2j}E_{2j}\varphi_{2l+1-2j} \quad (l \ge 0). 
\end{align*} 
From now on, we shall prove 
\begin{align*}
&\varphi_n^{(i)}=0, \quad (i>n, \; i=0 ),\\
&\varphi_{2l}^{(2i+1)}=0,\quad (l \ge 1, \;0\le i\le l-1),\\
&{\rm supp}_{k\in \mathbb{R}^{3\cdot2i}}\varphi_{2l}^{(2i)}(k)=S_{2i} \mbox{ or }
\emptyset,\quad (l\ge1,\; 1\le i\le l),\\
&\varphi_{2l+1}^{(2i)}=0,\quad (l\ge0,\; 0\le i \le l),\\
& {\rm supp}_{k\in \mathbb{R}^{3(2i+1)}}\varphi_{2l+1}^{(2i+1)}(k)=S_{2i+1} \mbox{ or }
\emptyset,\quad (l\ge0,\; 0\le i\le l),
\end{align*}
where we set $\varphi_n=\{\varphi_n^{(i)}\}_{i=0}^{\infty}$ by induction for  $n\ge1$, and 
$$ S_i=\{(k_1,\cdots,k_i)\in \mathbb{R}^{3i}| \kappa \le |k_1|\le \Lambda , \cdots , \kappa \le |k_i|\le \Lambda \}. $$ 
Since $\varphi_1=-\hz H_{\rm I}\Omega \in \mathcal{F}^{(1)}$, 
$\varphi_1^{(i)}=0$, 
$(i >1,\; i=0)$. Moreover, since 
\begin{align*}
\varphi_1^{(1)}(k_1)=-\frac{1}{\sqrt{2}(2\pi)^{3/2}\sqrt{\omega(k_1)}E(k_1)}\chi_{[\kappa, \Lambda]}(|k_1|), 
\end{align*}
we have ${\rm supp}_{k_1\in\mathbb{R}^3}\varphi_1^{(1)}(k_1)=S_1$, 
where $E(k)=|k|^2/2m+\omega(k)$. 
Assume that the assumption of the induction holds when $n \le 2l+1,$ $(l\ge 0)$. Then 
\begin{align}
H_0 \varphi_{2l+2}+ (2l+2) H_{\rm I} \varphi_{2l+1} =\sum_{j=1}^{l+1}{2l+2 \choose 2j}E_{2j}\varphi_{2l+2-2j}. \label{H0phi2lplus2}
\end{align}
It is derived that $\varphi_{2l+2}^{(i)}=0$, $(i>2l+2,\; i=0)$ by $(\varphi_n ,\Omega)=\delta_{0 n}$ and (\ref{H0phi2lplus2}).
By the assumption of the induction, $(H_{\rm I}\varphi_{2l+1})^{(2i+1)}=0$, $(0 \le i \le l) $ holds. 
When $1\le q \le l$, 
it holds that 
\begin{align*}
(H_0\varphi_{2l+2})^{(2q)}
&=-2(l+1)(H_{\rm I}\varphi_{2l+1})^{(2q)}+\sum_{j=1}^{l+1}{2l+2 \choose
2j}E_{2j}\varphi_{2l+2-2j}^{(2q)} \\
&
=-\sqrt{2}(l+1)\{\frac{1}{\sqrt{2q}}\sum_{i=1}^{2q}\frac{1}{(2\pi)^{\frac{3}{2}}}\frac{\chi_{[\kappa, \Lambda ]}(|k_i|)}{\sqrt{2}\sqrt{\omega(k_i)}}\varphi_{2l+1}^{(2q-1)}(k_1,\cdots, \hat{k_i},\cdots, k_{2q})\\ 
&  +\sqrt{2q+1}\int dk\frac{1}{(2\pi)^{\frac{3}{2}}}
\frac{\chi_{[\kappa, \Lambda]}(|k|)}{\sqrt{2}\sqrt{\omega(k)}}\varphi_{2l+1}^{(2q+1)}(k, k_1,\cdots, k_{2q})\}+\sum_{j=1}^{l+1}{2l+2 \choose 2j}E_{2j}\varphi_{2l+1-2j}^{(2q)},  
\end{align*} where $\hat{k_i}$ means that $k_i$ is omitted. By the assumption of the induction, the supports of the functions 
\begin{align*}
\frac{1}{(2\pi)^{\frac{3}{2}}}\frac{\chi_{[\kappa, \Lambda ]}(|k_i|)}{\sqrt{2}\sqrt{\omega(k_i)}}\varphi_{2l+1}^{(2q-1)}(k_1,\cdots, \hat{k_i},\cdots, k_{2q}),\quad 
 \int dk\frac{1}{(2\pi)^{\frac{3}{2}}}
 \frac{\chi_{[\kappa, \Lambda]}(|k|)}{\sqrt{2}\sqrt{\omega(k)}}\varphi_{2l+1}^{(2q+1)}(k, k_1,\cdots, k_{2q})
\end{align*}
and $\varphi_{2l+1-2j}^{(2q)}$ are $S_{2q}$ or $\emptyset$. Furthermore, 
\begin{align*}
&(H_0\varphi_{2l+2})^{(2l+2)}(k_1,\cdots ,k_{2l+2})=-2(l+1)(A^+\varphi_{2l+1})^{(2l+2)}(k_1, \cdots ,k_{2l+2}) \\
& = -\sqrt{2(l+1)}\sum_{i=1}^{2l+2}\frac{1}{(2\pi)^{\frac{3}{2}}}\frac{\chi_{[\kappa, \Lambda ]}(|k_i|)}{\sqrt{2}\sqrt{\omega(k_i)}}\varphi_{2l+1}^{(2l+1)}(k_1,\cdots, \hat{k_i},\cdots, k_{2l+2})
\end{align*}  
holds. 
By the assumption of the induction, the support of the right hand side is $S_{2l+2}$ or $\emptyset$. 
Hence we have 
${\rm supp}_{k\in \mathbb{R}^{3\cdot 2i}}\varphi_{2l+2}^{(2i)}(k)=S_{2i}$ 
or 
$\emptyset, (1\le i\le l+1)$.  
We can prove $\varphi_{2l+3}^{(i)}=0, (i > 2l+3,\; i=0)$,     
$\varphi_{2l+3}^{(2i)}=0, (1\le i\le l+1)$, and ${\rm supp}_{k\in \mathbb{R}^{3(2i+1)}}\varphi_{2l+3}^{(2i+1)}(k)=S_{2i+1}$ 
or 
$\emptyset, (0\le i\le l+1)$ in a similar way. 
From the discussion so far, we have 
\begin{align*}
& -2l H_{\rm I} \varphi_{2l-1} +\sum_{j=1}^l{2l \choose 2j}E_{2j}\varphi_{2l-2j}\in {\mathcal F}_{\rm fin}\quad (l \ge 1),\\
&
-(2l+1) H_{\rm I} \varphi_{2l} +\sum_{j=1}^l{2l+1 \choose 2j}E_{2j}\varphi_{2l+1-2j}\in {\mathcal F}_{\rm fin}\quad (l\ge 0).
\end{align*}
Hence we have
\begin{align*}
&\varphi_{2l}=\hz \{-2lH_{\rm I} \varphi _{2l-1}+\sum_{j=1}^{l}{2l \choose 2j} E_{2j} \varphi _{2l-2j}\}+b_{2l}\Omega \quad (l\ge 1), \\
&
\varphi_{2l+1}=\hz \{-(2l+1)H_{\rm I} \varphi _{l}+\sum_{j=1}^{l} {2l+1 \choose 2j}E_{2j} \varphi _{2l+1-2j}\}+b_{2l+1}\Omega \quad  (l\ge 0),
\end{align*}
where $b_{2l}$ and $b_{2l+1}$ are some constants. Since $(\varphi_{2l}, \Omega)=0$, $(l\ge 1)$ and  $(\varphi_{2l+1}, \Omega)=0,$ $(l \ge 0)$, $b_{2l}=b_{2l+1}=0.$ Hence (\ref{phi2l}) and (\ref{phi2lplus1}) are proven . By the discussion so far, (\ref{sumphi2l}) and (\ref{sumphi2l+1}) are also proven . We can derive (\ref{E2l})  by (\ref{phi2l}) and $(\varphi_n ,\Omega)=\delta_{0 n}$.
\end{proof}

\section{Main theorems}
For notational simplicity we set 
$  \hat{\phi}_j=\hat{\phi}(k_j)$ and $\omega_j=\omega(k_j)$ for $k_j \in \mathbb{R}^3, j=1, 2$.  
Let
\begin{align*}
E_j=\frac{|k_j|^2}{2m}+\omega_j, \quad j=1, 2, \quad 
 E_{12}=\frac{|k_1+k_2|^2}{2m}+\omega_1+\omega_2, 
 \quad \omega(r)=\sqrt{r^2+\nu^2}, \quad
 F(r)=\frac{r^2}{2m}+\omega(r). 
\end{align*}

\begin{Theorem}
\label{1}
Let $\kappa>0$. Then $m_{\rm eff}$ is an analytic function of $\alpha^2$ and can be expanded in the following power series for sufficiently small $|\alpha|$ $:$
$$  \frac{m_{\rm eff}}{m}=1+\sum_{n=1}^{\infty}a_n(\Lambda)\alpha ^{2n}. $$  
\end{Theorem}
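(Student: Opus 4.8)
The plan is to derive both claims—analyticity in $\alpha^2$ and the power-series form—directly from the closed expression of Lemma~\ref{formula},
\[
\frac{m}{m_{\rm eff}}=1-\frac{2}{3}\sum_{\mu=1}^3\frac{(P_{{\rm f}\mu}\psi_g(0),{\psi_g}_\mu'(0))}{(\psi_g(0),\psi_g(0))},
\]
and then to pass to the reciprocal. Accordingly I would split the work into three tasks: (i) the right-hand side is analytic in $\alpha$ near $\alpha=0$; (ii) it is an \emph{even} function of $\alpha$; (iii) inversion yields the stated series for $m_{\rm eff}/m$.

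For (i) I would use the preceding lemma, which gives that $\psi_g(0)$ and ${\psi_g}_\mu'(0)=s\text{-}\partial_{p_\mu}\psi_g(p)\upharpoonright_{p=0}$ are strongly analytic in $\alpha$ for small $|\alpha|$, with values in the fixed domain $D(H_0)$ since $H(p)$ is a type~(A) family. By the relative bound of Lemma~\ref{Katofamily}, $P_{{\rm f}\mu}$ is $H_0$-bounded, so $\alpha\mapsto P_{{\rm f}\mu}\psi_g(0)$ is again strongly analytic; hence the scalar numerators and the denominator are analytic in $\alpha$. At $\alpha=0$ one has $\psi_g(0)=\varphi_0=\Omega$ by Lemma~\ref{lemma;recurrence1}, so the denominator equals $\|\Omega\|^2=1\neq0$ and stays nonzero nearby, while $P_{{\rm f}\mu}\Omega=0$ makes the whole expression equal $1$ there. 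Thus $m/m_{\rm eff}$ is analytic in $\alpha$ near $0$ with value $1$.

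The core is (ii), the parity in $\alpha$. I would introduce the self-adjoint unitary involution $\Theta=(-1)^{{\rm d}\Gamma(I)}$, acting as $(-1)^n$ on $\mathcal F^{(n)}$, so that $\Theta a(f)\Theta^{-1}=-a(f)$ and $\Theta a(f)^\ast\Theta^{-1}=-a(f)^\ast$, while $\Theta$ commutes with the number-preserving operators $H_{\rm f}$ and $P_{{\rm f}\mu}$. Then
\[
\Theta H(p)\Theta^{-1}=\frac{1}{2m}(p-P_{\rm f})^2+H_{\rm f}-\alpha\Phi_s(g),
\]
so $\Theta$ conjugates the Hamiltonian at coupling $\alpha$ into the one at $-\alpha$, consistently with $E(p,-\alpha)=E(p,\alpha)$. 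Fixing the analytic normalization $(\Omega,\psi_g(p))=1$ (legitimate near $p=0,\alpha=0$, where the value tends to $1$) and using $\Theta\Omega=\Omega$ together with simplicity of the ground state, uniqueness forces $\psi_g(p)\upharpoonright_{-\alpha}=\Theta\,\psi_g(p)\upharpoonright_{\alpha}$. Since $\Theta$ is a bounded $p$-independent operator, differentiating in $p_\mu$ at $p=0$ gives ${\psi_g}_\mu'(0)\upharpoonright_{-\alpha}=\Theta\,{\psi_g}_\mu'(0)\upharpoonright_{\alpha}$. As $\Theta$ is unitary and commutes with $P_{{\rm f}\mu}$, both the numerators $(P_{{\rm f}\mu}\psi_g(0),{\psi_g}_\mu'(0))$ and the denominator $(\psi_g(0),\psi_g(0))$ are invariant under $\alpha\to-\alpha$, so $m/m_{\rm eff}$ is even in $\alpha$.

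Combining (i) and (ii), $m/m_{\rm eff}$ is analytic and even near $\alpha=0$, hence an analytic function of $\alpha^2$ of the form $1+\sum_{n\ge1}c_n(\Lambda)\alpha^{2n}$; being equal to $1$ at $\alpha=0$ it is nonvanishing nearby, so its reciprocal $m_{\rm eff}/m$ is likewise analytic and even, giving $m_{\rm eff}/m=1+\sum_{n\ge1}a_n(\Lambda)\alpha^{2n}$ with the $a_n$ obtained by formally inverting the $c_n$-series. I expect the main obstacle to be the bookkeeping in (ii): pinning down the normalization so that no $p$- or $\alpha$-dependent phase spoils the identity $\psi_g(p)\upharpoonright_{-\alpha}=\Theta\psi_g(p)\upharpoonright_\alpha$, and confirming that the strong $p$-derivative commutes with $\Theta$. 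The analyticity bookkeeping in (i)—that applying the unbounded $P_{{\rm f}\mu}$ preserves strong analyticity—is routine given Lemma~\ref{Katofamily}, but should be stated with care.
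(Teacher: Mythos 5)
Your proof is correct, but it establishes the key point---that $m/m_{\rm eff}$ is an even function of $\alpha$---by a genuinely different mechanism than the paper. The paper works directly with the perturbation series $\psi_g(0)=\sum_n(\alpha^n/n!)\varphi_n$ and $\psi_{g\mu}'(0)=\sum_n(\alpha^n/n!)\Phi_n^\mu$, and deduces evenness from the sector structure proved inductively in Lemma~\ref{lemma;recurrence1}: $\varphi_n$ (and likewise $\Phi_n^\mu$) lives in the particle-number sectors ${\mathcal F}^{(i)}$ with $i\equiv n\pmod 2$, so $(\varphi_n,\varphi_m)$ and $(P_{{\rm f}\mu}\varphi_n,\Phi_m^\mu)$ vanish unless $n+m$ is even; the quotient is then inverted by a geometric series exactly as in your step (iii). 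Your route replaces this coefficient-by-coefficient bookkeeping with the single symmetry $\Theta=(-1)^{{\rm d}\Gamma(I)}$, which conjugates $H(p)$ at coupling $\alpha$ into $H(p)$ at $-\alpha$ and, with the normalization $(\Omega,\psi_g)=1$ (the same normalization the paper uses via $(\varphi_n,\Omega)=\delta_{0n}$), forces $\psi_g\!\upharpoonright_{-\alpha}=\Theta\psi_g\!\upharpoonright_\alpha$; evenness of the numerator and denominator then follows from unitarity of $\Theta$ and $[\Theta,P_{{\rm f}\mu}]=0$. Your argument is more conceptual and shorter for the purposes of Theorem~\ref{1} alone (it even explains the unproved assertion $E(p,-\alpha)=E(p,\alpha)$ that the paper invokes), and all its ingredients---simplicity and isolation of the ground state for small $|p|,|\alpha|$ thanks to the gap $\nu>0$, strong analyticity in the $H_0$-graph norm so that $P_{{\rm f}\mu}\psi_g$ is analytic, commutation of the bounded $p$-independent $\Theta$ with $\partial_{p_\mu}$---are available from Lemmas~\ref{Katofamily}--\ref{lemma;recurrence1}. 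What the paper's approach buys instead is the explicit identification of the coefficients $b_n(\Lambda)$ and $c_n(\Lambda)$ in terms of $\varphi_n$ and $\Phi_n^\mu$, which is what Theorems~\ref{2} and~\ref{3} actually consume; so if you adopted your proof you would still need the content of Lemma~\ref{lemma;recurrence1} downstream.
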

\begin{proof}
By the power series (\ref{eq:tenkai1}), we have
$$ 
(\psi_g, \psi_g)= (\sum_{n=0}^{\infty}\frac{\alpha ^n}{n!} \varphi_n,\sum_{m=0}^{\infty}\frac{\alpha ^m}{m!} \varphi_m) 
=\sum_{n=0}^{\infty}\sum_{m=0}^{\infty}\frac{\alpha ^{n+m}}{n!m!}(\varphi_n, \varphi_m).   
$$ 
By Lemma \ref{lemma:varphi}, $(\varphi_n, \varphi_m)\neq 0$ if and only if both $n$ and $m$ are even or odd. Then we have
\begin{align*}
(\psi_g, \psi_g)=1+\sum_{n=1}^{\infty}b_n(\Lambda)\alpha ^{2n}.
\end{align*}
From the fact that both $m_{\rm eff}^{-1}$ and $(\psi_g, \psi_g)$ are analytic functions of $\alpha^2$ and  Lemma \ref{formula}, 
we have the following power series: 
\begin{align*}
-\frac{2}{3}\sum_{\mu =1}^3 (P_{{\rm f}\mu}\psi_{g},\psi_{g_{\mu}}\prime(0))=\sum_{n=0}^{\infty}c_n(\Lambda)\alpha^{2n}. 
\end{align*} 
Since $\psi'_{g_{\mu}}(0)$ is an analytic function of $\alpha$, we can write
\begin{align*}
\psi_{g_{\mu}}'(0)= \sum_{n=0}^{\infty} \frac{\alpha ^n}{n!} \Phi_n^{\mu}. \end{align*}
We note that 
\begin{align*}
c_0(\Lambda)&=-\frac{2}{3}\sum_{\mu=1}^3(P_{{\rm f}\mu}\varphi_0, \Phi_0^{\mu})=-\frac{2}{3}\sum_{\mu=1}^3({\rm d}\Gamma(k_{\mu})\varphi_0, \Phi_0^{\mu})=0.
\end{align*}
Hence if $|\alpha|$ is sufficiently small,  then we have the following power series:
\begin{align}
\frac{m_{\rm eff}}{m}&=\frac{(\psi_g, \psi_g)}{(\psi_g, \psi_g)-\frac{2}{3}\sum_{\mu =1}^3 (P_{{\rm f}\mu}\psi_{g}(0), \psi_{g_{\mu}}\prime(0))}  
= \frac{1+\sum_{n=1}^{\infty}b_n(\Lambda)\alpha ^{2n}}{1+\sum_{n=1}^{\infty}(b_n(\Lambda)+c_n(\Lambda))\alpha ^{2n}}  \notag \\
&=(1+\sum_{n=1}^{\infty}b_n(\Lambda)\alpha ^{2n})\sum_{n=0}^{\infty}(-\sum_{l=1}^{\infty}(b_l(\Lambda)+c_l(\Lambda))\alpha^{2l})^n. \label{eq:powerseries}
\end{align}
This proves the theorem.
\end{proof}
\begin{Theorem}\label{2}
There exists strictly positive constant $C$ such that
$ \lim_{\Lambda\to\infty}a_1(\Lambda)=C$. 
\end{Theorem}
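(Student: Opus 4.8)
The plan is to extract $a_1(\Lambda)$ as an explicit one-particle integral from the expansion (\ref{eq:powerseries}) and then analyze that integral as $\Lambda\to\infty$. First I would read $a_1$ off (\ref{eq:powerseries}): since the numerator there is $1+\sum_{n\ge1}b_n(\Lambda)\alpha^{2n}$ and the denominator is $1+\sum_{n\ge1}(b_n(\Lambda)+c_n(\Lambda))\alpha^{2n}$, the coefficient of $\alpha^2$ equals $b_1-(b_1+c_1)=-c_1$, so $a_1(\Lambda)=-c_1(\Lambda)$. Thus it suffices to compute $c_1$, the $\alpha^2$-coefficient of $-\tfrac23\sum_\mu(P_{{\rm f}\mu}\psi_g(0),\psi'_{g_\mu}(0))$. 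Writing $\psi_g(0)=\sum_n(\alpha^n/n!)\varphi_n$ and $\psi'_{g_\mu}(0)=\sum_m(\alpha^m/m!)\Phi_m^\mu$, the $\alpha^2$ term collects $(n,m)\in\{(0,2),(1,1),(2,0)\}$; the $(0,2)$ term vanishes because $P_{{\rm f}\mu}\varphi_0=P_{{\rm f}\mu}\Omega=0$, and the $(2,0)$ term vanishes once $\Phi_0^\mu=0$ is established. This gives $a_1(\Lambda)=\tfrac23\sum_\mu(P_{{\rm f}\mu}\varphi_1,\Phi_1^\mu)$.

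Next I would identify $\Phi_0^\mu$ and $\Phi_1^\mu$. Differentiating $H(p)\psi_g(p)=E(p,\alpha)\psi_g(p)$ in $p_\mu$, setting $p=0$, and using $E'_\mu(0,\alpha)=0$ from (\ref{eq:Ederivative}) together with $H'_\mu(0)=-\tfrac1m P_{{\rm f}\mu}$ and $H(0)=H_0+\alpha H_{\rm I}$, one obtains $(H_0+\alpha H_{\rm I}-E(0,\alpha))\psi'_{g_\mu}(0)=\tfrac1m P_{{\rm f}\mu}\psi_g(0)$. Expanding in $\alpha$ with $E(0,\alpha)=\sum_n(\alpha^{2n}/(2n)!)E_{2n}$ and $E_0=0$: at order $\alpha^0$, $H_0\Phi_0^\mu=\tfrac1m P_{{\rm f}\mu}\Omega=0$, so $\Phi_0^\mu\in\ker H_0=\mathbb{C}\Omega$, and the normalization $(\psi_g(p),\Omega)=1$ forces $(\Phi_0^\mu,\Omega)=0$, hence $\Phi_0^\mu=0$; at order $\alpha^1$, $H_0\Phi_1^\mu=\tfrac1m P_{{\rm f}\mu}\varphi_1$, and since $P_{{\rm f}\mu}\varphi_1$ and $\Phi_1^\mu$ both have vanishing vacuum component this yields $\Phi_1^\mu=\tfrac1m\hz P_{{\rm f}\mu}\varphi_1$.

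Then I would evaluate everything on $\mathcal{F}^{(1)}$. By Lemma \ref{lemma:varphi}, $\varphi_1=-\hz H_{\rm I}\Omega$ is the one-particle function $-\tfrac{1}{\sqrt2}g(k)/E(k)$; on $\mathcal{F}^{(1)}$ the operator $P_{{\rm f}\mu}$ is multiplication by $k_\mu$ and $\hz$ by $1/E(k)$, so $\Phi_1^\mu=-\tfrac{1}{\sqrt2 m}k_\mu g(k)/E(k)^2$. Substituting and summing over $\mu$ gives
\begin{align*}
a_1(\Lambda)=\frac{2}{3}\sum_{\mu=1}^3(P_{{\rm f}\mu}\varphi_1,\Phi_1^\mu)
&=\frac{1}{3m}\int_{\mathbb{R}^3}\frac{|k|^2|g(k)|^2}{E(k)^3}\,dk\\
&=\frac{1}{3m(2\pi)^3}\int_{\kappa\le|k|\le\Lambda}\frac{|k|^2}{\omega(k)E(k)^3}\,dk.
\end{align*}
The integrand is nonnegative and strictly positive on the shell, so $a_1(\Lambda)>0$ is nondecreasing in $\Lambda$; since the radial tail decays like $r^{-3}$ (as $\omega(r)\sim r$ and $E(r)\sim r^2/2m$), the integral converges, and $\lim_{\Lambda\to\infty}a_1(\Lambda)=C:=\frac{1}{3m(2\pi)^3}\int_{|k|\ge\kappa}\frac{|k|^2}{\omega(k)E(k)^3}\,dk>0$, which is the assertion.

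I expect the main obstacle to be the second paragraph: rigorously interchanging $\partial_{p_\mu}$ with the $\alpha$-expansion, treating $\hz$ as a genuine pseudoinverse (recall $\ker H_0\neq\{0\}$), and justifying the normalization step $(\Phi_0^\mu,\Omega)=0$. In particular one must confirm that $P_{{\rm f}\mu}\varphi_1$ lies in $D(\hz)$ via Lemma \ref{FfinDH0inverse} and that the vanishing vacuum components legitimize writing $\Phi_1^\mu=\tfrac1m\hz P_{{\rm f}\mu}\varphi_1$ as an exact identity rather than one modulo $\ker H_0$. Once that bookkeeping is in place, the remaining convergence and positivity in the third paragraph are routine.
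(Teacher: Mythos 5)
Your proposal is correct and follows essentially the same route as the paper: read $a_1(\Lambda)=-c_1(\Lambda)$ off (\ref{eq:powerseries}), reduce $c_1$ to $-\frac{2}{3}\sum_{\mu}(P_{{\rm f}\mu}\varphi_1,\Phi_1^{\mu})$ after discarding the $(0,2)$ and $(2,0)$ pairings, obtain $\Phi_0^{\mu}$ and $\Phi_1^{\mu}$ by differentiating the eigenvalue equation in $p_{\mu}$ and then expanding in $\alpha$, and evaluate the resulting one--particle integral, which converges because the radial integrand is $O(r^{-3})$. The only (harmless) divergence is that you eliminate the $\ker H_0$ ambiguity by imposing the normalization $(\psi_g(p),\Omega)=1$ to force $\Phi_0^{\mu}=0$, whereas the paper keeps $\Phi_0^{\mu}=c_0\Omega$ and $\Phi_1^{\mu}=-\frac{1}{m}\frac{1}{H_0}P_{{\rm f}\mu}\frac{1}{H_0}H_{\rm I}\Omega-c_0\frac{1}{H_0}H_{\rm I}\Omega+c_1\Omega$ with undetermined $c_0,c_1$ and checks that those terms drop out of the final inner product by parity and orthogonality of Fock sectors, so both bookkeepings land on the same convergent positive integral (up to an overall numerical factor that does not affect the statement).
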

\begin{proof}
From (\ref{eq:powerseries}), we have
\begin{align*}
\frac{m_{\rm eff}}{m}=\{1+b_1(\Lambda)\alpha^2+O(\alpha^4)\}[1-\{b_1(\Lambda)+c_1(\Lambda)\}\alpha^2+O(\alpha^4) ]  \
=1-c_1(\Lambda)\alpha^2+O(\alpha^4). 
\end{align*}
Therefore
$a_1(\Lambda)=-c_1(\Lambda)$. 
Since 
\begin{align*}
\sum_{n=1}^{\infty}c_n(\Lambda)\alpha^{2n}=-\frac{2}{3}\sum_{\mu=1}^3(P_{{\rm f}\mu}\psi_g, \psi_{g_{\mu}}'(0)) =-\frac{2}{3}\sum_{\mu=1}^3(P_{{\rm f}\mu} \sum_{n=0}^{\infty}\frac{\alpha^n}{n!}\varphi_n, \sum_{n=0}^{\infty}\frac{\alpha^n}{n!}\Phi_n^{\mu}), 
\end{align*}
we have
\begin{align}
c_1(\Lambda)&=-\frac{2}{3}\sum_{\mu=1}^3\{\frac{1}{0!2!}(P_{{\rm f}\mu}\varphi_0, \Phi_2^{\mu})+\frac{1}{1!1!}(P_{{\rm f}\mu}\varphi_1, \Phi_1^{\mu})+\frac{1}{2!0!}(P_{{\rm f}\mu}\varphi_2, \Phi_0^{\mu})\} \notag \\
&=-\frac{2}{3}\sum_{\mu=1}^3\{(P_{{\rm f}\mu}\varphi_1, \Phi_1^{\mu})+\frac{1}{2}(P_{{\rm f}\mu}\varphi_2, \Phi_0^{\mu})\}.  \label{c_1Lambda}
\end{align}
Substituting $p=0$ into (\ref{pmuderivative1})  and using (\ref{eq:Ederivative}), we have
 \begin{align}
-\frac{1}{m}(P_{{\rm f}\mu}\Psi, \psi_g)+((H_0+\alpha H_{\rm I})\Psi, \psi_{g_{\mu}}'(0))=E(0,\alpha)(\Psi, \psi_{g_{\mu}}'(0)). \label{eq:naiseki4}
\end{align}
In addition, by setting $\alpha=0$, we have
$  -\frac{P_{{\rm f}\mu}}{m}\varphi_0+H_0\Phi_0^{\mu}=0$. 
Since $P_{{\rm f}\mu}\varphi_0={\rm d}\Gamma(k_{\mu})\Omega=0,$ 
$H_0 \Phi_0^{\mu}=0$ holds.
Hence we have
\begin{align}
\Phi_0^{\mu}=c_0\Omega, \quad (c_0 \text{ is some constant.}) \label{eq:phi0mu}
\end{align}
Differentiating both sides of (\ref{eq:naiseki4}) with respect to $\alpha,$ we have
\begin{align*}
&-\frac{1}{m}(P_{{\rm f}\mu}\Psi, s-\frac{d}{d\alpha}\psi_g)+(H(0)\Psi, s-\frac{d}{d\alpha}\psi_{g_{\mu}}'(0))+(H_{\rm I}\Psi, \psi_{g_{\mu}}'(0)) \notag \\
&=E(0,\alpha)(\Psi, s-\frac{d}{d\alpha}\psi_{g_{\mu}}'(0))+\frac{d}{d\alpha}E(0,\alpha)(\Psi, \psi_{g_{\mu}}'(0)). 
\end{align*}
Substituting $\alpha=0$ into both sides, we have
\begin{align*}
(H_0\Psi, \Phi_1^{\mu})=\frac{1}{m}(\Psi, P_{{\rm f}\mu}\varphi_1)-(\Psi, H_{\rm I}\Phi_0^{\mu}).
\end{align*}
Therefore $\Phi_1^{\mu} \in D(H_0)$ and
$$ 
H_0\Phi_1^{\mu}=\frac{1}{m}P_{{\rm f}\mu}\varphi_1-H_{\rm I}\Phi_0^{\mu} =-\frac{1}{m}P_{{\rm f}\mu}\hz H_{\rm I}\Omega-c_0H_{\rm I}\Omega. $$ 
Since $-\frac{1}{m}P_{{\rm f}\mu}\hz H_{\rm I}\Omega-c_0H_{\rm I}\Omega\in {\mathcal F}_{\rm fin}$, we have
\begin{align}
	\Phi_1^{\mu}=-\frac{1}{m}\hz P_{{\rm f}\mu}\hz H_{\rm I}\Omega-c_0\hz H_{\rm I}\Omega+c_1\Omega, \label{Phi1mu}
\end{align}
where $c_1$ is some constant. By  $a_1(\Lambda)=-c_1(\Lambda)$, (\ref{c_1Lambda}),  (\ref{eq:phi0mu}), and (\ref{Phi1mu}),  we have
\begin{align*}
&a_1(\Lambda)=\frac{2}{3}\sum_{\mu=1}^3(P_{{\rm f}\mu}\varphi_1, \Phi_1^{\mu})  
\\ 
&=\frac{2}{3m}\sum_{\mu=1}^3(P_{{\rm f}\mu}\hz H_{\rm I}\Omega, \hz P_{{\rm f}\mu}\hz H_{\rm I}\Omega)
-\frac{2c_0}{3}\sum_{\mu=1}^3(P_{{\rm f}\mu}\hz H_{\rm I}\Omega, \hz H_{\rm I}\Omega)-\frac{2c_1}{3}\sum_{\mu=1}^3(P_{{\rm f}\mu}\hz H_{\rm I}\Omega, \Omega).
\end{align*} 
It is also seen that
$$ \sum_{\mu=1}^3(P_{{\rm f}\mu}\hz H_{\rm I}\Omega, \hz H_{\rm I}\Omega)=\sum_{\mu=1}^3(P_{{\rm f}\mu}\hz H_{\rm I}\Omega, \Omega)=0.$$ 
Thus we have
\begin{align*}
a_1(\Lambda)&=\frac{2}{3m}\sum_{\mu=1}^3(P_{{\rm f}\mu}\hz H_{\rm I}\Omega, \hz P_{{\rm f}\mu}\hz H_{\rm I}\Omega)\notag\\
&=\frac{2}{3m}\sum_{\mu=1}^3(P_{{\rm f}\mu}\hz A^{+}\Omega, \hz P_{ f_{\mu}}\hz A^{+}\Omega) 
=\frac{2}{3m}\int\frac{|\hat{\phi}(k)|^2}{\omega(k)}\frac{|k|^2}{E(k)^3}dk.
\end{align*}
Changing variables into polar coordinate, we have
$$ \frac{2}{3m}\int\frac{|\hat{\phi}(k)|^2}{\omega(k)}\frac{|k|^2}{E(k)^3}dk=\frac{8\pi}{3m(2\pi)^3}\int_{\kappa}^{\Lambda}\frac{r^4}{\omega(r)F(r)^3}dr.$$ 
 Since 
$ \frac{r^4}{\omega(r)F(r)^3}=O(r^{-3})\ (r\to\infty)$, 
the improper integral
$   \int_{\kappa}^{\infty}\frac{r^4}{\omega(r)F(r)^3}dr $ 
converges. It is trivial to see that $\displaystyle \lim_{\Lambda\to\infty}a_1(\Lambda)>0$. Thus the theorem follows.

\end{proof}

\begin{lemma}\label{lemma:ffin}
It follows that
 $\Phi_n^{\mu}\in {\mathcal F}_{\rm fin}$ for $n \in \mathbb{N}\cup\{0\}$.
 \end{lemma}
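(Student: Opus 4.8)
The plan is to prove the claim by induction on $n$, mirroring the structure of the proof of Lemma \ref{lemma:varphi}. The base cases $n=0$ and $n=1$ are already settled in the preceding argument: formula (\ref{eq:phi0mu}) gives $\Phi_0^\mu=c_0\Omega\in{\mathcal F}^{(0)}\subset{\mathcal F}_{\rm fin}$, while formula (\ref{Phi1mu}) exhibits $\Phi_1^\mu$ as a finite sum of vectors obtained from $\Omega$ by applying $\hz$, $P_{{\rm f}\mu}$, and $H_{\rm I}$, all of which map ${\mathcal F}_{\rm fin}$ into itself, so $\Phi_1^\mu\in{\mathcal F}_{\rm fin}$.

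First I would extract a general recurrence for $\Phi_n^\mu$ from (\ref{eq:naiseki4}). Inserting the expansions (\ref{eq:tenkai1}), (\ref{eq:tenkai2}) together with $\psi_{g_\mu}'(0)=\sum_n(\alpha^n/n!)\Phi_n^\mu$ into (\ref{eq:naiseki4}), and comparing the coefficient of $\alpha^n/n!$ (using $E_0=0$ and that $E_{2j+1}=0$), one obtains for every $\Psi\in D(H_0)$
\begin{align*}
(H_0\Psi,\Phi_n^\mu)=\frac{1}{m}(P_{{\rm f}\mu}\Psi,\varphi_n)-n(H_{\rm I}\Psi,\Phi_{n-1}^\mu)+\sum_{j=1}^{\lfloor n/2\rfloor}{n \choose 2j}E_{2j}(\Psi,\Phi_{n-2j}^\mu).
\end{align*}
Since $P_{{\rm f}\mu}$ and $H_{\rm I}$ are symmetric, the right-hand side equals $(\Psi,R_n^\mu)$ with
\begin{align*}
R_n^\mu=\frac{1}{m}P_{{\rm f}\mu}\varphi_n-nH_{\rm I}\Phi_{n-1}^\mu+\sum_{j=1}^{\lfloor n/2\rfloor}{n \choose 2j}E_{2j}\Phi_{n-2j}^\mu,
\end{align*}
so self-adjointness of $H_0$ forces $\Phi_n^\mu\in D(H_0)$ and $H_0\Phi_n^\mu=R_n^\mu$.

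Next I would run the inductive step. Assume $\Phi_k^\mu\in{\mathcal F}_{\rm fin}$ for all $k<n$. By Lemma \ref{lemma:varphi}, relations (\ref{sumphi2l}) and (\ref{sumphi2l+1}), each $\varphi_n$ is a finite direct sum of sectors ${\mathcal F}^{(i)}$, hence $\varphi_n\in{\mathcal F}_{\rm fin}$. The operator $P_{{\rm f}\mu}$ preserves the particle number, $H_{\rm I}=A^{+}+A^{-}$ raises or lowers it by one, and each $E_{2j}$ is a scalar; therefore every summand of $R_n^\mu$ lies in ${\mathcal F}_{\rm fin}$, and so does $R_n^\mu$. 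Because $\ker H_0=\mathbb{C}\Omega$, the equation $H_0\Phi_n^\mu=R_n^\mu$ determines $\Phi_n^\mu$ only up to a multiple of $\Omega$; writing $\Phi_n^\mu=\hz R_n^\mu+c_n\Omega$ for a suitable constant $c_n$ and using that $\hz$ is a particle-number-preserving multiplication operator mapping ${\mathcal F}_{\rm fin}$ into ${\mathcal F}_{\rm fin}$ (Lemma \ref{FfinDH0inverse}), we conclude $\Phi_n^\mu\in{\mathcal F}_{\rm fin}$. This closes the induction.

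The main technical point will be the passage from the weak identity to the strong equation $H_0\Phi_n^\mu=R_n^\mu$, namely justifying that $\Phi_n^\mu\in D(H_0)$ and that the termwise differentiation of (\ref{eq:naiseki4}) in $\alpha$ is legitimate. Both are handled exactly as in the proof of Lemma \ref{lemma:varphi}, relying on the strong analyticity of $\psi_g$ and $\psi_{g_\mu}'(0)$ in $\alpha$ and the self-adjointness of $H_0$. The remaining ingredients, tracking the particle numbers across $P_{{\rm f}\mu}$, $H_{\rm I}$, and $\hz$, and absorbing the harmless $c_n\Omega$ ambiguity coming from $\ker H_0$, are routine.
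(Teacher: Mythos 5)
Your proposal is correct and follows essentially the same route as the paper: induction on $n$, differentiating (\ref{eq:naiseki4}) $n$ times at $\alpha=0$ to get the weak identity, upgrading it via self-adjointness of $H_0$ to $H_0\Phi_n^{\mu}=R_n^{\mu}$ with $R_n^{\mu}\in{\mathcal F}_{\rm fin}$, and concluding $\Phi_n^{\mu}\in{\mathcal F}_{\rm fin}$. The only cosmetic difference is that you make the step from $H_0\Phi_n^{\mu}\in{\mathcal F}_{\rm fin}$ to $\Phi_n^{\mu}\in{\mathcal F}_{\rm fin}$ explicit by writing $\Phi_n^{\mu}=\hz R_n^{\mu}+c_n\Omega$, which the paper leaves implicit.
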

\begin{proof}
By (\ref{eq:phi0mu}),  we have $\Phi_0^{\mu}\in {\mathcal F}_{\rm fin}.$ Assume that $\Phi_n^{\mu}\in {\mathcal F}_{\rm fin}$
 holds when $n \le k-1.$ Differentiating both sides of (\ref{eq:naiseki4}) $k$ times with respect to $\alpha$ and substituting $\alpha=0,$ we have 
$$ 
-\frac{1}{m}(P_{{\rm f}\mu}\Psi, \varphi_k)+(H_0\Psi, \Phi_k^{\mu})+k(H_{\rm I}\Psi, \Phi_{k-1}^{\mu})=\sum_{j=1}^k {k \choose j}E_j(\Psi, \Phi_{k-j}^{\mu});
$$ 
however, $E_j=0$ when $j$ is odd. Since $\Phi_{k-1}^{\mu} \in {\mathcal F}_{\rm fin}$, $\Phi_{k-1}^{\mu} \in D(H_{\rm I})$ and
$$  (H_0\Psi, \Phi_k^{\mu})=\frac{1}{m}(\Psi, P_{{\rm f}\mu}\varphi_k)-k(\Psi, H_{\rm I}\Phi_{k-1}^{\mu})+(\Psi, \sum_{j=1}^k{ k \choose j}E_j \Phi_{k-j}^{\mu}). $$ 
Thus $\Phi_k^{\mu}\in D(H_0)$ and 
\begin{align}
H_0\Phi_k^{\mu}=\frac{1}{m}P_{{\rm f}\mu}\varphi_k-kH_{\rm I}\Phi_{k-1}^{\mu}+\sum_{j=1}^k{k \choose j}E_j\Phi_{k-j}^{\mu}.
\label{eq:kbibun}
\end{align} 
Since $\varphi_k \in {\mathcal F}_{\rm fin},$  $H_{\rm I}\Phi_{k-1}^{\mu}\in  {\mathcal F}_{\rm fin}$ and $\Phi_{k-j}^{\mu}\in  {\mathcal F}_{\rm fin}$ $(j=1,\cdots, k),$ by the assumption of induction, $H_0\Phi_k^{\mu}\in {\mathcal F}_{\rm fin}.$ Hence $\Phi_n^{\mu}\in {\mathcal F}_{\rm fin}$ holds when $n=k.$
\end{proof}

\begin{lemma}
It holds that 
$\Phi_0^{\mu}=c_0\Omega,$ 
$\Phi_1^{\mu}=-\frac{1}{m}\hz P_{{\rm f}\mu}\hz H_{\rm I}\Omega- c_0\hz H_{\rm I}\Omega+c_1\Omega$ and the recurrence formulas
\begin{align}
& \Phi_{2l}^{\mu}=\hz \{\frac{1}{m}P_{{\rm f}\mu}\varphi_{2l}-2lH_{\rm I}\Phi_{2l-1}^{\mu} 
 +\sum_{j=1}^{l}{2l \choose 2j}E_{2j}\Phi_{2l-2j}^{\mu}\}+c_{2l}\Omega \notag \\
 &\hspace{6cm} (l \ge1, c_{2l}\text{ is some constant.}) \label{phimu2l} \\
 &\Phi_{2l+1}^{\mu}=\hz \{ \frac{1}{m}P_{{\rm f}\mu}\varphi_{2l+1}-(2l+1)H_{\rm I}\Phi_{2l}^{\mu} 
 +\sum_{j=1}^{l}{2l+1 \choose 2j}E_{2j}\Phi_{2l+1-2j}^{\mu}\}+c_{2l+1}\Omega, \notag \\
 &\hspace{6cm}(l \ge 0, c_{2l+1}\text{ is some constant.}) \label{phimu2lplus1}
 \end{align}
\end{lemma}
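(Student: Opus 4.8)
The plan is to build directly on equation (\ref{eq:kbibun}) obtained in the proof of Lemma \ref{lemma:ffin}, which already supplies the identity
$$
H_0\Phi_k^{\mu}=\frac{1}{m}P_{{\rm f}\mu}\varphi_k-kH_{\rm I}\Phi_{k-1}^{\mu}+\sum_{j=1}^k{k \choose j}E_j\Phi_{k-j}^{\mu}
$$
for every $k\ge1$, together with the fact (also from that proof) that each $\Phi_n^{\mu}\in{\mathcal F}_{\rm fin}$. All that remains is to invert $H_0$ on the right-hand side. First I would discard the odd-index energy coefficients: since $E_j=0$ whenever $j$ is odd, as the expansion (\ref{eq:tenkai2}) of $E(0,\alpha)$ contains only even powers of $\alpha$, the sum $\sum_{j=1}^k{k\choose j}E_j\Phi_{k-j}^{\mu}$ collapses to $\sum_{j=1}^{\lfloor k/2\rfloor}{k\choose 2j}E_{2j}\Phi_{k-2j}^{\mu}$. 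Writing $k=2l$ and $k=2l+1$ separately then reproduces exactly the two sums appearing in (\ref{phimu2l}) and (\ref{phimu2lplus1}).

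The second step is the application of $\hz$. Because $P_{{\rm f}\mu}={\rm d}\Gamma(k_\mu)$ preserves particle number and $H_{\rm I}=A^{+}+A^{-}$ shifts it by $\pm1$, the right-hand side of (\ref{eq:kbibun}) maps ${\mathcal F}_{\rm fin}$ into ${\mathcal F}_{\rm fin}$; combined with $\varphi_k\in{\mathcal F}_{\rm fin}$ and Lemma \ref{FfinDH0inverse}, this places $H_0\Phi_k^{\mu}\in{\mathcal F}_{\rm fin}\subset D(\hz)$. I would then apply $\hz$ to both sides. The crucial subtlety is that $H_0$ is not injective: its kernel is ${\mathbb C}\Omega$, and $\hz$ annihilates the Fock-vacuum component, so $\hz H_0\Phi_k^{\mu}=\Phi_k^{\mu}-(\Omega,\Phi_k^{\mu})\Omega$ rather than $\Phi_k^{\mu}$. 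Solving for $\Phi_k^{\mu}$ therefore yields
$$
\Phi_k^{\mu}=\hz\!\left\{\tfrac{1}{m}P_{{\rm f}\mu}\varphi_k-kH_{\rm I}\Phi_{k-1}^{\mu}+\sum_{j=1}^{\lfloor k/2\rfloor}{k\choose 2j}E_{2j}\Phi_{k-2j}^{\mu}\right\}+c_k\Omega,
$$
with $c_k=(\Omega,\Phi_k^{\mu})$ the undetermined constant. Specialising to $k=2l$ and $k=2l+1$ gives (\ref{phimu2l}) and (\ref{phimu2lplus1}); the base cases $\Phi_0^{\mu}=c_0\Omega$ and the stated formula for $\Phi_1^{\mu}$ are precisely (\ref{eq:phi0mu}) and (\ref{Phi1mu}) already derived in the proof of Theorem \ref{2}, and indeed the $l=0$ instance of (\ref{phimu2lplus1}) recovers (\ref{Phi1mu}) after inserting $\varphi_1=-\hz H_{\rm I}\Omega$ and $\Phi_0^{\mu}=c_0\Omega$.

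The argument is essentially mechanical once (\ref{eq:kbibun}) and Lemma \ref{lemma:ffin} are in hand, so I do not anticipate a genuine obstacle. The only point requiring care is the bookkeeping of the kernel of $H_0$: one must verify that the term $c_k\Omega$ genuinely appears and cannot be pinned down at this stage, being fixed only later, together with the normalisation $(\varphi_n,\Omega)=\delta_{0n}$, when the series are assembled. I would also double-check that the particle-number shift produced by $H_{\rm I}\Phi_{k-1}^{\mu}$ does not push any component out of a finite block, which follows at once from $\Phi_{k-1}^{\mu}\in{\mathcal F}_{\rm fin}$.
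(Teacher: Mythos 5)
Your proposal is correct and follows essentially the same route as the paper: both start from the identity (\ref{eq:kbibun}) established in the proof of Lemma \ref{lemma:ffin}, drop the odd-index $E_j$ terms, invert $H_0$ modulo its kernel ${\mathbb C}\Omega$ to produce the undetermined constants $c_{2l}$, $c_{2l+1}$, and refer back to (\ref{eq:phi0mu}) and (\ref{Phi1mu}) for the base cases. Your treatment is in fact slightly more explicit than the paper's about why the $c_k\Omega$ term must appear and why the right-hand side lies in ${\mathcal F}_{\rm fin}\subset D(\hz)$.
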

\begin{proof}
The first and second  expressions are proven  in Theorem \ref{2}. 
From (\ref{eq:kbibun}), 
it follows that 
\begin{align*}
H_0 \Phi_{2l}^{\mu}&=\frac{1}{m}P_{{\rm f}\mu}\varphi_{2l}-2lH_{\rm I}\Phi_{2l-1}^{\mu} 
 +\sum_{j=1}^{l}{2l \choose 2j}E_{2j}\Phi_{2l-2j}^{\mu} 
 \quad(l \ge1) \\
 H_0\Phi_{2l+1}^{\mu}&= \frac{1}{m}P_{{\rm f}\mu}\varphi_{2l+1}-(2l+1)H_{\rm I}\Phi_{2l}^{\mu} 
 +\sum_{j=1}^{l}{2l+1 \choose 2j}E_{2j}\Phi_{2l+1-2j}^{\mu} 
\quad (l \ge 0). 
 \end{align*}
These prove the lemma.
\end{proof}
\begin{lemma} \label{9}
It is proven  that $a_2(\Lambda)$ can be expanded as 
$$ a_2(\Lambda)= \frac{2}{3m}\sum_{j=1}^8{\rm I}_j(\Lambda)+\frac{E_2(\Lambda)}{m}{\rm I}_9(\Lambda)-a_1(\Lambda) {\rm I}_{10}(\Lambda)+a_1(\Lambda)^2, $$ 
where ${\rm I}_j$ are given by~ 
\begin{align*}
\allowdisplaybreaks
&{\rm I}_1(\Lambda)=\frac{1}{4}\int\!\!\!\!\int \!\! dk_1dk_2\frac{|\hat{\phi}_1|^2|\hat{\phi}_2|^2}{\omega_1\omega_2}(\frac{|k_1|^2}{E_1^3}+\frac{|k_2|^2}{E_2^3})(\frac{1}{E_1}+\frac{1}{E_2})\frac{1}{E_{12}}, \\
&
{\rm I}_2(\Lambda)=\frac{1}{8}\int\!\!\!\!\int \!\! dk_1dk_2\frac{|\hat{\phi}_1|^2|\hat{\phi}_2|^2}{\omega_1\omega_2}(\frac{|k_1|^2}{E_1^4}+\frac{|k_2|^2}{E_2^4})\frac{1}{E_{12}}, \\
&
{\rm I}_3(\Lambda)=\frac{1}{8}\int\!\!\!\!\int \!\! dk_1dk_2\frac{|\hat{\phi}_1|^2|\hat{\phi}_2|^2}{\omega_1\omega_2}(\frac{1}{E_1^2}+\frac{1}{E_2^2})(\frac{1}{E_1}+\frac{1}{E_2})\frac{(k_1, k_2)}{E_{12}^2},\\
&
{\rm I}_4(\Lambda)=\frac{1}{4}\int\!\!\!\!\int \!\! dk_1dk_2\frac{|\hat{\phi}_1|^2|\hat{\phi}_2|^2}{\omega_1\omega_2}(\frac{|k_1|^2}{E_1^2}+\frac{|k_2|^2}{E_2^2})(\frac{1}{E_1}+\frac{1}{E_2})\frac{1}{E_{12}^2}, \\
&
{\rm I}_5(\Lambda)=\frac{1}{4}\int\!\!\!\!\int \!\! dk_1dk_2\frac{|\hat{\phi}_1|^2|\hat{\phi}_2|^2}{\omega_1\omega_2E_1^2E_2^2}\frac{(k_1, k_2)}{E_{12}},\\
&
{\rm I}_6(\Lambda)=\frac{1}{8}\int\!\!\!\!\int \!\! dk_1dk_2\frac{|\hat{\phi}_1|^2|\hat{\phi}_2|^2}{\omega_1\omega_2}(\frac{1}{E_1}+\frac{1}{E_2})^2\frac{|k_1|^2+|k_2|^2}{E_{12}^3},\\
&
{\rm I}_7(\Lambda)=\frac{1}{4}\int\!\!\!\!\int \!\! dk_1dk_2\frac{|\hat{\phi}_1|^2|\hat{\phi}_2|^2}{\omega_1\omega_2}(\frac{1}{E_1}+\frac{1}{E_2})^2\frac{(k_1, k_2)}{E_{12}^3},\\
&
{\rm I}_8(\Lambda)=\frac{1}{4}\int\!\!\!\!\int \!\! dk_1dk_2\frac{|\hat{\phi}_1|^2|\hat{\phi}_2|^2}{\omega_1\omega_2}(\frac{1}{E_1}+\frac{1}{E_2})\frac{(k_1, k_2)}{E_{12}^4},\\
&
{\rm I}_9(\Lambda)=\frac{1}{2}\int \frac{|\hat{\phi}(k)|^2|k|^2}{\omega(k)E(k)^4}dk,\\
&
{\rm I}_{10}(\Lambda)=\frac{1}{2}\int \frac{|\hat{\phi}(k)|^2}{\omega(k)E(k)^2}dk. 
\end{align*}
\end{lemma}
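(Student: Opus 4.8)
The plan is to extract the coefficient of $\alpha^4$ from the power series (\ref{eq:powerseries}). Writing its numerator as $1+\sum_{n\ge1}b_n\alpha^{2n}$ and its denominator as $1+\sum_{n\ge1}(b_n+c_n)\alpha^{2n}$ and expanding the geometric series to second order, the $\alpha^4$-coefficient is
\begin{align*}
a_2(\Lambda)=b_1(\Lambda)c_1(\Lambda)+c_1(\Lambda)^2-c_2(\Lambda).
\end{align*}
By Theorem \ref{2} we have $c_1(\Lambda)=-a_1(\Lambda)$, so this reads $a_2=-a_1b_1+a_1^2-c_2$. Matching with the claimed formula, I expect $b_1=\mathrm{I}_{10}$ (producing the term $-a_1\mathrm{I}_{10}$), the term $a_1^2$ to survive unchanged, and the remaining $-c_2$ to equal $\frac{2}{3m}\sum_{j=1}^8\mathrm{I}_j+\frac{E_2}{m}\mathrm{I}_9$. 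So the proof reduces to these two identifications.

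For $b_1$, the relations $(\varphi_n,\Omega)=\delta_{0n}$ annihilate the cross terms in $(\psi_g,\psi_g)$, leaving $b_1=\|\varphi_1\|^2=\|\hz H_{\rm I}\Omega\|^2$. Since $\hz H_{\rm I}\Omega=\hz A^{+}\Omega$ has one-particle kernel $g(k)/(\sqrt2\,E(k))$, a direct computation gives $b_1=\frac12\int|\hat\phi(k)|^2/(\omega(k)E(k)^2)\,dk=\mathrm{I}_{10}$.

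For $-c_2$, I would expand $-\frac23\sum_\mu(P_{{\rm f}\mu}\psi_g,\psi'_{g\mu}(0))=\sum_{n}c_n\alpha^{2n}$ using (\ref{eq:tenkai1}) and $\psi'_{g\mu}(0)=\sum_n(\alpha^n/n!)\Phi_n^\mu$, and collect the order-$\alpha^4$ terms. Among the pairs $(n,m)$ with $n+m=4$, the cases $(0,4)$ and $(4,0)$ vanish because $P_{{\rm f}\mu}\Omega=0$, so
\begin{align*}
c_2=-\frac23\sum_{\mu=1}^3\Big\{\tfrac16(P_{{\rm f}\mu}\varphi_1,\Phi_3^\mu)+\tfrac14(P_{{\rm f}\mu}\varphi_2,\Phi_2^\mu)+\tfrac16(P_{{\rm f}\mu}\varphi_3,\Phi_1^\mu)\Big\}.
\end{align*}
I would then insert the recurrence formulas (\ref{phi2l})--(\ref{phi2lplus1}) for $\varphi_2,\varphi_3$ and (\ref{Phi1mu}), (\ref{phimu2l}), (\ref{phimu2lplus1}) for $\Phi_1^\mu,\Phi_2^\mu,\Phi_3^\mu$, so that every entry becomes a nested string of $A^{\pm}$, $\hz$, and $P_{{\rm f}\mu}$ applied to $\Omega$, plus the $E_2$- and $c_k\Omega$-terms. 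Commuting the annihilation parts to the right and using $A^{-}\Omega=0$ collapses each inner product to an explicit double integral in $k_1,k_2$ with denominators built from $E_1,E_2,E_{12}$; grouping these reproduces $\mathrm{I}_1,\dots,\mathrm{I}_8$ and the single term $\frac{E_2}{m}\mathrm{I}_9$ carrying the factor $E_2$.

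The main obstacle is this last reduction. Each $\Phi_k^\mu$ carries an undetermined constant $c_k$ coming from $\ker\hz=\mathbb{C}\Omega$, and the symmetrizers in $A^{\pm}$ generate a large number of contractions. As in Theorem \ref{2}, I expect the $c_k\Omega$-contributions to drop out: terms $(P_{{\rm f}\mu}\varphi_k,\Omega)$ vanish outright, and the surviving $c_k$-dependent pieces cancel after the sum over $\mu$ by the same parity/rotational identities $\sum_\mu(P_{{\rm f}\mu}\hz H_{\rm I}\Omega,\hz H_{\rm I}\Omega)=\sum_\mu(P_{{\rm f}\mu}\hz H_{\rm I}\Omega,\Omega)=0$ used there. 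Checking that the genuinely surviving contractions assemble into exactly $\mathrm{I}_1,\dots,\mathrm{I}_9$ with the stated rational prefactors, and that precisely one power of $E_2$ is left over, is the delicate, computation-heavy step; everything else is the routine algebra summarized above.
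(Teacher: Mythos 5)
Your proposal follows essentially the same route as the paper's proof in Section 4.1: extract the $\alpha^4$ coefficient of (\ref{eq:powerseries}) to get $a_2=-c_2-b_1a_1+a_1^2$, identify $b_1=(\varphi_1,\varphi_1)={\rm I}_{10}(\Lambda)$, reduce $c_2$ to the three pairings $(P_{{\rm f}\mu}\varphi_1,\Phi_3^{\mu})$, $(P_{{\rm f}\mu}\varphi_2,\Phi_2^{\mu})$, $(P_{{\rm f}\mu}\varphi_3,\Phi_1^{\mu})$ with the same combinatorial prefactors, and then substitute the recurrence formulas, after which the $c_k\Omega$-contributions vanish by exactly the identities you cite (the paper's terms $(10)$--$(21)$) and the nine surviving contractions yield ${\rm I}_1,\dots,{\rm I}_9$. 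The only difference is that you defer the explicit Wick-contraction computation of those nine inner products, which the paper carries out term by term; your structural claims (which terms vanish, where the single factor of $E_2$ enters) all agree with what that computation produces.
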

The proof of Lemma \ref{9} is given in the next section. 
The asymptotic behaviors of terms ${\rm I}_j(\Lambda) $ as $\Lambda\to\infty$ 
is given in the lemma below. 
Only two terms ${\rm I}_1(\Lambda)$ and ${\rm I}_2(\Lambda)$ logarithmically diverge, and other terms converge as $\Lambda\to\infty$.  
\begin{lemma} \label{oosawa}
(1)-(3) follow the following:
\begin{enumerate}
\item[(1)]
There exist some constants $C_3$ and $C_4$ such that 
$ C_3 \le \lim_{\Lambda\to\infty}\frac{{\rm I}_1(\Lambda)}{\log\Lambda}\le C_4$. 
\item[(2)]
There exist some constants $C_5$ and $C_6$ such that
$C_5 \le \lim_{\Lambda\to\infty}\frac{{\rm I}_2(\Lambda)}{\log\Lambda}\le C_6$.
\item[(3)] For $j=3,4,5,6,7,8$ 
$\lim_{\Lambda\to\infty}|{\rm I}_j(\Lambda)|<\infty$. 
\end{enumerate}
\end{lemma}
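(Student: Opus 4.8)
The plan is to collapse each double integral \(I_j(\Lambda)\) to a radial double integral over \((r_1,r_2)\in[\kappa,\Lambda]^2\) by performing the angular integration over the relative angle \(\theta\) between \(k_1\) and \(k_2\) first, and then to read off the \(\Lambda\to\infty\) behavior by a regime-by-regime power count. Writing \(u=\cos\theta\), one has \(E_{12}=A+Bu\) with \(A=\frac{r_1^2+r_2^2}{2m}+\omega_1+\omega_2\) and \(B=\frac{r_1r_2}{m}\), and \((k_1,k_2)=r_1r_2u\). The elementary integrals \(\int_{-1}^1(A+Bu)^{-n}\,du\) and \(\int_{-1}^1 u(A+Bu)^{-n}\,du\) are explicit: for \(n=1\) they yield \(\frac1B\log\frac{A+B}{A-B}\), and for \(n\ge2\) rational functions of \(A\pm B\). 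The single fact driving everything is the non-degeneracy \(A-B=\frac{(r_1-r_2)^2}{2m}+\omega_1+\omega_2\ge\omega_1+\omega_2\ge\max(r_1,r_2)\), together with \(A+B\asymp r_1^2+r_2^2+1\). These bound the angular factors by negative powers of the larger radius and, crucially, kill any spurious enhancement from the back-to-back configuration \(k_2\approx-k_1\), where \(E_{12}\) is smallest. Since every integrand is symmetric under \(k_1\leftrightarrow k_2\), the two anisotropic regions below contribute equally.

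For part (3) I would show that after angular integration the radial integrand of each \(I_j\) (\(j=3,\dots,8\)) is dominated, uniformly in \(\Lambda\), by a fixed integrable function on \([\kappa,\infty)^2\), so that \(I_j(\Lambda)\) converges by dominated convergence. On the diagonal \(r_1\asymp r_2\to\infty\) every integrand is superficially convergent: with \(\omega_i\asymp r_i\), \(E_i\asymp r_i^2\), \(E_{12}\asymp r_1^2+r_2^2\) and \((k_1,k_2),|k_i|^2\asymp r^2\), the integrand times the measure \(r_1^2r_2^2\,dr_1dr_2\) scales like \(r^{-D}\) with \(D\ge4\); the factor \(A^2-B^2=(A-B)(A+B)\gtrsim r^3\) controls the angular integrals of \(E_{12}^{-2}\) and higher. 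The only delicate regime is the anisotropic one \(r_2\to\infty\), \(r_1=O(1)\), where \(E_{12}\asymp r_2^2\). There each of \(I_3,\dots,I_8\) carries either an extra power \(E_{12}^{-1}\) or a factor \((k_1,k_2)\asymp r_2\) that is overwhelmed by the high negative powers of \(E_2\) or \(E_{12}\), so the \(k_2\)-integrand decays strictly faster than \(r_2^{-1}\) and this region contributes \(O(1)\). Assembling the regimes gives the required majorant.

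For parts (1) and (2) the same decomposition isolates the logarithm. In both \(I_1\) and \(I_2\) the only region failing absolute integrability is \(r_2\to\infty\), \(r_1=O(1)\), and there exactly one term survives: the piece in which the numerator \(|k_1|^2/E_1^4\) depends only on the bounded variable while the large variable enters solely through \(\omega_2^{-1}E_{12}^{-1}\). Since \(\int d\Omega_2\,E_{12}^{-1}\sim 8\pi m\,r_2^{-2}\) uniformly for bounded \(k_1\), the \(k_2\)-integral behaves like \(8\pi m\int^{\Lambda}r_2^{-1}\,dr_2\sim 8\pi m\log\Lambda\), producing a term \(c_j\big(\int_{\mathbb{R}^3}\frac{|\hat\phi_1|^2|k_1|^2}{\omega_1E_1^4}\,dk_1\big)\log\Lambda+O(1)\) with an explicit positive \(c_j\); the symmetric region \(r_1\to\infty\) doubles this for \(I_1\) and contributes nothing new for \(I_2\). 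The prefactor is a convergent, strictly positive integral, so dividing by \(\log\Lambda\) sandwiches the limit between positive constants \(C_3\le C_4\) (resp. \(C_5\le C_6\)); concretely the two-sided bound follows by estimating \(E_{12}^{-1}\) above and below by multiples of \((r_1^2+r_2^2+1)^{-1}\) on that region.

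The main obstacle, and the step needing the most care, is precisely this anisotropy. A naive uniform scaling makes every \(I_j\) look convergent, since the diagonal degree is \(D\ge4\); the logarithm in \(I_1,I_2\) is invisible unless one separates the hard/soft region, and conversely one must verify that the mild diagonal enhancement of the angular factor \(\log\frac{A+B}{A-B}\), which is \(O(\log r)\) on the exact diagonal, does not upgrade any of \(I_3,\dots,I_8\) to a divergence. Balancing these competing effects is exactly what the non-degeneracy \(A-B\ge\max(r_1,r_2)\) secures, and checking it term by term for the eight integrands is the bulk of the work.
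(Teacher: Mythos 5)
Your proposal is correct and shares the paper's core mechanism---reduction to radial variables with $E_{12}=A+Bz$, $z=\cos\theta$, the non-degeneracy $A-B=\frac{(r_1-r_2)^2}{2m}+\omega_1+\omega_2\ge\max(r_1,r_2)$, a diagonal/anisotropic splitting with power counting, and extraction of the logarithm in ${\rm I}_1,{\rm I}_2$ from the region where one radius stays bounded---but it packages the angular variable differently, and the comparison is instructive. The paper never integrates out $z$ in closed form: for the even integrands ${\rm I}_4,{\rm I}_6$ it bounds $L(r_1,r_2,z)$ pointwise from below by $r_1$ or $(r_1-r_2)^2/2m$ and must then justify interchanging $\lim_{\Lambda\to\infty}$ with $\int_{-1}^1 dz$ via a uniform-Cauchy argument, which your single integrable majorant plus dominated convergence replaces in one stroke; for the integrands odd in $z$ (${\rm I}_3,{\rm I}_5,{\rm I}_7,{\rm I}_8$) it first symmetrizes $h(\cdot,z)+h(\cdot,-z)$ into a sign-definite combination carrying the extra factor $L(z)^{-1}L(-z)^{-1}$, and for ${\rm I}_8$ it even splits the $z$-interval at $1-r_1^{-1/4}r_2^{-1/2}$. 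Your closed-form bound $\int_{-1}^1(A+Bu)^{-n}\,du\le\frac{1}{(n-1)B(A-B)^{n-1}}$ delivers the same cancellation through the explicit prefactor $1/B\asymp m/(r_1r_2)$, and some such gain is genuinely required: using only $|z|\le1$ and $L\ge A-B\gtrsim r_1$ on the strip $r_2\in[r_1/2,2r_1]$ would leave an apparent logarithm in ${\rm I}_8$. For the lower bounds in (1)--(2) the paper discards all but a thin box $\kappa\le r_2\le\kappa+1$, $r_1\ge\kappa+1+\nu+m$ and bounds $L(r_1,r_2,1)$ there, whereas you sandwich $E_{12}^{-1}$ on the full anisotropic region and thereby also identify the coefficient of $\log\Lambda$; note only that the $k_1\leftrightarrow k_2$ mirror region doubles that coefficient for ${\rm I}_2$ exactly as it does for ${\rm I}_1$, which affects the values of the constants but not the statement.
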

The proof of Lemma \ref{oosawa} is technical and also given in the next section.
\begin{lemma}\label{E2logLambda}
It holds that 
\begin{align}
\lim_{\Lambda \to \infty} \frac{E_2(\Lambda)}{\log \Lambda}=-\frac{m}{\pi^2}
\label{eq;e2/loglamabda}.
\end{align}
\end{lemma}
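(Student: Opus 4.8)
The plan is to reduce $E_2$ to a single explicit integral and then extract its logarithmic divergence by an elementary asymptotic analysis in polar coordinates. By (\ref{E2l}) with $l=1$ we have $E_2=2(\Omega,H_{\rm I}\varphi_1)$, and by Lemma~\ref{lemma:varphi} $\varphi_1=-\hz H_{\rm I}\Omega$, so $E_2=-2(\Omega,H_{\rm I}\hz H_{\rm I}\Omega)$. First I would compute this vacuum expectation explicitly. Writing $H_{\rm I}=A^{+}+A^{-}$ and using $A^{-}\Omega=0$, we get $H_{\rm I}\Omega=A^{+}\Omega$, which is the one-particle vector $\tfrac{1}{\sqrt2}g$; since $H_0$ acts on the one-particle sector as multiplication by $E(k)$, the vector $\hz H_{\rm I}\Omega$ is $\tfrac{1}{\sqrt2}g/E$. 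Applying $H_{\rm I}=A^{+}+A^{-}$ once more, only the $A^{-}$ part survives the pairing with $\Omega$ (the $A^{+}$ part lands in the two-particle sector, which is orthogonal to $\Omega$), and the canonical commutation relations give $(\Omega,A^{-}\hz A^{+}\Omega)=\tfrac12(g,g/E)=\tfrac12\int|g(k)|^2/E(k)\,dk$. Hence, using $|g(k)|^2=|\hat{\phi}(k)|^2/\omega(k)$,
\begin{align*}
E_2(\Lambda)=-\int\frac{|\hat{\phi}(k)|^2}{\omega(k)E(k)}\,dk.
\end{align*}

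Next I would pass to polar coordinates, exactly as in the proof of Theorem~\ref{2}. Since the integrand depends only on $r=|k|$ and $\hat{\phi}$ is $(2\pi)^{-3/2}$ times the indicator of the shell $\kappa\le|k|\le\Lambda$, integrating out the angular variables yields
\begin{align*}
E_2(\Lambda)=-\frac{1}{2\pi^2}\int_{\kappa}^{\Lambda}\frac{r^2}{\omega(r)F(r)}\,dr.
\end{align*}
It then remains to determine the rate of divergence of this integral. The key estimate is the asymptotics of the integrand: from $\omega(r)=r+O(r^{-1})$ and $F(r)=\frac{r^2}{2m}+\omega(r)$ one obtains $\frac{r^2}{\omega(r)F(r)}=\frac{2m}{r}+O(r^{-2})$ as $r\to\infty$. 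Consequently the difference $\frac{r^2}{\omega(r)F(r)}-\frac{2m}{r}$ is integrable on $[\kappa,\infty)$ (here $\kappa>0$ keeps the lower endpoint harmless), so $\int_{\kappa}^{\Lambda}\bigl(\frac{r^2}{\omega(r)F(r)}-\frac{2m}{r}\bigr)dr$ stays bounded as $\Lambda\to\infty$, while $\int_{\kappa}^{\Lambda}\frac{2m}{r}\,dr=2m(\log\Lambda-\log\kappa)$. Dividing by $\log\Lambda$ and letting $\Lambda\to\infty$ gives $\lim_{\Lambda\to\infty}\frac{1}{\log\Lambda}\int_{\kappa}^{\Lambda}\frac{r^2}{\omega(r)F(r)}\,dr=2m$, whence $\lim_{\Lambda\to\infty}E_2(\Lambda)/\log\Lambda=-\frac{2m}{2\pi^2}=-\frac{m}{\pi^2}$, as claimed.

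I do not expect a genuine obstacle here: once the operator bookkeeping is done the computation is routine. The only points requiring care are, first, tracking the normalizations of $A^{\pm}$ and the inner-product convention in the canonical commutation relations so that the numerical constant $\tfrac{1}{2\pi^2}$ and the overall sign come out correctly; and second, the asymptotic expansion of the integrand, where one must retain enough terms to guarantee that the remainder is genuinely $O(r^{-2})$ (hence integrable) rather than merely $o(r^{-1})$. Both are elementary, so the proof should be short.
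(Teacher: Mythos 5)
Your proof is correct and follows essentially the same route as the paper: both reduce $E_2(\Lambda)$ to the explicit radial integral $-\frac{1}{2\pi^2}\int_{\kappa}^{\Lambda}\frac{r^2}{\omega(r)F(r)}\,dr$ and extract the logarithmic divergence from the asymptotics $\frac{r^2}{\omega(r)F(r)}\sim\frac{2m}{r}$. You merely spell out the Fock-space computation that the paper leaves implicit and replace its asymptotic-ratio argument with the equivalent integrable-remainder bound.
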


\begin{proof}
From (\ref{E2l}), we have
$E_2(\Lambda)=-\frac{1}{2\pi^2}\int_{\kappa}^{\Lambda}dr\frac{r^2}{\omega(r)F(r)}
$
and 
$  \lim_{r\to\infty}\frac{\frac{r^2}{\omega(r)F(r)}}{\frac{1}{r}}=2m$. 
It implies \eqref{eq;e2/loglamabda}.
\end{proof}
Now we are in the position to state the main theorem in this paper. 
\begin{Theorem}\label{3}
There exist some constants $C_1$ and $C_2$ such that
\begin{align*}
C_1\le \lim_{\Lambda\to\infty}\frac{a_2(\Lambda)}{\log\Lambda}\le C_2. 
\end{align*}
\end{Theorem}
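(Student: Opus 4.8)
The plan is to take the explicit decomposition of $a_2(\Lambda)$ supplied by Lemma~\ref{9},
$$a_2(\Lambda)= \frac{2}{3m}\sum_{j=1}^8 {\rm I}_j(\Lambda)+\frac{E_2(\Lambda)}{m}{\rm I}_9(\Lambda)-a_1(\Lambda){\rm I}_{10}(\Lambda)+a_1(\Lambda)^2,$$
divide through by $\log\Lambda$, and identify the limit of each group of terms as $\Lambda\to\infty$. The only contributions that survive in the quotient are those that grow like $\log\Lambda$; everything that stays bounded is annihilated by the division. Thus the heart of this argument is bookkeeping: sorting out which terms diverge logarithmically and which remain finite, and then collecting constants.

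First I would dispose of the terms that vanish after division. By Lemma~\ref{oosawa}(3) each of ${\rm I}_3,\dots,{\rm I}_8$ has a finite limit, so $\tfrac1{\log\Lambda}\sum_{j=3}^8 {\rm I}_j(\Lambda)\to0$. For the remaining two ``bounded'' terms I would first record that ${\rm I}_{10}(\Lambda)$ converges: passing to polar coordinates, its integrand behaves like $r^{2}/(\omega(r)F(r)^2)=O(r^{-3})$ as $r\to\infty$, exactly the type of power-counting already used for $a_1$ in the proof of Theorem~\ref{2}, so the improper integral converges. Combined with $\lim_{\Lambda\to\infty}a_1(\Lambda)=C$ from Theorem~\ref{2}, both $a_1(\Lambda){\rm I}_{10}(\Lambda)/\log\Lambda$ and $a_1(\Lambda)^2/\log\Lambda$ tend to $0$.

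Next I would treat $E_2(\Lambda){\rm I}_9(\Lambda)/m$, which is the genuinely new contribution. A power-counting in polar coordinates shows the integrand of ${\rm I}_9$ behaves like $r^{4}/(\omega(r)F(r)^4)=O(r^{-5})$, so ${\rm I}_9(\Lambda)$ converges to a finite positive constant ${\rm I}_9(\infty)$. By Lemma~\ref{E2logLambda}, $E_2(\Lambda)/\log\Lambda\to-m/\pi^2$, hence this term divided by $\log\Lambda$ tends to $-{\rm I}_9(\infty)/\pi^2$, a finite number. Finally, the two logarithmically divergent integrals are controlled by Lemma~\ref{oosawa}(1)--(2): the limits $\ell_1=\lim_{\Lambda\to\infty}{\rm I}_1(\Lambda)/\log\Lambda$ and $\ell_2=\lim_{\Lambda\to\infty}{\rm I}_2(\Lambda)/\log\Lambda$ exist and satisfy $C_3\le\ell_1\le C_4$ and $C_5\le\ell_2\le C_6$.

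Collecting everything, every piece has a limit, so the quotient converges and
$$\lim_{\Lambda\to\infty}\frac{a_2(\Lambda)}{\log\Lambda}=\frac{2}{3m}(\ell_1+\ell_2)-\frac{{\rm I}_9(\infty)}{\pi^2}.$$
Since $\tfrac{2}{3m}>0$, the stated bounds follow with $C_1=\frac{2}{3m}(C_3+C_5)-{\rm I}_9(\infty)/\pi^2$ and $C_2=\frac{2}{3m}(C_4+C_6)-{\rm I}_9(\infty)/\pi^2$. I expect the main obstacle to lie not in this assembly but in the analytic estimates packaged by Lemma~\ref{oosawa} --- isolating the precise logarithmic rate of ${\rm I}_1,{\rm I}_2$ while proving the uniform boundedness of ${\rm I}_3,\dots,{\rm I}_8$ (the cancellations among the angular factors $(k_1,k_2)$ and the competing powers of $E_1,E_2,E_{12}$ are delicate), which is deferred to the next section.
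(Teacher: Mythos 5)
Your proposal is correct and follows essentially the same route as the paper: it invokes Lemma \ref{9} for the decomposition, Lemma \ref{oosawa} for the logarithmic terms ${\rm I}_1,{\rm I}_2$ and the bounded terms ${\rm I}_3,\dots,{\rm I}_8$, Lemma \ref{E2logLambda} for $E_2(\Lambda)/\log\Lambda\to -m/\pi^2$, and the same $O(r^{-5})$ and $O(r^{-3})$ power counting in polar coordinates to show that ${\rm I}_9$ and ${\rm I}_{10}$ converge. The only detail your write-up makes explicit that the paper leaves implicit is the final assembly of the constants $C_1$ and $C_2$, which is a welcome clarification rather than a deviation.
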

\begin{proof}
We have
$ 
{\rm I}_9(\Lambda)=\frac{1}{4\pi^2}\int_{\kappa}^{\Lambda}\frac{r^4}{\omega(r)F(r)^4}dr$.  
Since 
$  \frac{r^4}{\omega(r)F(r)^4}=O(r^{-5})\  (r\to\infty)$, 
we have 
\begin{align}\label{o2}
\lim_{\Lambda\to\infty}|{\rm I}_9(\Lambda)|<\infty. 
\end{align}
We also have
$ 
{\rm I}_{10}(\Lambda)=\frac{1}{4\pi^2}\int_{\kappa}^{\Lambda}\frac{r^2}{\omega(r)F(r)^2}dr$. 
Then 
$ \frac{r^2}{\omega(r)F(r)^2}=O(r^{-3})\  (r \to\infty)$ 
and 
we also have 
\begin{align}\label{o3}
\lim_{\Lambda\to\infty}|{\rm I}_{10}(\Lambda)|<\infty. 
\end{align}
By \eqref{o2} and \eqref{o3}, Theorem \ref{2} and  Lemmas \ref{9} and 
\ref{oosawa}, and \ref{E2logLambda} we can conclude the theorem.
\end{proof}

\section{Proof of  Lemmas \ref{9} and \ref{oosawa}}
In this section we prove  Lemmas \ref{9} and \ref{oosawa}. 
\subsection{Proof of  Lemma \ref{9}}
From (\ref{eq:powerseries}) and $a_1(\Lambda)=-c_1(\Lambda)$, we have
$
a_2(\Lambda)=-c_2(\Lambda)-b_1(\Lambda)a_1(\Lambda)+a_1(\Lambda)^2$. 
Here 
\begin{align}
b_1(\Lambda)&=(\varphi_1, \varphi_1)=\frac{1}{2}\int \frac{|\hat{\phi(k)|^2}}{\omega(k)E(k)^2}dk,\notag\\
c_2(\Lambda)&=-\frac{2}{3}\{\frac{1}{0!4!}\sum_{\mu=1}^3(P_{{\rm f}\mu}\varphi_0, \Phi_4^{\mu})+\frac{1}{1!3!}\sum_{\mu=1}^3(P_{{\rm f}\mu}\varphi_1, \Phi_3^{\mu}) \notag \\
&+\frac{1}{2!2!}\sum_{\mu=1}^3(P_{{\rm f}\mu}\varphi_2, \Phi_2^{\mu})+\frac{1}{3!1!}\sum_{\mu=1}^3(P_{{\rm f}\mu}\varphi_3, \Phi_1^{\mu})+\frac{1}{4!0!}\sum_{\mu=1}^3(P_{ f_{\mu}}\varphi_4, \Phi_0^{\mu})\} \notag \\
&=-\frac{1}{9}\sum_{\mu=1}^3(P_{{\rm f}\mu}\varphi_1, \Phi_3^{\mu})-\frac{1}{6}\sum_{\mu=1}^3(P_{{\rm f}\mu}\varphi_2, \Phi_2^{\mu})-\frac{1}{9}\sum_{\mu=1}^3(P_{ f_{\mu}}\varphi_3, \Phi_1^{\mu}). \label{eq:c2}
\end{align}
Using recurrence formulas (\ref{phi2l}),  (\ref{phi2lplus1}), (\ref{phimu2l}), and (\ref{phimu2lplus1}),  we have
\begin{align*}
 \varphi_2&=2(\hz H_{\rm I})^2\Omega, \\
 \varphi_3&=-6(\hz H_{\rm I})^3\Omega-3E_2(\hz )^2H_{\rm I}\Omega, \\
\Phi_{2}^{\mu}&=\frac{2}{m}\hz P_{{\rm f}\mu}(\hz H_{\rm I})^2\Omega+\frac{2}{m}\hz H_{\rm I}\hz P_{{\rm f}\mu}\hz H_{\rm I}\Omega \\
&+2c_0(\hz H_{\rm I})^2\Omega-2c_1\hz H_{\rm I}\Omega+c_2\Omega 
\end{align*}
and
\begin{align*}
\Phi_3^{\mu}&=-\frac{6}{m}\hz P_{{\rm f}\mu}(\hz H_{\rm I})^3\Omega-\frac{6}{m}\hz H_{\rm I}\hz P_{{\rm f}\mu}(\hz H_{\rm I})^2\Omega-\frac{6}{m}(\hz H_{\rm I})^2\hz P_{{\rm f}\mu}\hz H_{\rm I}\Omega \\
&-\frac{3}{m}E_2\hz P_{{\rm f}\mu}(\hz )^2H_{\rm I}\Omega-\frac{3}{m}E_2(\hz )^2P_{ f_{\mu}}\hz H_{\rm I}\Omega \\
&-6c_0(\hz H_{\rm I})^3\Omega+6c_1(\hz H_{\rm I})^2\Omega-3c_2\hz H_{\rm I}\Omega-3c_0E_2(\hz )^2H_{\rm I}\Omega+c_3\Omega.\end{align*}
Substituting them into (\ref{eq:c2}), we have
\allowdisplaybreaks
\begin{align*}
&c_2(\Lambda)=-\frac{2}{3m}\{\sum_{\mu=1}^3(P_{{\rm f}\mu}\hz H_{\rm I}\Omega, \hz P_{ f_{\mu}}(\hz H_{\rm I})^3\Omega)+\sum_{\mu=1}^3(P_{{\rm f}\mu}\hz H_{\rm I}\Omega, \hz H_{\rm I}\hz P_{{\rm f}\mu}(\hz H_{\rm I})^2\Omega)  \\
&+\sum_{\mu=1}^3(P_{{\rm f}\mu}\hz H_{\rm I}\Omega, (\hz H_{\rm I})^2\hz P_{ f_{\mu}}\hz H_{\rm I}\Omega)+\sum_{\mu=1}^3(P_{{\rm f}\mu}(\hz H_{\rm I})^2\Omega, \hz P_{ f_{\mu}}(\hz H_{\rm I})^2\Omega)     \\
& +\sum_{\mu=1}^3(P_{ f_{\mu}}(\hz H_{\rm I})^2\Omega, \hz H_{\rm I}\hz P_{ f_{\mu}}\hz H_{\rm I}\Omega)+\sum_{\mu=1}^3(P_{ f_{\mu}}(\hz H_{\rm I})^3\Omega, \hz P_{{\rm f}\mu}\hz H_{\rm I}\Omega) \} \\
& -\frac{E_2}{3m}\{ \sum_{\mu=1}^3(P_{{\rm f}\mu}\hz H_{\rm I}\Omega, \hz P_{{\rm f}\mu}(\hz )^2H_{\rm I}\Omega)+\sum_{\mu=1}^3(P_{{\rm f}\mu}\hz H_{\rm I}\Omega, (\hz )^2P_{{\rm f}\mu}\hz H_{\rm I}\Omega) \\
&+\sum_{\mu=1}^3(P_{{\rm f}\mu}(\hz )^2H_{\rm I}\Omega, \hz P_{{\rm f}\mu}\hz H_{\rm I}\Omega) \} 
-\frac{2c_0}{3}\sum_{\mu=1}^3(P_{{\rm f}\mu}\hz H_{\rm I}\Omega, (\hz H_{\rm I})^3\Omega)   \\
& +\frac{2c_1}{3}\sum_{\mu=1}^3(P_{{\rm f}\mu}\hz H_{\rm I}\Omega, (\hz H_{\rm I})^2\Omega)-\frac{c_2}{3}\sum_{\mu=1}^3(P_{{\rm f}\mu}\hz H_{\rm I}\Omega, \hz H_{\rm I}\Omega) \\
&-\frac{c_0E_2}{3}\sum_{\mu=1}^3(P_{{\rm f}\mu}\hz H_{\rm I}\Omega, (\hz )^2H_{\rm I}\Omega)+\frac{c_3}{9}\sum_{\mu=1}^3(P_{{\rm f}\mu}\hz H_{\rm I}\Omega, \Omega)   \\
&-\frac{2c_0}{3}\sum_{\mu=1}^3(P_{{\rm f}\mu}(\hz H_{\rm I})^2\Omega, (\hz H_{\rm I})^2\Omega)+\frac{2c_1}{3}\sum_{\mu=1}^3(P_{{\rm f}\mu}(\hz H_{\rm I})^2\Omega,  \hz H_{\rm I}\Omega) \\
&-\frac{c_2}{3}\sum_{\mu=1}^3(P_{{\rm f}\mu}(\hz H_{\rm I})^2\Omega, \Omega) -\frac{2c_0}{3}\sum_{\mu=1}^3(P_{{\rm f}\mu}(\hz H_{\rm I})^3\Omega, \hz H_{\rm I}\Omega)     \\
&+\frac{2c_1}{3}\sum_{\mu=1}^3(P_{{\rm f}\mu}(\hz H_{\rm I})^3\Omega, \Omega)-\frac{c_0E_2}{3}\sum_{\mu=1}^3(P_{{\rm f}\mu}(\hz )^2H_{\rm I}\Omega, \hz H_{\rm I}\Omega) \\
&+\frac{c_1E_2}{3}\sum_{\mu=1}^3(P_{{\rm f}\mu}(\hz )^2H_{\rm I}\Omega, \Omega)=\sum_{j=1}^{21}(j).  
\end{align*}
We estimate 21 terms $(1)-(21)$ above. 
We can however directly see that $0=(10)=\cdots=(21)$: 
\begin{align*}
&\sum_{\mu=1}^3(P_{{\rm f}\mu}\hz H_{\rm I}\Omega, (\hz H_{\rm I})^3\Omega)
=\sum_{\mu=1}^3(P_{{\rm f}\mu}\hz H_{\rm I}\Omega, (\hz H_{\rm I})^2\Omega)
=\sum_{\mu=1}^3(P_{{\rm f}\mu}\hz H_{\rm I}\Omega, \hz H_{\rm I}\Omega)\\
&
=\sum_{\mu=1}^3(P_{{\rm f}\mu}\hz H_{\rm I}\Omega, (\hz )^2H_{\rm I}\Omega)
=\sum_{\mu=1}^3(P_{{\rm f}\mu}\hz H_{\rm I}\Omega, \Omega) 
=\sum_{\mu=1}^3(P_{{\rm f}\mu}(\hz H_{\rm I})^2\Omega, (\hz H_{\rm I})^2\Omega)\\
&
=\sum_{\mu=1}^3(P_{{\rm f}\mu}(\hz H_{\rm I})^2\Omega,  \hz H_{\rm I}\Omega)
=\sum_{\mu=1}^3(P_{{\rm f}\mu}(\hz H_{\rm I})^2\Omega, \Omega)
=\sum_{\mu=1}^3(P_{{\rm f}\mu}(\hz H_{\rm I})^3\Omega, \hz H_{\rm I}\Omega)\\
&
=\sum_{\mu=1}^3(P_{{\rm f}\mu}(\hz H_{\rm I})^3\Omega, \Omega)
=\sum_{\mu=1}^3(P_{{\rm f}\mu}(\hz )^2H_{\rm I}\Omega, \hz H_{\rm I}\Omega) 
=\sum_{\mu=1}^3(P_{{\rm f}\mu}(\hz )^2H_{\rm I}\Omega, \Omega)=0.
\end{align*}
We can compute remaining terms $(1)-(9)$ as 
\begin{align*}
(1)\  & \sum_{\mu=1}^3(P_{{\rm f}\mu}\hz H_{\rm I}\Omega, \hz P_{{\rm f}\mu}(\hz H_{\rm I})^3\Omega) =\sum_{\mu=1}^3(A^{+}\hz P_{{\rm f}\mu}\hz P_{{\rm f}\mu}\hz A^{+}\Omega, (\hz A^{+})^2\Omega)\\ 
&= \frac{1}{8}\int\!\!\!\!\int\!\!dk_1dk_2\frac{|\hat{\phi}_1|^2|\hat{\phi}_2|^2}{\omega_1\omega_2}(\frac{|k_1|^2}{E_1^3}+\frac{|k_2|^2}{E_2^3})(\frac{1}{E_1}+\frac{1}{E_2})\frac{1}{E_{12}},\\
(2)\  
&\sum_{\mu=1}^3(P_{{\rm f}\mu}\hz H_{\rm I}\Omega, \hz H_{\rm I}\hz P_{{\rm f}\mu}(\hz H_{\rm I})^2\Omega) =\sum_{\mu=1}^3(A^{+}\hz P_{{\rm f}\mu}\hz A^{+}\Omega,  \hz P_{{\rm f}\mu}(\hz A^{+})^2\Omega) \\
&=\frac{1}{8}\int\!\!\!\!\int\!\!dk_1dk_2\frac{|\hat{\phi}_1|^2|\hat{\phi}_2|^2}{\omega_1\omega_2}(\frac{|k_1|^2}{E_1^2}+\frac{|k_2|^2}{E_2^2})(\frac{1}{E_1}+\frac{1}{E_2})\frac{1}{E_{12}^2} \\
&\hspace{1cm}+\frac{1}{8}\int\!\!\!\!\int\!\!dk_1dk_2\frac{|\hat{\phi}_1|^2|\hat{\phi}_2|^2}{\omega_1\omega_2}(\frac{1}{E_1^2}
+\frac{1}{E_2^2})(\frac{1}{E_1}+\frac{1}{E_2})\frac{(k_1, k_2)}{E_{12}^2}. \\
(3)\  &\sum_{\mu=1}^3(P_{{\rm f}\mu}\hz H_{\rm I}\Omega, (\hz H_{\rm I})^2\hz P_{{\rm f}\mu}\hz H_{\rm I}\Omega) =\sum_{\mu=1}^3(A^{+}\hz P_{{\rm f}\mu}\hz A^{+}\Omega,  \hz A^{+}\hz P_{{\rm f}\mu}\hz A^{+}\Omega)  \\
&=\frac{1}{8}\int\!\!\!\!\int\!\!dk_1dk_2\frac{|\hat{\phi}_1|^2|\hat{\phi}_2|^2}{\omega_1\omega_2}(\frac{|k_1|^2}{E_1^4}+\frac{|k_2|^2}{E_2^4})\frac{1}{E_{12}} 
+\frac{1}{4}\int\!\!\!\!\int\!\!dk_1dk_2\frac{|\hat{\phi}_1|^2|\hat{\phi}_2|^2}{\omega_1\omega_2E_1^2E_2^2}\frac{(k_1, k_2)}{E_{12}}.  \\
(4)\  &\sum_{\mu=1}^3(P_{{\rm f}\mu}(\hz H_{\rm I})^2\Omega, \hz P_{{\rm f}\mu}(\hz H_{\rm I})^2\Omega) 
=\sum_{\mu=1}^3(P_{{\rm f}\mu}(\hz A^{+})^2\Omega, \hz P_{{\rm f}\mu}(\hz A^{+})^2\Omega)\\
& =\frac{1}{8}\int\!\!\!\!\int\!\!dk_1dk_2\frac{|\hat{\phi}_1|^2|\hat{\phi}_2|^2}{\omega_1\omega_2}(\frac{1}{E_1}+\frac{1}{E_2})^2\frac{|k_1|^2+|k_2|^2}{E_{12}^3}
+\frac{1}{4}\int\!\!\!\!\int\!\!dk_1dk_2\frac{|\hat{\phi}_1|^2|\hat{\phi}_2|^2}{\omega_1\omega_2}(\frac{1}{E_1}+\frac{1}{E_2})^2\frac{(k_1, k_2)}{E_{12}^3}. \\
(5)\  
&\sum_{\mu=1}^3(P_{{\rm f}\mu}(\hz H_{\rm I})^2\Omega, \hz H_{\rm I}\hz P_{{\rm f}\mu}\hz H_{\rm I}\Omega) 
=\sum_{\mu=1}^3(P_{{\rm f}\mu}(\hz A^{+})^2\Omega, \hz A^{+}\hz P_{{\rm f}\mu}\hz A^{+}\Omega) \\
&=\frac{1}{8}\int\!\!\!\!\int\!\!dk_1dk_2\frac{|\hat{\phi}_1|^2|\hat{\phi}_2|^2}{\omega_1\omega_2}(\frac{|k_1|^2}{E_1^2}+\frac{|k_2|^2}{E_2^2})(\frac{1}{E_1}+\frac{1}{E_2})\frac{1}{E_{12}^2} +\frac{1}{4}\int\!\!\!\!\int\!\!dk_1dk_2\frac{|\hat{\phi}_1|^2|\hat{\phi}_2|^2}{\omega_1\omega_2}(\frac{1}{E_1}+\frac{1}{E_2})\frac{(k_1, k_2)}{E_{12}^4}.\\
(6)\ 
&\sum_{\mu=1}^3(P_{{\rm f}\mu}(\hz H_{\rm I})^3\Omega, \hz P_{{\rm f}\mu}\hz H_{\rm I}\Omega) 
=\sum_{\mu=1}^3((\hz A^{+})^2\Omega, A^{+}\hz P_{{\rm f}\mu}\hz P_{{\rm f}\mu}\hz A^{+}\Omega) \\
&=\frac{1}{8}\int\!\!\!\!\int\!\!dk_1dk_2\frac{|\hat{\phi}_1|^2|\hat{\phi}_2|^2}{\omega_1\omega_2}(\frac{|k_1|^2}{E_1^3}+\frac{|k_2|^2}{E_2^3})(\frac{1}{E_1}+\frac{1}{E_2})\frac{1}{E_{12}}.\\
(7)\  
&\sum_{\mu=1}^3(P_{{\rm f}\mu}\hz H_{\rm I}\Omega, \hz P_{{\rm f}\mu}(\hz )^2H_{\rm I}\Omega)=\frac{1}{2}\int\frac{|\hat{\phi}(k)|^2|k|^2}{\omega(k)E(k)^4}dk.\\
(8)\ 
&\sum_{\mu=1}^3(P_{{\rm f}\mu}\hz H_{\rm I}\Omega, (\hz )^2P_{{\rm f}\mu}\hz H_{\rm I}\Omega) =\frac{1}{2}\int\frac{|\hat{\phi}(k)|^2|k|^2}{\omega(k)E(k)^4}dk.\\
(9)\ 
&\sum_{\mu=1}^3(P_{{\rm f}\mu}(\hz )^2H_{\rm I}\Omega, \hz P_{{\rm f}\mu}\hz H_{\rm I}\Omega)
=\frac{1}{2}\int\frac{|\hat{\phi}(k)|^2|k|^2}{\omega(k)E(k)^4}dk.  
\end{align*}
Thus the lemma follows.
\qed
\subsection{Proof of Lemma \ref{oosawa}}
\subsubsection{Proof of  
$C_3\leq \lim_{\Lambda\to\infty}\frac{{\rm I}_1(\Lambda)}{\log\Lambda}$ and
$C_5\leq \lim_{\Lambda\to\infty}\frac{{\rm I}_2(\Lambda)}{\log\Lambda}$}
Changing variables to polar coordinates, we have
\begin{align*}
{\rm I}_1(\Lambda)=\frac{2\pi^2}{(2\pi)^6}\int_{-1}^{1}dz\int_{\kappa}^{\Lambda}dr_1\int_{\kappa}^{\Lambda}\frac{r_1^2r_2^2}{\omega(r_1)\omega(r_2)}(\frac{r_1^2}{F(r_1)^3}+\frac{r_2^2}{F(r_2)^3}) 
(\frac{1}{F(r_1)}+\frac{1}{F(r_2)})\frac{1}{L(r_1, r_2, z)}dr_2,
\end{align*}
where 
\begin{align*}
L(r_1, r_2, z)=\frac{r_1^2+r_2^2+2r_1r_2z}{2m}+\omega(r_1)+\omega(r_2).
\end{align*}
We define $h(r_1, r_2),$ $h_1(r_1, r_2),$ $h_2(r_1, r_2),$ $S(\Lambda),$ $S_1(\Lambda)$ and $S_2(\Lambda)$ as
\begin{align*}
&h(r_1, r_2)=\frac{r_1^2r_2^2}{\omega(r_1)\omega(r_2)}(\frac{r_1^2}{F(r_1)^3}+\frac{r_2^2}{F(r_2)^3})(\frac{1}{F(r_1)}+\frac{1}{F(r_2)})\frac{1}{L(r_1, r_2, 1)},\\
&h_1(r_1, r_2)=\frac{r_1^2r_2^4}{\omega(r_1)\omega(r_2)F(r_2)^4L(r_1, r_2, 1)},\\
&h_2(r_1, r_2)=h(r_1, r_2)-h_1(r_1, r_2),\\
&S(\Lambda)=\int_{\kappa}^{\Lambda}\int_{\kappa}^{\Lambda}h(r_1, r_2)dr_1dr_2,\\
&S_1(\Lambda)=\int_{\kappa}^{\kappa+1}dr_2\int_{\kappa+1+\nu+m}^{\Lambda}h_1(r_1, r_2)dr_1,\\
&S_2(\Lambda)=\int_{\kappa}^{\kappa+1}dr_2\int_{\kappa}^{\kappa+1+\nu+m}h_1(r_1,  r_2)dr_1+\int_{\kappa+1}^{\Lambda}dr_1\int_{\kappa}^{\Lambda}h_1(r_1,
 r_2)dr_2
+\int_{\kappa}^{\Lambda}\int_{\kappa}^{\Lambda}h_2(r_1, r_2)dr_1dr_2.
\end{align*}
Then
\begin{align}
\frac{4\pi^2}{(2\pi)^6}S(\Lambda)\le{\rm I}_1(\Lambda). \label{eq:I1inferior1}
\end{align}
In addition, 
$S(\Lambda)=S_1(\Lambda)+S_2(\Lambda)$ 
 follows. Since $h_1(r_1,r_2)>0$ and $h_2(r_1, r_2)>0$, $S_2(\Lambda)>0$. Hence
\begin{align}
S(\Lambda)>S_1(\Lambda). \label{eq:SLambdaenequality1}
\end{align}
Let $r_2$ satisfy $\kappa\le r_2\le\kappa+1.$ Suppose that $\kappa+1+\nu+m \le r_1\le\Lambda$. Since $\nu<r_1$, $r_1^2+\nu^2<2r_1^2$ holds.  
Therefore we have
$ \omega(r_1)<\sqrt{2}r_1$. 
Since $r_2<r_1$, we  have 
$r_1r_2<r_1^2$ and $r_2^2<r_1^2$.
Thus
$
L(r_1, r_2, 1)<2(\frac{1}{m}+\sqrt{2})r_1^2$. 
So, 
\begin{align*}
\int_{\kappa+1+\nu+m}^{\Lambda}\frac{r_1^2}{\omega(r_1)L(r_1, r_2, 1)}dr_1>\frac{1}{2\sqrt{2}(\frac{1}{m}+\sqrt{2})}\int_{\kappa+1+\nu+m}^{\Lambda}
\!\!\! 
\!\!\! \!\!\! 
\!\!\! \!\!\! r_1^{-1}dr_1 
=\frac{1}{2\sqrt{2}(\frac{1}{m}+2)}(\log\Lambda-\log(\kappa+1+\nu+m))
\end{align*}
follows. When $\displaystyle \kappa\le r_2 \le \kappa+1,$ we have
\begin{align*}
&\omega(r_2)\le\sqrt{(\kappa+1)^2+\nu^2},\quad 
F(r_2)\le\frac{(\kappa+1)^2}{2m}+\sqrt{(\kappa+1)^2+\nu^2}.
\end{align*} 
Then 
\begin{align}
S_1(\Lambda)&=\int_{\kappa}^{\kappa+1}dr_2\frac{r_2^4}{\omega(r_2)F(r_2)^4}\int_{\kappa+1+\nu+m}^{\Lambda}\frac{r_1^2}{\omega(r_1)L(r_1, r_2, 1)}dr_1 \notag\\
&>\frac{K}{2\sqrt{2}(\frac{1}{m}+\sqrt{2})}(\log\Lambda-\log(\kappa+1+\nu+m))\int_{\kappa}^{\kappa+1}r_2^4dr_2 \notag \\
&=\frac{K\{(\kappa+1)^5-\kappa^5\}}{10\sqrt{2}(\frac{1}{m}+\sqrt{2})}(\log\Lambda-\log(\kappa+1+\nu+m)), \label{eq:SlargerthanS11}
\end{align}
where 
$$K=\frac{1}{\sqrt{(\kappa+1)^2+\nu^2}(\frac{(\kappa+1)^2}{2m}+\sqrt{(\kappa+1)^2+\nu^2})^4}.$$
From (\ref{eq:I1inferior1}), (\ref{eq:SLambdaenequality1}), and (\ref{eq:SlargerthanS11}), $C_3\leq \lim_{\Lambda\to\infty}\frac{{\rm I}_1(\Lambda)}{\log\Lambda}$
follows.

The proof of 
 $C_5\leq \lim_{\Lambda\to\infty}\frac{{\rm I}_2(\Lambda)}{\log\Lambda}$
  is similar to that of $C_3\leq \lim_{\Lambda\to\infty}\frac{{\rm I}_1(\Lambda)}{\log\Lambda}$. Then we omit it. 
\qed

\subsubsection{Proof of $\lim_{\Lambda\to\infty}\frac{{\rm I}_1(\Lambda)}{\log\Lambda}\leq C_4$}
We redefine $h(r_1, r_2),$ $h_1(r_1, r_2),$ $h_2(r_1, r_2),$ $S_1(\Lambda),$ $S_2(\Lambda),$ and $S(\Lambda)$ as 
\begin{align*}
&h(r_1, r_2)=\frac{r_1^2r_2^2}{\omega(r_1)\omega(r_2)}(\frac{r_1^2}{F(r_1)^3}+\frac{r_2^2}{F(r_2)^3})(\frac{1}{F(r_1)}+\frac{1}{F(r_2)})\frac{1}{L(r_1, r_2, -1)},\\
&h_1(r_1, r_2)=\frac{r_1^2r_2^4}{\omega(r_1)\omega(r_2)F(r_2)^4L(r_1, r_2, -1)},\\
&h_2(r_1, r_2)=\frac{r_1^2r_2^4}{\omega(r_1)\omega(r_2)F(r_1)F(r_2)^3L(r_1, r_2, -1)},\\
& S_1(\Lambda)=\int_{\kappa}^{\Lambda}\int_{\kappa}^{\Lambda}h_1(r_1, r_2)dr_1dr_2,\\
&S_2(\Lambda)=\int_{\kappa}^{\Lambda}\int_{\kappa}^{\Lambda}h_2(r_1, r_2)dr_1dr_2,\\
&S(\Lambda)=\int_{\kappa}^{\Lambda}\int_{\kappa}^{\Lambda}h(r_1, r_2)dr_1dr_2.
\end{align*}
Then we have
\begin{align}
{\rm I}_1(\Lambda)\le\frac{4\pi^2}{(2\pi)^6}S(\Lambda).\label {eq:I1superior}
\end{align}
Since 
$h(r_1, r_2)=h_1(r_1, r_2)+h_1(r_2, r_1)+h_2(r_1, r_2)+h_2(r_2, r_1)$,  we have
\begin{align}
S(\Lambda)=2(S_1(\Lambda)+S_2(\Lambda)). \label{eq;sums}
\end{align} 
Let $B$ be
$B=\frac{(\kappa+1)^3-\kappa^3}{6\nu^2}$. 
Since $\nu<\omega(r_1)$ and $2\nu< L(r_1, r_2, -1),$
\begin{align}
\int_{\kappa}^{\kappa+1}\frac{r_1^2}{\omega(r_1)L(r_1, r_2, -1)}dr_1<B \label{eq: lessthanB1}
\end{align}
follows. Let $Y$ be $Y=2r_2+\kappa+1$. Since $r_1<\omega(r_1)$ and $r_1<L(r_1, r_2, -1),$
\begin{align}
\int_{\kappa+1}^Y \frac{r_1^2}{\omega(r_1)L(r_1, r_2, -1)}dr_1<2r_2 \label{eq:lessthanr2}
\end{align}
holds. When $Y\le r_1,$ since $2r_2<r_1$, we have  $r_1-r_2>r_1/2$. Then
$L(r_1, r_2, -1)>\frac{(r_1-r_2)^2}{2m}>\frac{r_1^2}{8m}$. 
So,
\begin{align}
\int_Y^{\Lambda}\frac{r_1^2}{\omega(r_1)L(r_1, r_2,-1)}dr_1 &<8m\int_Y^{\Lambda}\frac{dr_1}{r_1}    <8m\log\Lambda \label{eq:lessthan8mloglambda}
\end{align}
follows. From (\ref{eq: lessthanB1}), (\ref{eq:lessthanr2}), and
 (\ref{eq:lessthan8mloglambda}), we have 
$$\int_{\kappa}^{\Lambda}\frac{r_1^2}{\omega(r_1)L(r_1, r_2, -1)}dr_1<B+2r_2+8m\log\Lambda.$$
Using this, we see that
$
S_1(\Lambda) <\int_{\kappa}^{\Lambda}(B+2r_2+8m\log\Lambda)\frac{r_2^4}{\omega(r_2)F(r_2)^4}dr_2
$. 
Since $r_2<\omega(r_2)$ and $r_2^2/2m<F(r_2)$, we have
\begin{align}
S_1(\Lambda)&<\int_{\kappa}^{\Lambda}(2r_2+B+8m\log\Lambda)\frac{16m^4}{r_2^5}dr_2 =\frac{32m^4}{3}(\frac{1}{\kappa^3}-\frac{1}{\Lambda^3})+4m^4(B+8m\log\Lambda)(\frac{1}{\kappa^4}-\frac{1}{\Lambda^4})  \notag \\
&<\frac{32m^5}{\kappa^4}\log\Lambda+\frac{32m^4}{3\kappa^3}+\frac{4m^4B}{\kappa^4}. \label{eq:I1S1sup}
\end{align}
Since $r_1<L(r_1, r_2, -1)$, 
\begin{align*}
 \int_{\kappa}^{\Lambda}\frac{r_1^2}{\omega(r_1)F(r_1)L(r_1, r_2, -1)}dr_1&<2m\int_{\kappa}^{\Lambda}\frac{dr_1}{r_1^2}
<\frac{2m}{\kappa}.
\end{align*}
Then we have 
\begin{align}
S_2(\Lambda)&=\int_{\kappa}^{\Lambda}dr_2 \frac{r_2^4}{\omega(r_2)F(r_2)^3}\int_{\kappa}^{\Lambda}\frac{r_1^2}{\omega(r_1)F(r_1)L(r_1, r_2, -1)}dr_1
<\frac{2m}{\kappa}\int_{\kappa}^{\Lambda} \frac{r_2^4}{\omega(r_2)F(r_2)^3}dr_2 \notag\\
&<\frac{16m^4}{\kappa}\int_{\kappa}^{\Lambda}\frac{dr_2}{r_2^3} 
<\frac{8m^4}{\kappa^3}.   \label{eq:I1S2sup}
\end{align}
From (\ref{eq;sums}), (\ref{eq:I1S1sup}), and (\ref{eq:I1S2sup}), it follows that 
\begin{align}
S(\Lambda)<\frac{64m^5}{\kappa^4}\log\Lambda+\frac{112m^4}{3\kappa^3}+\frac{8m^2B}{\kappa^4} \label{eq:I1supS}. 
\end{align}
From (\ref{eq:I1superior}) and (\ref{eq:I1supS}), the lemma follows.
\qed

\subsubsection{Proof of  
$\lim_{\Lambda\to\infty}\frac{{\rm I}_2(\Lambda)}{\log\Lambda}\leq C_6$}
We redefine $h(r_1, r_2)$, $h_1(r_1, r_2)$, $S(\Lambda),$ and $S_1(\Lambda)$ as
\begin{align*}
&h(r_1, r_2)=\frac{r_1^2r_2^2}{\omega({r_1})\omega({r_2})}(\frac{r_1^2}{F(r_1)^4}+\frac{r_2^2}{F(r_2)^4})\frac{1}{L(r_1, r_2, -1)},\\
&
h_1(r_1, r_2)=\frac{r_1^2r_2^4}{\omega(r_1)\omega(r_2)F(r_2)^4L(r_1, r_2, -1)},\\
&
S(\Lambda)=\int_{\kappa}^{\Lambda}\int_{\kappa}^{\Lambda}h(r_1, r_2)dr_1dr_2,\\
&S_1(\Lambda)=\int_{\kappa}^{\Lambda}\int_{\kappa}^{\Lambda}h_1(r_1, r_2)dr_1dr_2.
\end{align*}
We have
\begin{align*}
{\rm I}_2(\Lambda)=\frac{\pi^2}{(2\pi)^6}\int_{-1}^1dz\int_{\kappa}^{\Lambda}dr_1\int_{\kappa}^{\Lambda}\frac{r_1^2r_2^2}{\omega({r_1})\omega({r_2})} 
(\frac{r_1^2}{F(r_1)^4}+\frac{r_2^2}{F(r_2)^4})\frac{1}{L(r_1, r_2, z)}dr_2.
\end{align*}
Then
\begin{align} 
{\rm I}_2(\Lambda)\le\frac{2\pi^2}{(2\pi)^6}S(\Lambda) \label{eq:I4superior}
\end{align}
holds. Since $h(r_1, r_2)=h_1(r_1, r_2)+h_1(r_2, r_1)$, we have 
\begin{align}
S(\Lambda)=2S_1(\Lambda). \label{eq:Sequal2S1}
\end{align}
We have
$$\int_{\kappa}^{\Lambda}\frac{r_1^2}{\omega(r_1)L(r_1, r_2, -1)}dr_1<B+2r_2+8m\log\Lambda$$
in the same way as the proof of 
$ \lim_{\Lambda\to\infty}\frac{{\rm I}_1(\Lambda)}{\log\Lambda}\leq C_4$.
 Since $r_2<\omega(r_2)$  and $r_2^2/2m<F(r_2),$ we have
\begin{align}
S_1(\Lambda)&<\int_{\kappa}^{\Lambda}(B+2r_2+8m\log\Lambda)\frac{r_2^4}{\omega(r_2)F(r_2)^4}dr_2  \notag \\
&<32m^4\int_{\kappa}^{\Lambda}\frac{dr_2}{r_2^4}+16m^4(B+8m\log\Lambda)\int_{\kappa}^{\Lambda}\frac{dr_2}{r_2^5} 
<\frac{32m^5}{\kappa^4}\log\Lambda+\frac{32m^4}{3\kappa^3}+\frac{4m^4B}{\kappa^4}. \label{eq:S1islessthanconstant}
\end{align}
From (\ref{eq:I4superior}), (\ref{eq:Sequal2S1}), and (\ref{eq:S1islessthanconstant}), the lemma follows.
\qed

\subsubsection{Proof of  $\lim_{\Lambda\to\infty}\frac{{\rm I}_3(\Lambda)}{\log\Lambda}=0$}
We define $h(r_1, r_2, z)$ as
\begin{align*}
h(r_1, r_2, z)=\frac{zr_1^3r_2^3}{\omega(r_1)\omega(r_2)}(\frac{1}{F(r_1)^2}+\frac{1}{F(r_2)^2})(\frac{1}{F(r_1)}+\frac{1}{F(r_2)})\frac{1}{L(r_1,r_2, z)^2}  
\end{align*}
and redefine $S(\Lambda)$ as
\begin{align*}
S(\Lambda)=
\int_{-1}^1dz\int_{\kappa}^{\Lambda}dr_2\int_{\kappa}^{\Lambda}h(r_1, r_2, z)dr_1. 
\end{align*}
Then we have
$
 {\rm I}_3(\Lambda)=\frac{\pi^2}{(2\pi)^6}S(\Lambda)$.  
We divide $S(\Lambda)$ in the following way.
\begin{align}
S(\Lambda)&=\int_{-1}^0 dz\int_{\kappa}^{\Lambda}dr_2\int_{\kappa}^{\Lambda}h(r_1, r_2, z)dr_1 +\int_0^1dz\int_{\kappa}^{\Lambda}dr_2\int_{\kappa}^{\Lambda}h(r_1, r_2, z)dr_1 
\notag \\
&=-\int_{1}^0dz\int_{\kappa}^{\Lambda}dr_2\int_{\kappa}^{\Lambda}h(r_1, r_2, -z)dr_1+\int_0^1dz\int_{\kappa}^{\Lambda}dr_2\int_{\kappa}^{\Lambda}h(r_1, r_2, z)dr_1 \notag \\
&=\int_{0}^1dz\int_{\kappa}^{\Lambda}dr_2\int_{\kappa}^{\Lambda}(h(r_1, r_2, z)+h(r_1, r_2, -z))dr_1 \notag \\
&=\int_{0}^1dz\int_{\kappa}^{\Lambda}dr_2\int_{\kappa}^{\Lambda}g(r_1, r_2, z)dr_1, \label{SLambdadivide}
\end{align}    
where 
\begin{align}
&g(r_1, r_2, z)=h(r_1, r_2, z)+h(r_1, r_2, -z) \notag\\
&=-\frac{2z^2r_1^4r_2^4}{m\omega(r_1)\omega(r_2)}(\frac{1}{F(r_1)^2}+\frac{1}{F(r_2)^2})(\frac{1}{F(r_1)}+\frac{1}{F(r_2)})
\frac{
(\frac{r_1^2+r_2^2}{m}+2\omega(r_1)+2\omega(r_2))}
{L(r_1, r_2, z)^2L(r_1, r_2, -z)^2}. \label{I3g}
\end{align}
Since $g(r_1, r_2, z)\le0$, $S(\Lambda)$ is decreasing in $\Lambda$.
\begin{align}
S(\Lambda)&=\int_0^1dz\int_{\kappa}^{\Lambda}dr_2\int_{r_2}^{\Lambda}g(r_1, r_2, z)dr_1+\int_0^1dz\int_{\kappa}^{\Lambda}dr_1\int_{r_1}^{\Lambda}g(r_1, r_2, z)dr_2
 \notag \\
&=2\int_0^1dz\int_{\kappa}^{\Lambda}dr_2\int_{r_2}^{\Lambda}g(r_1, r_2, z)dr_1 \notag \\
&=2\int_0^1dz\int_{\kappa}^{\Lambda}dr_2\int_{r_2}^{2r_2}g(r_1, r_2, z)dr_1+2\int_0^1dz\int_{\kappa}^{\Lambda}dr_2\int_{2r_2}^{\Lambda}g(r_1, r_2, z)dr_1. \label{I3SLambdaintegral}
\end{align} 
Since $\kappa\le r_1$, we have  $1 \le r_1/\kappa$.
Hence   
$ r_1^2+\nu^2\le \frac{\kappa^2+\nu^2}{\kappa^2}r_1^2$. 
Therefore we have
$   \omega(r_1)\le \frac{\sqrt{\kappa^2+\nu^2}}{\kappa}r_1$, and similarly
$ \omega(r_2)\le \frac{\sqrt{\kappa^2+\nu^2}}{\kappa}r_2$. 
When $0\le z \le 1, $ we have
\begin{align}
L(r_1, r_2, z)>\frac{r_1^2}{2m}. \label{I3Lzlarger} 
\end{align}
Then
\begin{align}
 L(r_1, r_2, -z)
&=\frac{(r_1-r_2)^2+2r_1r_2(1-z)}{2m}+\omega(r_1)+\omega(r_2) 
>\omega(r_1)  
>r_1. \label{I3Lminuszlarger}
\end{align}
When $r_2\le r_1$, 
we have 
$ \frac{r_2^2}{m}\le\frac{r_1^2}{m}$. Then 
it holds that
\begin{align*}
2\omega(r_1)\le\frac{2r_1^2}{\kappa^2}\sqrt{\kappa^2+\nu^2},
\quad 
 2\omega(r_2)\le\frac{2r_2}{\kappa}\sqrt{\kappa^2+\nu^2}\le\frac{2r_1^2}{\kappa^2}\sqrt{\kappa^2+\nu^2}.
 \end{align*}
Thus we have
\begin{align}
\frac{r_1^2}{m}+\frac{r_2^2}{m}+2\omega(r_1)+2\omega(r_2) \le 2(\frac{1}{m}+\frac{2\sqrt{\kappa^2+\nu^2}}{\kappa^2})r_1^2. \label{I3inequality3}
\end{align}
When $r_2 \le r_1 \le 2r_2$, we have
\begin{align}
\frac{1}{F(r_1)}+\frac{1}{F(r_2)}<\frac{10m}{r_1^2}, \quad
\frac{1}{F(r_1)^2}+\frac{1}{F(r_2)^2}<\frac{68m^2}{r_1^4}. \label{I3inequality4}  
\end{align}
From (\ref{I3g}), (\ref{I3Lzlarger}), (\ref{I3Lminuszlarger}), (\ref{I3inequality3}), and
(\ref{I3inequality4}), it follows that
\begin{align*}
-g(r_1, r_2, z)&\le\frac{2}{m}\frac{z^2r_1^4r_2^4}{r_1r_2}\frac{68m^2}{r_1^4}\frac{10m}{r_1^2}2(\frac{1}{m}+\frac{2\sqrt{\kappa^2+\nu^2}}{\kappa^2}
)r_1^2(\frac{2m}{r_1^2})^2\frac{1}{r_1^2}\\
&=68\cdot10\cdot16m^4(\frac{1}{m}+\frac{2\sqrt{\kappa^2+\nu^2}}{\kappa^2})\frac{z^2r_2^3}{r_1^7}.
\end{align*}
Hence 
\begin{align}
-2\int_{0}^1dz\int_{\kappa}^{\Lambda}dr_2\int_{r_2}^{2r_2}g(r_1, r_2, z)dr_1 \le1190\frac{m^4}{\kappa^2}(\frac{1}{m}+\frac{2\sqrt{\kappa^2+\nu^2}}{\kappa^2}).\label{I3integral1}
\end{align}
 When $2r_2\le r_1$, we have
\begin{align}
\frac{1}{F(r_1)}+\frac{1}{F(r_2)}<\frac{5m}{2r_2^2},\quad 
\frac{1}{F(r_1)^2}+\frac{1}{F(r_2)^2}<\frac{17m^2}{4r_2^4}. \label{I3inequality7}
\end{align}
Since $ r_2 \le r_1/2$, we have $r_1/2\le r_1-r_2$. Then we have
\begin{align}
L(r_1, r_2, \pm z)=\frac{(r_1-r_2)^2+2r_1r_2(1\pm z)}{2m}+\omega(r_1)+\omega(r_2) >\frac{(r_1-r_2)^2}{2m} 
\ge\frac{r_1^2}{8m}. \label{I3Lpmz}
\end{align}
From  (\ref{I3g}), (\ref{I3inequality3}), (\ref{I3inequality7}), and (\ref{I3Lpmz}), it follows that
\begin{align*}
-g(r_1, r_2, z)&\le\frac{2}{m}\frac{z^2r_1^4r_2^4}{r_1r_2}\frac{17m^2}{4r_2^4}\frac{5m}{2r_2^2} 2(\frac{1}{m}+\frac{2\sqrt{\kappa^2+\nu^2}}{\kappa^2})r_1^2(\frac{8m}{r_1^2})^4  \\
&=2^{11}\cdot5\cdot17 m^6(\frac{1}{m}+\frac{2\sqrt{\kappa^2+\nu^2}}{\kappa^2})\frac{z^2}{r_1^3r_2^3}.
\end{align*}
Hence
\begin{align}
-2\int_{0}^1dz\int_{\kappa}^{\Lambda}dr_2\int_{2r_2}^{\Lambda}g(r_1, r_2, z)dr_1
\le2^7\cdot5\cdot17\frac{m^6}{\kappa^4}(\frac{1}{m}+\frac{2\sqrt{\kappa^2+\nu^2}}{\kappa^2}). \label{I3integral2}
\end{align}
Then by (\ref{I3SLambdaintegral}),  (\ref{I3integral1}), and (\ref{I3integral2}), we have
\begin{align*}
-S(\Lambda)\le\frac{1190m^4}{\kappa^2}(\frac{1}{m}+\frac{2\sqrt{\kappa^2+\nu^2}}{\kappa^2})+2^7\cdot5\cdot17\frac{m^6}{\kappa^4}(\frac{1}{m}+\frac{2\sqrt{\kappa^2+\nu^2}}{\kappa^2}).
\end{align*}
Since $S(\Lambda)$ is decreasing and bounded below, it converges as $\Lambda\to\infty$.  This fact proves the lemma.
\qed

\subsubsection{Proof of $\lim_{\Lambda\to\infty}\frac{{\rm I}_4(\Lambda)}{\log\Lambda}=0$}
We redefine $h(r_1, r_2, z)$ as
\begin{align*}
h(r_1, r_2, z)=\frac{r_1^2r_2^2}{\omega(r_1)\omega(r_2)}(\frac{r_1^2}{F(r_1)^2}+\frac{r_2^2}{F(r_2)^2})(\frac{1}{F(r_1)}+\frac{1}{F(r_2)})\frac{1}{L(r_1,r_2, z)^2}.
\end{align*}
Then we have
$
{\rm I}_4(\Lambda)=\frac{2\pi^2}{(2\pi)^6}\int_{-1}^1dz\int_{\kappa}^{\Lambda}dr_1\int_{\kappa}^{\Lambda}h(r_1, r_2, z)dr_2$. 
We define $J(\Lambda)$ as
\begin{align*}
J(\Lambda)= \int_{-1}^1dz\int_{\kappa}^{\Lambda}dr_1\int_{\kappa}^{\Lambda}h(r_1, r_2, z)dr_2. 
\end{align*}

{\bf (Step 1)}
We define $S(\Lambda, z)$ as
\begin{align*}
S(\Lambda, z)=\int_{\kappa}^{\Lambda}dr_1\int_{\kappa}^{\Lambda}h(r_1, r_2, z)dr_2. 
\end{align*}
Our first task is to prove that $\displaystyle \lim_{\Lambda\to\infty}S(\Lambda, z)$ exists for all $ z\in {\rm I}=[-1, 1].$ Since $h(r_1, r_2, z)>0$, $S(\Lambda, z)$ is increasing in $\Lambda$. Let 
\begin{align*}
&S_1(\Lambda, z)=2\int_{\kappa}^{\Lambda}dr_2\int_{r_2}^{2r_2}h(r_1, r_2, z)dr_1,\\
&S_2(\Lambda, z)=2\int_{\kappa}^{\Lambda}dr_2\int_{2r_2}^{\Lambda}h(r_1, r_2, z)dr_1.
\end{align*}
Then 
\begin{align}
S(\Lambda, z)&=\int_{\kappa}^{\Lambda}dr_2\int_{r_2}^{\Lambda}h(r_1, r_2, z)dr_1+\int_{\kappa}^{\Lambda}dr_1\int_{r_1}^{\Lambda}h(r_1, r_2, z)dr_2 \notag \\
&=2\int_{\kappa}^{\Lambda}dr_2\int_{r_2}^{\Lambda}h(r_1, r_2, z)dr_1 \notag\\
&=2\int_{\kappa}^{\Lambda}dr_2\int_{r_2}^{2r_2}h(r_1,r_2, z)dr_1+2\int_{\kappa}^{\Lambda}dr_2\int_{2r_2}^{\Lambda}h(r_1, r_2, z)dr_1 \notag \\
&=S_1(\Lambda, z)+S_2(\Lambda, z) \label{I4SLambdasum}
\end{align}
 holds.  
 We have 
\begin{align}
\frac{1}{F(r_1)}+\frac{1}{F(r_2)}<2m(\frac{1}{r_1^2}+\frac{1}{r_2^2}), \quad 
\frac{r_1^2}{F(r_1)^2}+\frac{r_2^2}{F(r_2)^2}<4m^2(\frac{1}{r_1^2}+\frac{1}{r_2^2}) \label{Fsquareinversesem1}
\end{align}
and 
\begin{align}
r_1<L(r_1, r_2, z). \label{Lzlargerthanr1}
\end{align} 
Let $r_2\le r_1 \le 2r_2$.  Since $1/r_2^2 \le 4/r_1^2$, it holds that
\begin{align}
\frac{1}{r_1^2}+\frac{1}{r_2^2}\le 5/r_1^2.  \label{rinversesum}
\end{align} 
Then from  (\ref{Fsquareinversesem1}) and (\ref{rinversesum}), 
it follows that 
\begin{align}
\frac{1}{F(r_1)}+\frac{1}{F(r_2)}<\frac{10m}{r_1^2},\quad 
\frac{r_1^2}{F(r_1)^2}+\frac{r_2^2}{F(r_2)^2}<\frac{20m^2}{r_1^2}.  \label{Fsquareinversesum2}
\end{align}
Hence from (\ref{Lzlargerthanr1}) and (\ref{Fsquareinversesum2}), it follows that
$
h(r_1, r_2, z)<\frac{200m^3r_2}{r_1^5}$. 
Therefore we have 
\begin{align}
S_1(\Lambda, z)&<400m^3\int_{\kappa}^{\Lambda}r_2dr_2\int_{r_2}^{2r_2}\frac{dr_1}{r_1^5} =\frac{375m^3}{8}(\frac{1}{\kappa^2}-\frac{1}{\Lambda^2}) <\frac{375m^3}{8\kappa^2}. \label{eq:S1Lambda}
\end{align}
Let $2r_2\le r_1$. Since $r_1/2\le r_1-r_2$, we have
\begin{align}
\frac{r_1^2}{8m}\le \frac{(r_1-r_2)^2}{2m}<L(r_1, r_2, z). \label{Lzlargerthan2}
\end{align}
Since $r_2^2\le r_1^2,$ we have
\begin{align}
\frac{1}{F(r_1)}+\frac{1}{F(r_2)}<\frac{4m}{r_2^2}, \quad
\frac{r_1^2}{F(r_1)^2}+\frac{r_2^2}{F(r_2)^2}<\frac{8m^2}{r_2^2}. \label{Fsquareinversesum3}
\end{align}
Hence from (\ref{Lzlargerthan2}) and (\ref{Fsquareinversesum3}), it follows that
$
h(r_1, r_2, z)<\frac{2048m^5}{r_1^3r_2^3}$. 
Therefore  we have
\begin{align}
S_2(\Lambda, z)<4096m^5\int_{\kappa}^{\Lambda}\frac{dr_2}
{r_2^3}\int_{2r_2}^{\Lambda}\frac{dr_1}{r_1^3}
<512m^5\int_{\kappa}^{\Lambda}\frac{dr_2}{r_2^5}
<\frac{128m^5}{\kappa^4}. \label{eq:S2Lambda}
\end{align}
From (\ref{I4SLambdasum}),  (\ref{eq:S1Lambda}), and (\ref{eq:S2Lambda}), it follows that
$$ S(\Lambda, z)<\frac{375m^3}{8\kappa^2}+\frac{128m^5}{\kappa^4}. $$ 
Since $S(\Lambda, z)$ is increasing in $\Lambda$ and bounded above for all $z\in {\rm I}$, it converges as $\Lambda$ goes to infinity.

{\bf (Step 2)}
Our second task is to prove that $J(\Lambda)$ converges when $\Lambda$ goes to infinity.
Let $M(r_1, r_2)$ be
$$M(r_1, r_2)=\frac{r_1^2r_2^2}{\omega(r_1)\omega(r_2)}(\frac{r_1^2}{F(r_1)^2}+\frac{r_2^2}{F(r_2)^2})(\frac{1}{F(r_1)}+\frac{1}{F(r_2)})\frac{1}{L(r_1, r_2, -1)^2}.$$
$|h(r_1, r_2, z)|\le M(r_1, r_2)$ holds for all $(r_1, r_2, z)\in [\kappa, \Lambda]^2\times {\rm I}$,  and by (Step 1), there exists
$$\lim_{\Lambda\to\infty}\int_{\kappa}^{\Lambda} \int_{\kappa}^{\Lambda}M(r_1, r_2)dr_1dr_2.$$
Since 
$$M_{\Lambda}=\int_{\kappa}^{\Lambda}\int_{\kappa}^{\Lambda}M(r_1, r_2)dr_1dr_2 \to M_{\infty}=\lim_{\Lambda\to\infty}\int_{\kappa}^{\Lambda} \int_{\kappa}^{\Lambda}M(r_1, r_2)dr_1dr_2,$$
from Cauchy convergence condition, for any $\epsilon >0,$ there exists $\Lambda_0 \in [\kappa, \infty)$ such that if $\Lambda_0<\Lambda_1\le\Lambda_2$, $|M_{\Lambda_2}-M_{\Lambda_1}|<\epsilon$. Then for $\Lambda_0<\Lambda_1\le\Lambda_2$ and all $z \in {\rm I}$, 
\begin{align*}
&|S(\Lambda_2, z)-S(\Lambda_1, z)| \\
&=|\int_{\kappa}^{\Lambda_1}dr_1\int_{\Lambda_1}^{\Lambda_2}h(r_1, r_2, z)dr_2+\int_{\kappa}^{\Lambda_1}dr_2\int_{\Lambda_1}^{\Lambda_2}h(r_1, r_2, z)dr_1+\int_{\Lambda_1}^{\Lambda_2}\int_{\Lambda_1}^{\Lambda_2}h(r_1, r_2, z)dr_1dr_2| \\
 &\le\int_{\kappa}^{\Lambda_1}dr_1\int_{\Lambda_1}^{\Lambda_2}M(r_1, r_2)dr_2+\int_{\kappa}^{\Lambda_1}dr_2\int_{\Lambda_1}^{\Lambda_2}M(r_1, r_2)dr_1+\int_{\Lambda_1}^{\Lambda_2}\int_{\Lambda_1}^{\Lambda_2}M(r_1, r_2)dr_1dr_2 \\
&=|M_{\Lambda_2}-M_{\Lambda_1}|
<\epsilon .
\end{align*}
Therefore
$\sup_{z\in I}|S(\Lambda_2, z)-S(\Lambda_1, z)|\le|M_{\Lambda_2}-M_{\Lambda_1}|<\epsilon$ holds. Since family of functions  $(S(\Lambda, \cdot))_{\Lambda\in[\kappa, \infty)}$ on ${\rm I}$ satisfies uniform Cauchy conditions, it converges uniformly on ${\rm I}$. Since $[\kappa, \Lambda]^2$ is a Jordan measurable bounded closed set of  ${\mathbb R}^2,$ the function $S(\Lambda, z)$  is continuous on ${\rm I}.$ Hence  
$$ S(\infty, z)=\lim_{\Lambda\to\infty}\int_{\kappa}^{\Lambda} \int_{\kappa}^{\Lambda}h(r_1, r_2, z)dr_1dr_2$$ 
is continuous on ${\rm I}$. Since both $S(\Lambda, z)$ and $S(\infty, z)$ are integrable on Jordan measurable set ${\rm I}$, by uniform convergence theorem, we have
$$\lim_{\Lambda\to\infty}\int_{-1}^1S(\Lambda, z)dz=\int_{-1}^1S(\infty, z)dz.$$
It implies that  $J(\Lambda)$ converges as $\Lambda\to\infty$.
\qed

\subsubsection{Proof of $\lim_{\Lambda\to\infty}\frac{{\rm I}_5(\Lambda)}{\log\Lambda}=0$}
We redefine $h(r_1, r_2, z),$ $g(r_1, r_2, z),$ and $S(\Lambda)$ as
\begin{align*}
&h(r_1, r_2, z)=\frac{zr_1^3r_2^3}{\omega(r_1)\omega(r_2)F(r_1)^2F(r_2)^2L(r_1, r_2, z)},\\
&g(r_1, r_2, z)=h(r_1, r_2, z)+h(r_1, r_2, -z),\\
&
S(\Lambda)=\int_{-1}^1dz\int_{\kappa}^{\Lambda}dr_1\int_{\kappa}^{\Lambda}h(r_1, r_2, z)dr_2.
\end{align*}
Then
$
{\rm I}_5(\Lambda)=\frac{2\pi^2}{(2\pi)^6}S(\Lambda)$. 
We have 
$$S(\Lambda)=\int_{0}^{1}dz\int_{\kappa}^{\Lambda}dr_2\int_{\kappa}^{\Lambda}g(r_1, r_2, z)dr_1$$
in the same way as (\ref{SLambdadivide}). Since
$$g(r_1, r_2, z)=-\frac{2z^2r_1^4r_2^4}{m^2\omega(r_1)\omega(r_2)F(r_1)^2F(r_2)^2L(r_1, r_2, z)L(r_1, r_2, -z)}\le0,$$
$S(\Lambda)$ is decreasing in $\Lambda$. Since $r_1<L(r_1, r_2, z)$, we have
$$ \frac{r_1^4}{\omega(r_1)F(r_1)^2L(r_1, r_2, z)}<\frac{4m^2}{r_1^2}.$$ 
Similarly, we have
$$ \frac{r_2^4}{\omega(r_2)F(r_2)^2L(r_1, r_2, -z)}<\frac{4m^2}{r_2^2}.$$ 
Hence
\begin{align*}
-S(\Lambda)&=\frac{2}{m^2}\int_{0}^{1}dz z^2\int_{\kappa}^{\Lambda}dr_2\frac{r_2^4}{\omega(r_2)F(r_2)^2L(r_1, r_2, -z)}\int_{\kappa}^{\Lambda}\frac{r_1^4}{\omega(r_1)F(r_1)^2L(r_1, r_2, z)}dr_1 \notag \\
&<\frac{2}{3m^2}\int_{\kappa}^{\Lambda}dr_2\frac{4m^2}{r_2^2}\int_{\kappa}^{\Lambda}\frac{4m^2}{r_1^2}dr_1 
<\frac{32m^2}{3\kappa^2}.
\end{align*}
Since $S(\Lambda)$ is decreasing and bounded below, it converges as $\Lambda\to\infty$.
\qed

\subsubsection{Proof of $\lim_{\Lambda\to\infty}\frac{{\rm I}_6(\Lambda)}{\log\Lambda}=0$}
We redefine $h(r_1,r_2, z)$, $J(\Lambda),$ $S(\Lambda, z),$ $S_1(\Lambda, z),$ and $S_2(\Lambda, z)$ as
\begin{align*}
&h(r_1, r_2, z)=\frac{r_1^2r_2^2}{\omega(r_1)\omega(r_2)}(\frac{1}{F(r_1)}+\frac{1}{F(r_2)})^2\frac{r_1^2+r_2^2}{L(r_1, r_2, z)^3},\\
&J(\Lambda)=\int_{-1}^1dz\int_{\kappa}^{\Lambda}dr_2\int_{\kappa}^{\Lambda}h(r_1, r_2, z)dr_1, \\
&S(\Lambda, z)= \int_{\kappa}^{\Lambda}dr_1\int_{\kappa}^{\Lambda}h(r_1, r_2, z)dr_2, \\
& S_1(\Lambda, z)=2\int_{\kappa}^{\Lambda}dr_2\int_{r_2}^{2r_2}h(r_1, r_2, z)dr_1,\\
& S_2(\Lambda, z)=2\int_{\kappa}^{\Lambda}dr_2\int_{2r_2}^{\Lambda}h(r_1, r_2, z)dr_1.
\end{align*}
We have
$
{\rm I}_6(\Lambda)=\frac{\pi^2}{(2\pi)^6}J(\Lambda)$.

{\bf (Step 1)}
Our first task is to prove that $\displaystyle\lim_{\Lambda\to\infty}S(\Lambda, z)$ exists for all  $z\in {\rm I}.$ Since $h(r_1, r_2, z)>0,$ $S(\Lambda, z)$ is increasing in $\Lambda.$ 
We have
\begin{align}
S(\Lambda, z)=S_1(\Lambda, z)+S_2(\Lambda, z)  \label{sumofS1S2}
\end{align}
in the same way as (\ref{I4SLambdasum}).  When $r_2\le r_1$, it holds that
\begin{align*}
r_1^2+r_2^2 \le 2r_1^2,\quad 
\frac{1}{F(r_1)}+\frac{1}{F(r_2)}<\frac{4m}{r_2^2}. 
\end{align*}
When $r_2\le r_1 \le 2r_2$, it also holds that 
\begin{align*}
\frac{1}{F(r_1)}+\frac{1}{F(r_2)}<\frac{10m}{r_1^2}. 
\end{align*}
Then we have
\begin{align}
\int_{r_2}^{2r_2}h(r_1, r_2, z)dr_1&<\int_{r_2}^{2r_2}\frac{r_1^2r_2^2}{r_1r_2}  (\frac{10m}{r_1^2})^2\frac{2r_1^2}{r_1^3} dr_1 
=\frac{175m^2}{3r_2^2}. \notag
\end{align}
Hence
\begin{align}
S_1(\Lambda, z)&<\frac{350m^2}{3}\int_{\kappa}^{\Lambda}\frac{dr_2}{r_2^2} 
<\frac{350m^2}{3\kappa}. \label{S1Lambdazsmaller}
\end{align}
Let $2r_2\le r_1$. Since $ r_1/2 \le r_1-r_2,$ we have
$$ \frac{r_1^2}{8m}\le \frac{(r_1-r_2)^2}{2m}<L(r_1, r_2, z).$$ 
Then
\begin{align*}
\int_{2r_2}^{\Lambda}h(r_1, r_2, z)dr_1&<\int_{2r_2}^{\Lambda}\frac{r_1^2r_2^2}{r_1r_2}(\frac{4m}{r_2^2})^2(\frac{8m}{r_1^2})^3 2r_1^2dr_1 =\frac{16384m^5}{r_2^3}\int_{2r_2}^{\Lambda}\frac{dr_1}{r_1^3} 
<\frac{2048m^5}{r_2^5}.
\end{align*}
Therefore
\begin{align}
S_2(\Lambda, z)&<4096m^5\int_{\kappa}^{\Lambda}\frac{dr_2}{r_2^5} 
< \frac{1024m^5}{\kappa^4}. \label{S2Lambdazsmaller}
\end{align}
From (\ref{sumofS1S2}), (\ref{S1Lambdazsmaller}), and (\ref{S2Lambdazsmaller}), it follows that
$S(\Lambda, z)<\frac{350m^2}{3\kappa}+\frac{1024m^5}{\kappa^4}$. 
Since $S(\Lambda, z)$ is  increasing in $\Lambda$ and bounded above, it converges as $\Lambda$ goes to infinity.

{\bf (Step 2)}
Our second task is to prove $J(\Lambda)$ converges as $\Lambda$ goes to infinity. This step is the same as that of $\lim_{\Lambda\to\infty}\frac{{\rm I}_4(\Lambda)}{\log\Lambda}=0$.
\qed

\subsubsection{Proof of $\lim_{\Lambda\to\infty}\frac{{\rm I}_7(\Lambda)}{\log\Lambda}=0$}
We redefine $h(r_1, r_2, z),$ $g(r_1, r_2, z),$ and $S(\Lambda)$ as
\begin{align*}
&h(r_1, r_2, z)=\frac{zr_1^3r_2^3}{\omega(r_1)\omega(r_2)}(\frac{1}{F(r_1)}+\frac{1}{F(r_2)})^2\frac{1}{L(r_1, r_2, z)^3},\\
&g(r_1, r_2, z)=h(r_1, r_2, z)+h(r_1, r_2, -z),\\
&S(\Lambda)=\int_{-1}^1dz\int_{\kappa}^{\Lambda}dr_1\int_{\kappa}^{\Lambda}h(r_1, r_2, z)dr_2.
\end{align*}
Then we have
$
{\rm I}_7(\Lambda)=\frac{2\pi^2}{(2\pi)^6}S(\Lambda)$, 
and
$$S(\Lambda)=\int_0^1dz\int_{\kappa}^{\Lambda}dr_1\int_{\kappa}^{\Lambda}g(r_1, r_2, z)dr_2$$
in the same way as $\lim_{\Lambda\to\infty}\frac{{\rm I}_3(\Lambda)}{\log\Lambda}=0$. We define $g_1(r_1, r_2, z)$ and $g_2(r_1, r_2, z)$ as
\begin{align*}
&g_1(r_1, r_2, z)=-\frac{6z^2r_1^4r_2^4}{m\omega(r_1)\omega(r_2)}(\frac{1}{F(r_1)}+\frac{1}{F(r_2)})^2
\frac{
(\frac{r_1^2+r_2^2}{2m}+\omega(r_1)+\omega(r_2))^2}
{L(r_1, r_2, z)^3L(r_1, r_2, -z)^3} ,\\
&g_2(r_1, r_2, z)=-\frac{2z^4r_1^6r_2^6}{m^3\omega(r_1)\omega(r_2)}(\frac{1}{F(r_1)}+\frac{1}{F(r_2)})^2
\frac{1}{L(r_1, r_2, z)^3L(r_1, r_2, -z)^3}.  
\end{align*}
Then we have
$ g(r_1, r_2, z)=g_1(r_1, r_2, z)+g_2(r_1, r_2, z)$. 
We redefine $S_1(\Lambda)$ and $S_2(\Lambda)$ by
\begin{align*}
&S_1(\Lambda)=
\int_0^1dz\int_{\kappa}^{\Lambda}dr_1\int_{\kappa}^{\Lambda}g_1(r_1, r_2, z)dr_2,\\
&S_2(\Lambda)=
\int_0^1dz\int_{\kappa}^{\Lambda}dr_1\int_{\kappa}^{\Lambda}g_2(r_1, r_2, z)dr_2.
\end{align*}
Then
\begin{align*}
S(\Lambda)=S_1(\Lambda)+S_2(\Lambda).
\end{align*}
Since $g_1(r_1, r_2, z)\le0$, $S_1(\Lambda)$ is decreasing in $\Lambda$. We divide $S_1(\Lambda)$ in the following way:
\begin{align*}
S_1(\Lambda)&= \int_0^1dz\int_{\kappa}^{\Lambda}dr_1\int_{r_1}^{\Lambda}g_1(r_1, r_2, z)dr_2 +\int_0^1dz\int_{\kappa}^{\Lambda}dr_2\int_{r_2}^{\Lambda}g_1(r_1, r_2, z)dr_1 \notag \\
&= \int_0^1dz\int_{\kappa}^{\Lambda}dr_1\int_{r_1}^{\Lambda}g_1(r_1, r_2, z)dr_2+ \int_0^1dz\int_{\kappa}^{\Lambda}dr_1\int_{r_1}^{\Lambda}g_1(r_2, r_1, z)dr_2 \notag \\
&=2\int_0^1dz\int_{\kappa}^{\Lambda}dr_1\int_{r_1}^{\Lambda}g_1(r_1, r_2, z)dr_2. \end{align*}
Let $\kappa\le r_1 \le r_2$.  Then we have
\begin{align*}
\frac{r_1^2+r_2^2}{2m}+\omega(r_1)+\omega(r_2)\le (\frac{1}{m}+\frac{2\sqrt{\kappa^2+\nu^2}}{\kappa^2})r_2^2 
\end{align*} 
in the same way as (\ref{I3inequality3}).  Let $r_1\le r_2 \le 2r_1$. Then we also have
\begin{align*}
\frac{1}{F(r_1)}+\frac{1}{F(r_2)}<\frac{10m}{r_2^2}.  
\end{align*}
Therefore 
\begin{align*}
-g_1(r_1, r_2, z)\le\frac{6}{m}\frac{z^2r_1^4r_2^4}{r_1r_2}(\frac{10m}{r_2^2})^2 \{(\frac{1}{m}+\frac{2\sqrt{\kappa^2+\nu^2}}{\kappa^2})r_2^2\}^2(\frac{2m}{r_2^2})^3\frac{1}{r_2^3} =4800m^4(\frac{1}{m}+\frac{2\sqrt{\kappa^2+\nu^2}}{\kappa^2})^2\frac{z^2r_1^3}{r_2^6},
\end{align*}
and
\begin{align*}
-\int_{r_1}^{2r_1}g_1(r_1, r_2, z)dr_2&\le4800m^4(\frac{1}{m}+\frac{2\sqrt{\kappa^2+\nu^2}}{\kappa^2})^2z^2r_1^3\int_{r_1}^{2r_1}\frac{dr_2}{r_2^6} =930m^4(\frac{1}{m}+\frac{2\sqrt{\kappa^2+\nu^2}}{\kappa^2})^2\frac{z^2}{r_1^2}. 
\end{align*}
Hence
\begin{align*}
-2\int_0^1\!\!\! 
dz\int_{\kappa}^{\Lambda}
\!\!\!
dr_1\int_{r_1}^{2r_1}\!\!\! \!\!\! g_1(r_1, r_2, z)
dr_2&\le1860m^4(\frac{1}{m}+\frac{2\sqrt{\kappa^2+\nu^2}}{\kappa^2})^2\int_0^1
\!\!\! z^2dz\int_{\kappa}^{\Lambda}\frac{dr_2}{r_2^2} <\frac{620m^4}{\kappa}(\frac{1}{m}+\frac{2\sqrt{\kappa^2+\nu^2}}{\kappa^2})^2. 
\end{align*} 
Let $2r_1\le r_2$. Then we have
\begin{align*}
\frac{1}{F(r_1)}+\frac{1}{F(r_2)}<\frac{5m}{2r_1^2}.
\end{align*}
In addition, since $r_2/2 \le r_2-r_1$, we can see that
\begin{align*} 
\frac{r_2^2}{8m}\le\frac{(r_2-r_1)^2}{2m}<L(r_1, r_2, -z).
\end{align*}
Therefore  
\begin{align*}
-g_1(r_1, r_2, z)&\le\frac{6}{m}\frac{z^2r_1^4r_2^4}{r_1r_2}(\frac{5m^2}{2r_1^2})^2\{(\frac{1}{m}+\frac{2\sqrt{\kappa^2+\nu^2}}{\kappa^2})r_2^2\}^2(\frac{2m}{r_2^2})^3(\frac{8m}{r_2^2})^3 \notag\\
&=153600m^9(\frac{1}{m}+\frac{2\sqrt{\kappa^2+\nu^2}}{\kappa^2})^2\frac{z^2}{r_1r_2^5}.
\end{align*}
Then we have
\begin{align*}
-\int_{2r_1}^{\Lambda}g_1(r_1, r_2, z)dr_2&\le\frac{153600m^9z^2}{r_1}(\frac{1}{m}+\frac{2\sqrt{\kappa^2+\nu^2}}{\kappa^2})^2\int_{2r_1}^{\Lambda}\frac{dr_2}{r_2^5}\le2400m^9(\frac{1}{m}+\frac{2\sqrt{\kappa^2+\nu^2}}{\kappa^2})^2\frac{z^2}{r_1^5}.
\end{align*}
Hence 
\begin{align*}
-2\int_0^1dz\int_{\kappa}^{\Lambda}dr_1\int_{2r_1}^{\Lambda}
g_1(r_1, r_2, z)dr_2&\le4800m^9(\frac{1}{m}+\frac{2\sqrt{\kappa^2+\nu^2}}{\kappa^2})^2\int_0^1z^2dz\int_{\kappa}^{\Lambda}\frac{dr_1}{r_1^5} \notag \\
&<\frac{400m^9}{\kappa^4}(\frac{1}{m}+\frac{2\sqrt{\kappa^2+\nu^2}}{\kappa^2})^2. 
\end{align*}
Then we have
$$-S_1(\Lambda)<\frac{620m^4}{\kappa}(\frac{1}{m}+\frac{2\sqrt{\kappa^2+\nu^2}}{\kappa^2})^2+\frac{400m^9}{\kappa^4}(\frac{1}{m}+\frac{2\sqrt{\kappa^2+\nu^2}}{\kappa^2})^2.$$
Since $S_1(\Lambda)$ is decreasing and bounded below, it converges as $\Lambda\to\infty$. Since $g_2(r_1, r_2, z)\le0,$ $S_2(\Lambda)$ is also decreasing in $\Lambda$. Let $r_1 \le r_2$. Then
\begin{align*}
\frac{1}{F(r_1)}+\frac{1}{F(r_2)}<\frac{4m}{r_1^2}. 
\end{align*}
Therefore 
\begin{align*}
-g_2(r_1, r_2, z)&\le\frac{2}{m^3}\frac{z^4r_1^6r_2^6}{r_1r_2}(\frac{4m}{r_1^2})^2\frac{8m^3}{r_2^6}\frac{1}{r_2^3} 
=\frac{256m^2z^4 r_1}{r_2^4}.
\end{align*}
Then 
\begin{align*}
-\int_{r_1}^{\Lambda}g_2(r_1, r_2, z)dr_2&\le256m^2z^4r_1\int_{r_1}^{\Lambda}\frac{dr_2}{r_2^4} 
\le\frac{256m^2z^4}{3r_1^2}.
\end{align*}
Hence
\begin{align*}
-S_2(\Lambda)& \le\frac{512m^2}{3}\int_0^1z^4dz \int_{\kappa}^{\Lambda}\frac{dr_1}{r_1^2} 
<\frac{512m^2}{15\kappa}.
\end{align*}
Since $S_2(\Lambda)$ is  decreasing in $\Lambda$ and bounded below, it converges. Since both    $S_1(\Lambda)$ and $S_2(\Lambda)$ converge, $S(\Lambda)$ converges.
\qed

\subsubsection{Proof of $\lim_{\Lambda\to\infty}\frac{{\rm I}_8(\Lambda)}{\log\Lambda}=0$}
We redefine $h(r_1, r_2, z),$ $g(r_1, r_2, z),$ $g_1(r_1, r_2, z),$ $g_2(r_1, r_2, z),$ $S(\Lambda),$ $S_1(\Lambda),$ and $S_2(\Lambda)$ as
\begin{align}
&h(r_1, r_2, z)=\frac{zr_1^3r_2^3}{\omega(r_1)\omega(r_2)}(\frac{1}{F(r_1)}+\frac{1}{F(r_2)})\frac{1}{L(r_1, r_2, z)^4},\nonumber\\
&g(r_1, r_2, z)=h(r_1, r_2, z)+h(r_1, r_2, -z),\nonumber\\
&g_1(r_1, r_2, z)=-\frac{2z^2r_1^4r_2^4}{m\omega(r_1)\omega(r_2)}(\frac{1}{F(r_1)}+\frac{1}{F(r_2)})\frac{1}{L(r_1, r_2, z)L(r_1, r_2, -z)^4}, \label{I8g1}\\
&g_2(r_1, r_2, z)=-\frac{2z^2r_1^4r_2^4}{m\omega(r_1)\omega(r_2)}(\frac{1}{F(r_1)}+\frac{1}{F(r_2)})G(r_1,r_2, z), \label{I8g2}\\
&S(\Lambda)=\int_{-1}^1dz\int_{\kappa}^{\Lambda}dr_2\int_{\kappa}^{\Lambda}h(r_1, r_2, z)dr_1,\nonumber\\
&S_1(\Lambda)=\int_{\kappa}^{\Lambda}dr_2\int_{r_2}^{\Lambda}dr_1
\int_0^{1-\frac{1}{r_1^{1/4}r_2^{1/2}}}g_1(r_1, r_2, z)dz,\nonumber\\
&S_2(\Lambda)=\int_{\kappa}^{\Lambda}dr_2\int_{r_2}^{\Lambda}dr_1
\int_{1-\frac{1}{r_1^{1/4}r_2^{1/2}}}^1g_1(r_1, r_2, z)dz,\nonumber
\end{align}
where 
\begin{align*}
G(r_1,r_2, z)=\frac{1}{L(r_1, r_2, z)^2L(r_1, r_2, -z)^3}+\frac{1}
{L(r_1, r_2, z)^3L(r_1, r_2, -z)^2}+\frac{1}{L(r_1, r_2, z)^4L(r_1, r_2, -z)}. 
\end{align*}
Furthermore, we define $S_3(\Lambda)$ as
\begin{align*}
S_3(\Lambda)=\int_0^1dz\int_{\kappa}^{\Lambda}dr_2
\int_{r_2}^{\Lambda}g_2(r_1, r_2, z)dr_1.
\end{align*}
Then we have
$
{\rm I}_8(\Lambda)=\frac{2\pi^2}{(2\pi)^6}S(\Lambda)$, 
and
$$S(\Lambda)=2 \int_0^1dz \int_{\kappa}^{\Lambda}dr_2\int_{r_2}^{\Lambda}g(r_1, r_2, z)dr_1$$
in the same way as the proof of $\lim_{\Lambda\to\infty}\frac{{\rm I}_3(\Lambda)}{\log\Lambda}=0$. 
 Since 
$g(r_1,r_2, z)=g_1(r_1,r_2, z)+g_2(r_1,r_2, z)$, it holds that
\begin{align}
S(\Lambda)=2S_1(\Lambda)+2S_2(\Lambda)+2S_3(\Lambda). \label{eq:sumofSi}
\end{align}
Since $g_1(r_1, r_2, z)\le0$ and $g_2(r_1, r_2, z)\le0,$ $S_i(\Lambda)$ $(i=1, 2, 3)$ are  decreasing in $\Lambda$. Let $r_2 \le r_1$. Then
\begin{align}
\frac{1}{F(r_1)}+\frac{1}{F(r_2)} < \frac{4m}{r_2^2}. \label{eq:sumofFinverse}
\end{align}
Let $0\le z \le 1-\frac{1}{r_1^{1/4}r_2^{1/2}}$. Then we have
\begin{align}
\frac{1}{(1-z)^4}\le r_1r_2^2.   \label{eq:lemma195}
\end{align}  
We have
\begin{align}
 L(r_1, r_2, -z)=\frac{(r_1-r_2)^2+2r_1r_2(1-z)}{2m}+\omega(r_1)+\omega(r_2)>\frac{r_1r_2(1-z)}{m}. \label{Lminuszlarger2}
\end{align} 
Using (\ref{eq:lemma195}) and (\ref{Lminuszlarger2}), we have
\begin{align}
\frac{1}{L(r_1, r_2, -z)^4}<\frac{m^4}{r_1^4r_2^4(1-z)^4}\le\frac{m^4}{r_1^3r_2^2}. \label{eq:lemma196}
\end{align}
From (\ref{I8g1}), (\ref{eq:sumofFinverse}), and (\ref{eq:lemma196}), it follows that 
\begin{align*}
-g_1(r_1, r_2, z)\le\frac{2}{m}\frac{z^2r_1^4r_2^4}{r_1r_2}\frac{4m}{r_2^2}\frac{2m}{r_1^2}\frac{m^4}{r_1^3r_2^2}  
=\frac{16m^5}{r_1^2r_2}.
\end{align*}
Hence we have
\begin{align*}
 -S_1(\Lambda)&\le16m^5\int_{\kappa}^{\Lambda}\frac{dr_2}{r_2}\int_{r_2}^{\Lambda}\frac{1}{r_1^2}(1-\frac{1}{r_1^{1/4}r_2^{1/2}})dr_1 
<16m^5\int_{\kappa}^{\Lambda}\frac{dr_2}{r_2}\int_{r_2}^{\Lambda}\frac{dr_1}{r_1^2} <\frac{16m^5}{\kappa}.
\end{align*}
Since $S_1(\Lambda)$  is decreasing in $\Lambda$ and bounded below, it converges. When $r_2 \le r_1$ and $1-\frac{1}{r_1^{1/4}r_2^{1/2}}\le z \le  1$, 
from (\ref{I8g1}) and (\ref{eq:sumofFinverse}), it holds that 
\begin{align*}
-g_1(r_1, r_2, z)<\frac{2}{m}\frac{r_1^4r_2^4}{r_1r_2}\frac{4m}{r_2^2}\frac{2m}{r_1^2}\frac{1}{r_1^4} 
=\frac{16m r_2}{r_1^3}.
\end{align*}
Hence we have
\begin{align*}
&-S_2(\Lambda)<16m\int_{\kappa}^{\Lambda}dr_2\int_{r_2}^{\Lambda}dr_1\int_{1-\frac{1}{r_1^{1/4}r_2^{1/2}}}^1\frac{r_2}{r_1^3}dz 
=16m\int_{\kappa}^{\Lambda}dr_2\int_{r_2}^{\Lambda}\frac{r_2}{r_1^3} r_1^{-1/4}r_2^{-1/2}dr_1 \\
&=16m\int_{\kappa}^{\Lambda}dr_2 r_2^{1/2}\int_{r_2}^{\Lambda}r_1^{-13/4}dr_1 
=\frac{64m}{9}\int_{\kappa}^{\Lambda}r_2^{1/2}(r_2^{-9/4}-\Lambda^{-9/4})dr_2 
<\frac{64m}{9}\int_{\kappa}^{\Lambda}r_2^{-7/4}dr_2 \\
&=\frac{256m}{27}(\kappa^{-3/4}-\Lambda^{-3/4})  
<\frac{256m}{27\kappa^{3/4}}.
\end{align*}
Since $S_2(\Lambda)$  is decreasing in $\Lambda$ and bounded below, it converges.
We have
\begin{align}
G(r_1, r_2, z)< \frac{4m^2}{r_1^7}+\frac{8m^3}{r_1^8}+\frac{16m^4}{r_1^9}.  \label{eq:Genequality}
\end{align}
From (\ref{I8g2}), (\ref{eq:sumofFinverse}), and (\ref{eq:Genequality}), we have
\begin{align*}
-g_2(r_1, r_2, z)&\le\frac{2}{m}\frac{z^2r_1^4r_2^4}{r_1r_2}\frac{4m}{r_2^2}(\frac{4m^2}{r_1^7}+\frac{8m^3}{r_1^8}+\frac{16m^4}{r_1^9} )  \\
&<8r_1^3r_2(\frac{4m^2}{r_1^7}+\frac{8m^3}{r_1^8}+\frac{16m^4}{r_1^9})  =\frac{32m^2r_2}{r_1^4}+\frac{64m^3r_2}{r_1^5}+\frac{128m^4 r_2}{r_1^6}.
\end{align*}
Hence 
\begin{align*}
-S_3(\Lambda) &<32m^2\int_{\kappa}^{\Lambda}dr_2r_2\int_{r_2}^{\Lambda}\frac{dr_1}{r_1^4}+64m^3\int_{\kappa}^{\Lambda}dr_2 r_2\int_{r_2}^{\Lambda}\frac{dr_1}{r_1^5}+128m^4\int_{\kappa}^{\Lambda}dr_2 r_2\int_{r_2}^{\Lambda}\frac{dr_1}{r_1^6}  \notag \\
&<\frac{32m^2}{3}\int_{\kappa}^{\Lambda}\frac{1}{r_2^2}dr_2+16m^3\int_{\kappa}^{\Lambda}\frac{1}{r_2^3}dr_2+\frac{128m^4}{5}\int_{\kappa}^{\Lambda}\frac{1}{r_2^4}dr_2 <\frac{32m^2}{3\kappa}+\frac{8m^3}{\kappa^2}+\frac{128m^4}{15\kappa^3}.
\end{align*}
Since $S_3(\Lambda)$  is decreasing in $\Lambda$ and bounded below, it converges. Since $S_i(\Lambda)  (i=1, 2, 3)$ converge, ${\rm I}_8(\Lambda)$ converges by  (\ref{eq:sumofSi}).
\qed

\section{Concluding remarks}
(1) 
The Nelson model is defined as the self-adjoint operator 
\begin{align}\label{nelson}
H_V=(-\frac{1}{2}\Delta +V)\otimes 1+1\otimes H_{\rm f}+\alpha \int_{{\mathbb R}^3}^\oplus \phi(x)dx,
\end{align}
acting in the Hilbert space 
$L^2({\mathbb R})\otimes {\mathcal F}\cong\int_{{\mathbb R}^3}^\oplus {\mathcal F} dx$. Here $V:{\mathbb R}^3\to{\mathbb R}$ is an external potential and 
$$\phi(x)=\frac{1}{\sqrt 2}\int \left\{
a^\dagger(k) e^{-ikx}\hat{\varphi}(k)/\sqrt{\omega(k)}+
a(k) e^{ikx}\hat{\varphi}(-k)/\sqrt{\omega(k)}
\right\} dk.$$
In the case of $V=0$, $H_{V=0}$ is translation invariant and 
the relationship between $H_V$ and $H(p)$ 
is given by 
$$H_{V=0}=\int_{{\mathbb R}^3}^\oplus H(p) dp.$$
Furthermore 
the ground state energy of $H(p=0)$ coincides with that of $H_{V=0}$.

(2)
We show that 
$m_{\rm eff}(\Lambda)/m=1+ \sum_{n=1}^\infty a_n(\Lambda)\alpha^{2n} $ and 
$\lim_{\Lambda\to\infty}a_2(\Lambda)=\pm\infty$.  
It is also expected that $\lim_{\Lambda\to\infty}a_n(\Lambda)$ diverges and 
the signatures are alternatively changed. Hence $\lim_{\Lambda\to\infty} m_{\rm eff}(\Lambda)/m$ may converge but it is not trivial to see it directly.

(3) The relativistic Nelson model is defined by replacing 
$-\frac{1}{2}\Delta +V$ with the semirelativistic Schr\"odinger operator 
$\sqrt{-\Delta+1}+V$ in \eqref{nelson}; that is,
$$H_V^{rel}=(\sqrt{-\Delta+1}+V)\otimes 1+1\otimes H_{\rm f}+\int_{{\mathbb R}^3}^\oplus \phi(x)dx.$$
Then it follows that 
$$H_{V=0}^{rel}=\int_{{\mathbb R}^3}^\oplus H^{rel}(p) dp,$$
where 
$H^{rel}(p)=\sqrt{(p-P_{\rm f})^2+1} + H_{\rm f}+\phi(0)$. 
Then the effective mass $m_{\rm eff}(\Lambda)$ 
of $H^{rel}(p)$ is defined in the same way as that of $H(p)$. 
We are also interested in seeing the asymptotic behavior of 
$m_{\rm eff}(\Lambda)$ as $\Lambda\to \infty$.
However 
$\sqrt{(p-P_{\rm f})^2+1}$ is a nonlocal operator and then estimates are rather complicated. 

Another interesting nonlocal model is the so-called semirelativistic Pauli-Fierz model defined by 
$$H_{V}^{PF}=\sqrt{\left(-i\nabla\otimes -\alpha \int_{{\mathbb R}^3} ^\oplus A(x) dx\right)^2+1}+V\otimes 1+1\otimes H_{\rm f},$$
where $A(x)$ is a quantized radiation field. See \cite{HiroshimaIto0} for the detail. 
Then it follows that 
$$H_{V=0}^{PF}=\int_{{\mathbb R}^3}^\oplus H^{PF}(p) dp,$$
where 
$H^{PF}(p)=\sqrt{\left(p-P_{\rm f}-\alpha A(0)\right)^2+1} + H_{\rm f}$. 
It is also interesting to investigate the asymptotic behavior of 
the effective mass of the semirelativistic Pauli-Fierz model.

\section*{Acknowledgment}
S.O. is grateful to Asao Arai for helpful comments and financial support.
{

}

\end{document}